\pdfoutput=1
\pdftrailerid{}
\documentclass[11pt,a4paper]{article}
\usepackage[T1]{fontenc}
\usepackage[utf8]{inputenc}
\usepackage[sc]{mathpazo}
\usepackage{courier}
\usepackage[scaled=0.88]{helvet}
\usepackage[a4paper,margin=2.6cm,headheight=14pt]{geometry}
\usepackage{microtype}
\usepackage{amsmath,amssymb,amsthm,mathtools}
\usepackage[dvipsnames]{xcolor}
\usepackage{booktabs,tabularx,array}
\usepackage{enumitem}
\usepackage{titlesec}
\usepackage{needspace}
\usepackage{fancyhdr}
\usepackage{caption}
\usepackage{listings}
\usepackage{yfonts}
\usepackage{tikz}
\usetikzlibrary{arrows.meta,positioning,calc,decorations.pathreplacing}
\usepackage{pgfplots}
\pgfplotsset{compat=1.16}
\usepackage[hyphens]{url}
\usepackage[colorlinks=true,linkcolor=zred,citecolor=zred,urlcolor=zred,
  pdftitle={Z-Sigil: A Public-Key Cryptosystem with Chained Selection over a Fibre Bundle of Module-Lattice Keys},
  pdfauthor={Andrea Rondelli},pdfsubject={},pdfkeywords={},pdfcreator={},pdfproducer={}]{hyperref}

\definecolor{zred}{RGB}{140,24,24}
\definecolor{zgold}{RGB}{212,160,23}
\definecolor{zblue}{RGB}{38,56,84}
\definecolor{zgrey}{RGB}{236,238,241}
\definecolor{zpink}{RGB}{242,224,226}

\titleformat{\section}{\Large\bfseries\color{zred}}{\thesection}{0.7em}{}[{\color{zred}\titlerule}]
\titleformat{\subsection}{\large\bfseries\color{zred!80!black}}{\thesubsection}{0.7em}{}
\titleformat{\subsubsection}{\normalsize\bfseries}{\thesubsubsection}{0.7em}{}
\titlespacing*{\section}{0pt}{2.2ex plus 1ex}{1.4ex}

\fancypagestyle{plain}{\fancyhf{}\fancyfoot[C]{\footnotesize\color{zred}\thepage}}

\newtheoremstyle{zthm}{1.2ex}{1.2ex}{\itshape}{}{\bfseries\color{zred}}{.}{0.5em}%
  {\thmname{#1}\thmnumber{ #2}\thmnote{ \mdseries(#3)}}
\newtheoremstyle{zdef}{1.2ex}{1.2ex}{}{}{\bfseries\color{zred}}{.}{0.5em}%
  {\thmname{#1}\thmnumber{ #2}\thmnote{ \mdseries(#3)}}
\theoremstyle{zthm}
\newtheorem{theorem}{Theorem}[section]
\newtheorem{proposition}[theorem]{Proposition}
\newtheorem{lemma}[theorem]{Lemma}
\newtheorem{corollary}[theorem]{Corollary}
\theoremstyle{zdef}
\newtheorem{definition}[theorem]{Definition}
\newtheorem{criterion}[theorem]{Criterion}
\newtheorem{remark}[theorem]{Remark}

\newenvironment{inwords}{\par\medskip\begingroup\small\leftskip=1.5em\rightskip=1.5em\noindent\textsc{In words.}\ }{\par\endgroup\medskip}
\newcommand{\zbox}[1]{\fcolorbox{zred}{white}{\ensuremath{\displaystyle #1}}}
\newcommand{\lab}[1]{\textsf{\small #1}}
\newcommand{\Rq}{R_q}
\newcommand{\ZZ}{\mathbb{Z}}
\newcommand{\RR}{\mathbb{R}}
\newcommand{\CC}{\mathbb{C}}
\newcommand{\E}{\mathbb{E}}
\newcommand{\EP}{\mathcal{E}_P}
\newcommand{\Ahat}{\widehat{A}}
\newcommand{\Tor}{\mathcal{T}}
\newcommand{\ip}[2]{\langle #1,#2\rangle}
\newcommand{\Enc}{\mathfrak{E}}
\newcommand{\Dec}{\mathfrak{D}}
\newcommand{\Kgen}{\mathfrak{K}}
\newcommand{\spec}{\texttt{Z-Sigil/spec-v2}}
\newcommand{\CBD}{\mathrm{CBD}}
\newcommand{\Decode}{\mathrm{Decode}}
\newcommand{\Hol}{\mathrm{Hol}}
\setlist{itemsep=0.3ex,topsep=0.6ex}
\begin{document}
\raggedbottom
\thispagestyle{plain}
\begin{center}
{\fontsize{30}{36}\selectfont\color{zred}Z-Sigil}\par\medskip
{\Large\bfseries A Public-Key Cryptosystem with Chained Selection\\ over a Fibre Bundle of Module-Lattice Keys\par}
\bigskip
{\large Andrea Rondelli}\par\smallskip
Z-Sigil SRLS, Emilia-Romagna, Italy\par\smallskip
{\small\href{https://www.zsigil.com}{\texttt{https://www.zsigil.com}} \ $\cdot$\ \href{mailto:andrearondelli85@gmail.com}{\texttt{andrearondelli85@gmail.com}}}\par\smallskip
29 September 2026
\end{center}
\vspace{1.5em}

\begin{center}\textbf{Abstract}\end{center}
\vspace{-0.5ex}
{\small\leftskip=1.2em\rightskip=1.2em
Z-Sigil is a public-key cryptosystem in which the plaintext chooses which member of a fixed family of module-lattice keys is used next. A message is represented as bytes, prefixed with its length, zero-padded and divided into 32-byte (256-bit) blocks. For each key index $t$, key generation samples a small secret vector $s_t$ and a small error vector $e_t$; the public vector is $b_t=As_t+e_t$, where $A$ is a shared public matrix and arithmetic takes place in a polynomial ring modulo a public integer~$q$.

Geometrically, the indices label a finite set of torsion points of a flat Kähler torus. The secret family assigns one vector to each point, forming a section of the key bundle; $A$ acts within each fibre. A fresh public random value (nonce) initialises a hash state. Each block uses the state to select its key and derive a bit mask. The sender then updates the state using the plaintext block; the receiver performs the same update after recovering that block. Thus the processed plaintext stream, together with the nonce, determines a discrete walk on the torus, and decryption reads the secret section along that walk.

We specify key generation, encryption and decryption, and prove correctness under an explicit noise bound. Under stated decisional Module-LWE assumptions, we prove confidentiality against chosen-plaintext attacks (IND-CPA) for the complete chain, allowing messages chosen after the public key without modelling the state hash as a random oracle. For the example with 16 keys and 64 transmitted blocks, we obtain a decoding-failure upper bound below $2^{-192}$. This is a reliability bound, not a security-level estimate. A restricted model with independent uniform selectors gives exact laws for the initial blocks recovered using an exposed subset of keys. Known-plaintext and candidate-message attacks show why these laws do not bound unrestricted adversaries. The scheme provides no authentication or chosen-ciphertext security.

The construction replaces an earlier geometric proposal that exposed a plaintext scalar along a public direction. The geometric analysis retains both smallness constraints, gives integral-transport obstructions and a sufficient noise budget. Two appendices quantify restricted fragment recovery and develop testable directions for curved, nontrivial bundles. The byte profile, pseudocode and known-answer vectors specify the construction, and implementation records are kept separate from the independent mathematical checks of this revision.

\smallskip\noindent\textit{Specification profile:} \spec{} (Appendix~A).\par}

\vfill
\begin{flushright}
\begin{minipage}{0.62\textwidth}\raggedleft\small
{\color{zred!78!black}\large\textfrak{\ldots\ gemma prezios:is:s:ima\ \ldots\\ \`E cinta da grandi e alte mura con dodici porte\ \ldots}}\\[0.6ex]
\textsc{Apocalisse} 21,11--12
\end{minipage}
\end{flushright}

\newpage
{\hypersetup{linkcolor=black}\small\setlength{\parskip}{0pt}\tableofcontents}
\newpage

\Needspace{7\baselineskip}
\section{Introduction}
A public-key cryptosystem ordinarily fixes one secret and uses it for every block of every message. Z-Sigil fixes a finite family of secrets and lets the plaintext decide the order in which its members are accessed. The family is generated before any message exists; neither the message nor the trajectory creates new keys. State is local to one message and is reinitialised from a public nonce, so sender and receiver need no persistent synchronised state across messages.

Three components have separate roles. Module-LWE supplies the noisy public relation. A fibre bundle over a finite set of torsion points of a flat torus organises the indexed secrets as one section. A plaintext-fed hash chain supplies the walk along which that section is read. Keeping these roles distinct makes it possible to state exactly which properties follow from the arithmetic, which from a chosen recovery procedure, and which remain research questions.

\subsection{One message, end to end}
Frame a byte message by prepending its eight-byte length, then append the minimum zero padding needed to form 32-byte blocks. Optionally prepend $\nu$ independent random 32-byte blocks. Write the complete stream as $m_0,\dots,m_{L-1}$, where $L=\nu+\ell$ and $\ell$ counts the framed-message blocks. The public header carries $L$, the parameters and a fresh nonce $iv$.

Set $x_{-1}=H(\lab{INIT},iv)$. Before processing block $i$, compute
\begin{equation}\label{eq:select}
t_i=H(\lab{FIBRE},x_{i-1}),\qquad \kappa_i=H(\lab{MASK},x_{i-1}),\qquad \mu_i=m_i\oplus\kappa_i.
\end{equation}
Encrypt $\mu_i$ under the public vector $b_{t_i}$ with the block map $\Enc_i$ (Section~\ref{sec:blockmaps}). Once $m_i$ is available, advance the state:
\begin{equation}\label{eq:chain}
x_i=H(\lab{CHAIN},i,x_{i-1},m_i).
\end{equation}
The sender already knows $m_i$; the receiver obtains it by decryption. Both therefore perform the same update, and
\[
(m_{i-1},x_{i-2})\longmapsto x_{i-1}\longmapsto(t_i,\kappa_i)\longmapsto m_i,\qquad (x_{i-1},m_i)\longmapsto x_i .
\]
Block $m_i$ affects $t_{i+1}$, not $t_i$: making $t_i$ depend on the unknown $m_i$ would create a circular decryption rule. The initial state, first selector and first mask are public.

Geometrically, $t_i$ names a point $\theta_{t_i}$ of the base of the key bundle, and the key used at block $i$ is the value of the secret section at that point. The message thus traces a walk $\theta_{t_0}\to\theta_{t_1}\to\cdots$ on the base (Figure~\ref{fig:bundle}); the selectors may revisit a point arbitrarily often, but every block uses fresh encryption randomness. The trajectory changes access order, not the key-generation distribution.

\subsection{Motivation and relation to earlier ideas}
Quantum superposition alone is not a speedup: useful quantum algorithms combine coherent evaluation with interference. Shor's algorithm and amplitude amplification motivate asking whether plaintext-dependent access can impose useful computational dependencies; Manin's exposition provides the conceptual background~\cite{manin-shor}. This is a motivation for studying the architecture, not an argument that it prevents quantum attacks.

Online ciphers process plaintext prefixes and include hash-based chaining constructions~\cite{online}. Key privacy asks whether ciphertexts reveal the recipient key~\cite{keypriv}, while KEM combiners combine encapsulation mechanisms under specified robustness conditions~\cite{combiners}. These are distinct definitions and do not in themselves specify Z-Sigil's coupling of a plaintext-fed state with a pre-generated lattice-key family. We study that coupling here. Establishing novelty relative to all stateful encryption constructions would require a broader prior-art analysis; the claims below are the explicit results of this manuscript.

\begin{figure}[t]
\centering
\begin{tikzpicture}[box/.style={draw=zred,rounded corners=2pt,align=center,font=\small,inner sep=4pt,minimum height=1.25cm,text width=2.75cm},>=Stealth]
\node[box] (a) {state $x_{i-1}$\\ from block $i-1$};
\node[box,right=0.4cm of a] (b) {select fibre $t_i$,\\ derive mask $\kappa_i$};
\node[box,right=0.4cm of b] (c) {decode with $s_{t_i}$,\\ recover $m_i$};
\node[box,right=0.4cm of c,text width=4.1cm] (d) {$x_i=H(\lab{CHAIN},i,x_{i-1},m_i)$\\ names $t_{i+1}$ and $\kappa_{i+1}$};
\draw[->,zred,thick] (a)--(b); \draw[->,zred,thick] (b)--(c); \draw[->,zred,thick] (c)--(d);
\draw[->,zred,thick,dashed] (a.south) -- ++(0,-0.45) -| (d.south);
\end{tikzpicture}
\caption{The one-block delay. The update consumes the old state and the newly recovered plaintext. The arrows describe the prescribed evaluation procedure; they do not establish a lower bound for other algorithms (Section~\ref{sec:serial}).}
\label{fig:delay}
\end{figure}

The present scheme is the successor of an earlier geometry-motivated proposal~\cite{rondelli1}. That proposal failed because it exposed a plaintext scalar along a public direction, and Section~\ref{sec:earlier} states the failure and how the present design removes each of its causes.

\subsection{Contributions and claim boundaries}
\begin{enumerate}
\item A geometric setting in which the private key is one section of a discrete fibre bundle of modules over torsion points of a flat Kähler torus, the public key is its noisy image under a fibrewise endomorphism, and a message is a walk on the base (Section~\ref{sec:geom}).
\item Separate, complete specifications of key generation (Section~\ref{sec:keygen}) and of encryption and decryption (Section~\ref{sec:encdec}), with explicit block maps and a chained-inversion proposition.
\item A key-family-conditioned decoding-failure bound (Section~\ref{sec:correct}).
\item A standard-model conditional IND-CPA reduction for the full chain, allowing messages chosen after the public key, with two routes for the per-block step (Section~\ref{sec:security}). This is not a claim of key-dependent-message security.
\item Exact prefix, padding and multiple-stream laws in a restricted recovery model, with explicit counterexamples to stronger exposure interpretations (Section~\ref{sec:exposure}).
\item An augmented-lattice interpretation, a parallel candidate-table algorithm, explicit costs, an integrality obstruction for norm-preserving transport and a quantitative noise budget (Sections~\ref{sec:lattice}, \ref{sec:serial}--\ref{sec:transport}).
\item A byte profile, known-answer vectors, retained prototype test records and independently recomputed numerical figures (Section~\ref{sec:repro}; Appendices~\ref{app:profile}--\ref{app:checklist}).
\end{enumerate}
The scheme has no authentication or chosen-ciphertext security. Its parameters have not been assigned a concrete classical or quantum security level. A small upper bound on decoding failure is not a lower bound on attack cost. The flat geometry supplies a representation, not an additional hardness assumption.

\subsection{Notation and reading guide}
The scheme consists of three algorithms, written in Gothic letters: key generation $\Kgen$, encryption $\Enc$ and decryption $\Dec$,
\[
(pk,sk)\leftarrow\Kgen(T),\qquad C\leftarrow\Enc_{pk}(msg),\qquad \widehat{msg}=\Dec_{sk}(C).
\]
At block level they act through the block maps $\Enc_i$ and $\Dec_i$ of Section~\ref{sec:blockmaps}: $c_i=\Enc_i(m_i)$ and $\hat m_i=\Dec_i(c_i)$. Table~\ref{tab:notation} collects the symbols used throughout. Only $A$ is capitalised among the key components. Norms refer to specified integer coefficient representatives, not abstract residues.

\begin{table}[htbp]
\centering\small
\caption{Notation.}\label{tab:notation}
\begin{tabularx}{\textwidth}{@{}>{\raggedright\arraybackslash}p{4.1cm}X@{}}
\toprule
Symbol & Meaning\\
\midrule
\multicolumn{2}{@{}l}{\emph{Algorithms}}\\
$\Kgen$, $\Enc$, $\Dec$ & key generation, encryption, decryption\\
$\Enc_i$, $\Dec_i$ & block maps at position $i$ (Section~\ref{sec:blockmaps})\\
\midrule
\multicolumn{2}{@{}l}{\emph{Keys}}\\
$pk=(A,b)$ & public key: shared matrix $A\in\Rq^{k\times k}$ and public vectors $b=(b_t)_{t<T}$\\
$sk=s=(s_t)_{t<T}$ & private key: the small secret vectors\\
$e=(e_t)_{t<T}$ & key-generation errors, with $b_t=As_t+e_t$\\
$\sigma$, $\epsilon$, $\beta$ & secret, error and public sections of the key bundle, $\beta=\Ahat\circ\sigma+\epsilon$\\
$\Ahat$ & fibrewise endomorphism defined by $A$\\
\midrule
\multicolumn{2}{@{}l}{\emph{Geometry}}\\
$\Tor$, $\Tor[q]$ & flat Kähler torus and its $q$-torsion points\\
$P=\{\theta_t\}$, $\EP$ & finite base and key bundle $\EP=P\times\Rq^k$\\
$\gamma$ & walk of a message on the base, $\gamma(i)=\theta_{t_i}$\\
\midrule
\multicolumn{2}{@{}l}{\emph{Messages and ciphertexts}}\\
$m_i$, $\kappa_i$, $\mu_i$ & plaintext block, mask and masked block, $\mu_i=m_i\oplus\kappa_i$\\
$x_i$, $t_i$, $iv$ & chain state, selected key index, public nonce\\
$r,f\in\Rq^k$, $g\in\Rq$ & fresh encryption coins and errors of one block\\
$c_i=(u_i,v_i)$ & ciphertext block\\
$C=(\mathrm{hdr},c_0,\dots,c_{L-1})$ & ciphertext; the header $\mathrm{hdr}$ carries parameters, $L$ and $iv$\\
$L$, $\ell$, $\nu$ & transmitted blocks, framed-message blocks, leading random blocks\\
\midrule
\multicolumn{2}{@{}l}{\emph{Parameters and analysis}}\\
$n,k,q,\eta,T$ & ring dimension, module rank, modulus, CBD parameter, number of keys\\
$\Delta=\lfloor q/2\rfloor$, $\delta$ & bit representative, unreduced decoding error\\
$S_t=\|s_t\|_2^2+\|e_t\|_2^2$ & squared length of key $t$\\
$\varepsilon_1,\varepsilon_2,\varepsilon_3$ & Module-LWE advantages (Definition~\ref{def:mlwe}); distinct from the error section $\epsilon$\\
$K$, $p=|K|/T$, $W$ & exposed key set, exposed fraction, number of observed ciphertexts\\
\bottomrule
\end{tabularx}
\end{table}

The shortest technical reading path is key generation, encryption and decryption, the correctness proof, the security experiment and the byte profile. Readers interested in geometry can read Sections~\ref{sec:geom}, \ref{sec:lattice} and~\ref{sec:transport}; readers interested in the seriality proposal should read Sections~\ref{sec:exposure} and~\ref{sec:serial} together. Short paragraphs marked \textsc{In words} accompany the main results; they are glosses and do not replace the hypotheses.

\Needspace{7\baselineskip}
\section{From the earlier construction to the present one}\label{sec:earlier}
The first proposal~\cite{rondelli1} placed the keys in the tangent bundle of a compact Calabi--Yau manifold and encrypted block $i$ as
\begin{equation}\label{eq:old}
c_i=m_i\,N_{i-1}\,e_i,
\end{equation}
where $m_i$ is the plaintext block read as an integer, $e_i\in T_{p_i}\mathcal{M}$ is a public tangent vector and $N_{i-1}$ is a chaining coefficient. A point-dependent groupoid operation $\circledast_p$ entered only decryption. We analysed this construction and found it unusable; the reasons are short and they shaped every choice below.

\begin{proposition}[Collinearity]\label{prop:collinear}
In \eqref{eq:old} the ciphertext is a scalar multiple of the public vector. For every coordinate $j$ with $(e_i)_j\neq0$,
\[
m_iN_{i-1}=(c_i)_j/(e_i)_j,
\]
computable from public data alone, without the point $p_i$, the local basis, the section or $\circledast_{p_i}$.
\end{proposition}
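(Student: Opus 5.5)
The plan is to read the claim directly off the defining equation \eqref{eq:old} and then make sure no hidden dependence on private data is needed. First, the scalar-multiple assertion. By construction $m_i$ is an integer, and $N_{i-1}$ is a chaining coefficient, which is a scalar in whatever coefficient field the tangent space $T_{p_i}\mathcal{M}$ is defined over. So the product $\lambda_i:=m_iN_{i-1}$ is a single scalar, and \eqref{eq:old} reads $c_i=\lambda_i e_i$. That is the first sentence of the statement, and nothing beyond the formula is used.

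Second, I extract the scalar coordinatewise. I write $c_i$ and $e_i$ in the coordinates in which they are transmitted and published; this is the representation the ciphertext is given in, not the secret local basis. Then $c_i=\lambda_ie_i$ gives $(c_i)_j=\lambda_i(e_i)_j$ for every $j$. For any $j$ with $(e_i)_j\neq0$, dividing yields $\lambda_i=(c_i)_j/(e_i)_j$. At least one such $j$ exists whenever $e_i\neq0$; if $e_i=0$, then $c_i=0$ carries no information and the statement is vacuous. Different admissible $j$ give the same value, so the answer does not depend on the choice of $j$.

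Third, I check the ``public data alone'' clause, which I expect to be the only step needing care. The inputs to the formula are $c_i$, which is the ciphertext, and $e_i$, which is public by the description following \eqref{eq:old}. Neither the point $p_i$, the local basis, the section, nor the operation $\circledast_{p_i}$ appears in it. The groupoid operation enters only decryption, so it plays no role in encryption and cannot obstruct this computation.

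The subtle point is whether $e_i$ is published in the same coordinates as $c_i$, so that the ratio is meaningful without knowing the local basis. I would handle this as follows: collinearity is a basis-independent statement. If both vectors are expressed in any common basis, the ratio of corresponding nonzero coordinates equals $\lambda_i$, because a linear change of basis maps $c_i=\lambda_ie_i$ to $Mc_i=\lambda_iMe_i$. So any common public representation suffices.

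Finally, I would remark on what this leaks. Since $N_{i-1}$ is a chaining coefficient, knowing $\lambda_i$ for successive $i$ exposes the product $m_iN_{i-1}$ directly. This identifies the plaintext up to the chaining factor, and it does so entirely from public data. That exposure is the failure the present design removes.
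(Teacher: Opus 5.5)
Your proposal is correct and follows the paper's proof. You read the claim directly off \eqref{eq:old}, and you handle the one subtlety, the basis in which the ratio is taken, exactly as the paper does, by noting that $c=\lambda e$ survives any change of basis with the same $\lambda$. The paper also justifies that a common public representation exists: the legitimate receiver needs $c_i$ and $e_i$ in a common basis anyway, which is what your phrase ``the coordinates in which they are transmitted and published'' implicitly relies on.
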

\begin{proof}
Immediate from \eqref{eq:old}. The statement is invariant under change of basis: if $c=\lambda e$ in one basis then $c'=\lambda e'$ in every other, with the same $\lambda$, and the legitimate receiver needs $c_i$ and $e_i$ in a common basis anyway.
\end{proof}

For the first block $N_{-1}=1$, a nonzero public coordinate reveals $m_0$ directly. For integer coordinates, $\gcd_j|(c_0)_j|=|m_0|\gcd_j|(e_0)_j|$; the gcd alone equals $|m_0|$ only for a primitive public vector. Moving the scalar inside a bilinear operation leaves the following exposure mechanism unchanged.

\begin{criterion}[Public scalar exposure]\label{crit:scalar}
If $c=mw$ with a known nonzero vector $w$ over a field, a public coordinate ratio reveals $m$. Bilinearity gives $e\circledast_p(mu)=m(e\circledast_p u)$, so the criterion applies when that direction is publicly computable. Linearity alone, with hidden randomness or noise, is not a general insecurity criterion.
\end{criterion}

Two further observations on the prototype of~\cite{rondelli1} complete the picture. First, in its diagonal realisation the point-dependent coefficients cancel in decryption: $x_aq_a(p)\,(e_aq_a(p))^{-1}=x_a/e_a$. The manifold point was computationally inert. Second, the chaining coefficient depended on the seed, nonce and block index, not on the message; the sequence $(N_i)$ could be precomputed in parallel. Here the plaintext-fed recurrence makes later inputs depend on recovered data in the prescribed loop. This is an algorithmic dependency, not a universal sequential-work lower bound. Known plaintext can supply an update without decrypting its block; other adversarial procedures are analysed in Sections~\ref{sec:exposure}--\ref{sec:serial} and Appendix~\ref{app:seriality}.

Table~\ref{tab:earlier} summarises how the design follows from this analysis. Message bits enter additively and are masked, so Criterion~\ref{crit:scalar} has nothing to act on: the public relation stays noisy and the plaintext never multiplies a public object. The block equations are those of the Lindner--Peikert/Kyber public-key encryption pattern~\cite{lp,kyber-spec,kyber-eurosp}, without ciphertext compression and with all sampling distributions specified. This makes the cancellation and noise calculations transparent; it does not make the result a conforming ML-KEM implementation, which is a standardised KEM with its own construction and analysis~\cite{fips203}. The default rank, ring dimension and modulus resemble ML-KEM-768, but a shared matrix, a family of public samples, uncompressed bulk encryption and plaintext-directed access require their own analysis, and no security estimate is inherited by matching parameters. The object of study is the composition architecture, not a new lattice primitive.

\begin{table}[t]
\centering\small
\caption{What failed in the earlier construction, and what replaces it here.}\label{tab:earlier}
\begin{tabularx}{\textwidth}{@{}XX@{}}
\toprule
Earlier construction~\cite{rondelli1} & Present construction\\
\midrule
plaintext as a scalar factor, $c_i=m_iN_{i-1}e_i$ & plaintext selects one of two cosets, $\Delta\mu$, added to a noisy inner product \eqref{eq:blockmap}\\
security from an ad hoc assumption, refuted by Proposition~\ref{prop:collinear} & two explicit decisional Module-LWE assumptions (Definition~\ref{def:mlwe})\\
manifold point cancels in decryption & base point decides which fibre, hence which secret, is read (Section~\ref{sec:geom})\\
chain independent of the message & chain fed by the recovered plaintext, \eqref{eq:chain}\\
invertibility of ring elements required & no invertibility used in encryption, decryption or correctness (Theorem~\ref{thm:cancel})\\
\bottomrule
\end{tabularx}
\end{table}

\Needspace{7\baselineskip}
\section{Geometric setting: a fibre bundle of keys}\label{sec:geom}
\subsection{A flat Kähler torus and its finite arithmetic}
Let $n=2d$ and let $\Lambda_0=\sum_{j=1}^n\ZZ c_j\subset\RR^n$ be a full-rank lattice. Choose $\RR^n\cong\CC^d$ and set $\Tor=\RR^n/\Lambda_0$. Translations preserve the Euclidean Hermitian metric and the forms
\[
\omega=\frac{i}{2}\sum_{j=1}^d dz_j\wedge d\bar z_j,\qquad \Omega=dz_1\wedge\cdots\wedge dz_d .
\]
They descend to the quotient. Constant coefficients give $d\omega=0$, the metric is flat and hence Ricci-flat, and $\Omega$ trivialises the canonical bundle. We use the term \emph{flat Kähler torus}, avoiding competing Calabi--Yau conventions (the torus has trivial holonomy, not $SU(d)$).

For an odd prime $q$ and a power of two $n\ge2$, put
\begin{equation}\label{eq:ring}
\Rq=\ZZ_q[X]/(X^n+1).
\end{equation}
A basis-dependent additive isomorphism onto the $q$-torsion of the torus is
\begin{equation}\label{eq:iota}
\iota:\ \sum_{j=0}^{n-1}a_jX^j\longmapsto\frac1q\sum_{j=0}^{n-1}a_jc_{j+1}+\Lambda_0\in\Tor[q]=q^{-1}\Lambda_0/\Lambda_0 .
\end{equation}
Changing a coefficient representative by a multiple of $q$ does not change the coset, so $\iota$ is well defined; a point of $\Tor[q]$ is literally a coset of $\Lambda_0$. Ring multiplication is additional structure supplied by \eqref{eq:ring}; it is not induced by the flat metric. Manin's work on theta functions and quantum tori, where finite level structures and Heisenberg groups carry the arithmetic of a torus, provides related context~\cite{manin-theta}, not a security reduction for this construction.

\subsection{The key bundle, its sections and the walk}
\begin{definition}[Key bundle]\label{def:bundle}
Choose distinct points $P=\{\theta_t:t\in[T]\}\subset\Tor[q]$ with $1\le T\le q^n$ and a module rank $k\ge1$. The \emph{key bundle} is
\[
\pi:\ \EP=P\times\Rq^k\longrightarrow P,\qquad(\theta,v)\mapsto\theta,
\]
a fibre bundle of free $\Rq$-modules of rank $k$ over the finite base $P$, with fibre $\EP|_{\theta_t}=\{\theta_t\}\times\Rq^k$.

Three sections of $\EP$ carry the key material:
\[
\sigma(\theta_t)=(\theta_t,s_t),\qquad \epsilon(\theta_t)=(\theta_t,e_t),\qquad \beta(\theta_t)=(\theta_t,b_t).
\]
The \emph{secret section} $\sigma$ is the private key; $\epsilon$ is the key-generation error, and $\beta$ is the \emph{public section}. The shared matrix defines a fibrewise endomorphism
\[
\Ahat:\ \EP\to\EP,\qquad(\theta,v)\mapsto(\theta,Av),
\]
which covers the identity of $P$: it acts inside every fibre and never moves the base point. In this language the whole public key is one identity between sections,
\begin{equation}\label{eq:sections}
\zbox{\beta=\Ahat\circ\sigma+\epsilon,}
\end{equation}
read fibre by fibre as $b_t=As_t+e_t$. The public section is a noisy image of the secret section; recovering $\sigma$ from $\beta$ is small-secret Module-LWE in every fibre (Section~\ref{sec:lattice}).
\end{definition}

\begin{definition}[Walk and pulled-back key]\label{def:walk}
A nonce and a stream $m_0,\dots,m_{L-1}$ determine, through \eqref{eq:select}--\eqref{eq:chain}, a map
\[
\gamma:\{0,\dots,L-1\}\to P,\qquad\gamma(i)=\theta_{t_i},
\]
the \emph{walk} of the message. The pulled-back bundle has fibre $\EP|_{\gamma(i)}$ at $i$, and its induced section is $i\mapsto(i,\sigma(\gamma(i)))$. In the product coordinates, its secret-vector component is $s_{t_i}$, the key used at block $i$.
\end{definition}

The division of labour is now visible. The section $\sigma$ is sampled once and never changes; the plaintext decides the walk $\gamma$ and nothing else. Selection moves the accessed base point from $\theta_{t_i}$ to $\theta_{t_{i+1}}$ without transporting a fibre vector, while $\Ahat$ acts inside fibres. Since $\gamma(i+1)$ depends on $m_i$ through \eqref{eq:chain}, the prescribed receiver computes the next point after recovering the current block. Known or guessed plaintext can also supply this update; the recurrence does not establish absolute hiding of the walk or a lower bound for every algorithm.

\begin{remark}[What the geometry is, and is not, here]
Over a finite discrete base every bundle is a product, and $\EP$ is written as one; no continuous extension of $\sigma$, no tangent-bundle interpretation and no connection are assumed. Relabelling $P$ leaves the cryptographic algorithm unchanged, and the formal variable $X$ is not a coordinate or point of $\Tor$. What is not trivial is the walk: it is chosen by data the receiver learns one block at a time. Section~\ref{sec:transport} states what additional structure, transport along the edges of the base, would turn this organisation into genuinely path-dependent geometry, and what it would have to satisfy. The augmented lattice of Section~\ref{sec:lattice} has a different dimension and purpose from $\Lambda_0$.
\end{remark}

\begin{figure}[t]
\centering
\begin{tikzpicture}[x=1cm,y=1cm,>=Stealth]
\fill[zgrey] (0,0) -- (6,0) -- (7.6,1.8) -- (1.6,1.8) -- cycle;
\draw[black!45] (0,0) -- (6,0) -- (7.6,1.8) -- (1.6,1.8) -- cycle;
\coordinate (p0) at (1.2,0.5); \coordinate (p1) at (2.8,1.3); \coordinate (p2) at (4.3,0.45);
\coordinate (p3) at (5.8,1.2); \coordinate (p4) at (2.3,1.62);
\foreach \p/\h in {p0/2.2,p1/1.6,p2/2.0,p3/1.9,p4/2.0}{\draw[black!55] (\p) -- ++(0,\h);}
\foreach \p/\y in {p0/1.4,p1/1.0,p2/1.7,p3/1.25,p4/1.1}{\fill[zred] ($(\p)+(0,\y)$) circle (2.2pt);}
\foreach \p in {p0,p1,p2,p3,p4}{\fill[black] (\p) circle (1.8pt);}
\draw[->,zblue,thick] (p0) to[bend right=12] (p1);
\draw[->,zblue,thick] (p1) to[bend right=12] (p2);
\draw[->,zblue,thick] (p2) to[bend right=12] (p3);
\node[below left,font=\small] at (p0) {$\theta_{t_0}$};
\node[below,font=\small] at ($(p1)+(0.1,-0.05)$) {$\theta_{t_1}$};
\node[below,font=\small] at (p2) {$\theta_{t_2}$};
\node[below right,font=\small] at (p3) {$\theta_{t_3}$};
\node[font=\small,anchor=west,align=left] at ($(p1)+(0.22,1.62)$) {$s_{t_1}$\\ (section value)};\draw[black!40] ($(p1)+(0.05,1.08)$) -- ($(p1)+(0.24,1.36)$);
\node[font=\small,anchor=west] at ($(p3)+(0.15,1.25)$) {$s_{t_3}$};
\node[font=\small,align=center] at (0.2,3.1) {fibre\\ $\{\theta_{t_0}\}\times R_q^k$};
\draw[black!55] (0.45,2.75) -- ($(p0)+(0,2.0)$);
\node[font=\small,anchor=west] at (7.7,0.5) {base $P\subset\Tor[q]$};
\node[font=\small,zblue] at (3.6,-0.5) {walk $\gamma$, chosen by the plaintext};
\draw[->] (8.5,1.8) -- (8.5,1.0) node[midway,right,font=\small] {$\pi$};
\end{tikzpicture}
\caption{The key bundle over a finite set $P$ of $q$-torsion points of the flat torus (shaded fundamental domain, opposite sides identified). Above each point of $P$ stands a fibre $\{\theta_t\}\times\Rq^k$; the red dots are the values of the secret section $\sigma$. The plaintext-fed chain draws a walk $\gamma$ on the base, and block $i$ is decrypted with $\sigma(\gamma(i))$. The section is defined at the points of $P$ and nowhere else; the fibre at the upper left is not visited by this message.}
\label{fig:bundle}
\end{figure}

\subsection{Negacyclic arithmetic}
For $a,b\in\Rq$, coefficients multiply according to
\begin{equation}\label{eq:negacyclic}
(ab)_c=\sum_{i+j=c}a_ib_j-\sum_{i+j=c+n}a_ib_j\pmod q,\qquad0\le c<n,
\end{equation}
where $0\le i,j<n$. The minus sign is $X^n=-1$. For module vectors, $\ip ab=\sum_{j=1}^k a_jb_j$ is $\Rq$-bilinear, with no complex conjugation. The ring is commutative and generally not a field. In particular $\ip s{A^\top r}=\ip{As}r$ without requiring $A$ to be invertible.

\begin{table}[t]
\centering\small
\caption{The objects used in one block. All fresh samples are independent of the long-term family.}
\begin{tabular}{@{}lll@{}}
\toprule
Object & Space & Role\\
\midrule
$A$ & $\Rq^{k\times k}$ & shared public matrix; fibrewise endomorphism $\Ahat$\\
$b_t$ & $\Rq^k$ & public section at $\theta_t$\\
$s_t,e_t$ & $\Rq^k$ & secret section and key-generation error at $\theta_t$\\
$r,f$ & $\Rq^k$ & fresh coins and encryption error\\
$g$ & $\Rq$ & fresh encryption error\\
$(u,v)$ & $\Rq^k\times\Rq$ & ciphertext pair\\
$m_i,\mu_i,\kappa_i$ & $\{0,1\}^n$ & plaintext, masked block and mask\\
\bottomrule
\end{tabular}
\end{table}

\Needspace{7\baselineskip}
\section{Key generation}\label{sec:keygen}
\subsection{Parameters and noise}
The numerical profile fixes
\[
n=256,\qquad k=3,\qquad q=3329,\qquad\eta=2,
\]
with $T=16$ and $\nu=0$ by default. A $\CBD_\eta$ coefficient is the difference of two independent sums of $\eta$ unbiased bits. Its mean is zero, its variance $\eta/2$ and its support $[-\eta,\eta]$. All coefficients of each fresh small polynomial or vector are sampled independently, then reduced modulo $q$. The symbolic formulas apply to other valid choices, subject to a fresh correctness and security analysis.

$H$ denotes a typed, domain-separated public interface: \lab{INIT} and \lab{CHAIN} output 256-bit states, \lab{MASK} outputs $n$ bits and \lab{FIBRE} outputs an index in $[T]$. These are separate domains, not a reused digest; Appendix~\ref{app:profile} fixes the concrete instantiation. The hash is deterministic, and exact uniformity of selectors is an assumption used only in the ideal exposure calculations of Section~\ref{sec:exposure}, not a proved property of every real trajectory.

\subsection{The key-generation algorithm}
\begin{definition}[Key generation $\Kgen$]\label{def:keygen}
The algorithm $(pk,sk)\leftarrow\Kgen(T)$ is as follows. Sample $A\leftarrow\Rq^{k\times k}$ uniformly. Independently for each $t\in[T]$, sample small $s_t,e_t\in\Rq^k$ and set
\begin{equation}\label{eq:keygen}
b_t=As_t+e_t,\qquad pk=(A,b),\qquad sk=s.
\end{equation}
\end{definition}

In the language of Section~3.2, key generation samples a random small section $\sigma$ of $\EP$, an independent small error section $\epsilon$, a uniform fibrewise endomorphism $\Ahat$, and publishes $\Ahat$ together with the public section $\beta=\Ahat\circ\sigma+\epsilon$ of \eqref{eq:sections}. The values of $\sigma$ at different base points are independent, so the family consists of $T$ Module-LWE samples that are independent conditional on the shared matrix $A$; Section~\ref{sec:storage} and item~4 of Section~\ref{sec:roadmap} discuss the alternative of an independent $A_t$ per fibre.

The errors may be discarded after setup: knowledge of $s_t$ permits their reconstruction modulo $q$ as $b_t-As_t$, and they are never fresh per-message coins. The error is nonetheless essential to the hiding relation and cannot be dropped: without it, $b_t=As_t$ is a noiseless linear image and, for invertible $A$, reveals $s_t=A^{-1}b_t$, another elementary linear-algebraic exposure.

\par\medskip\noindent\begin{minipage}{\linewidth}
\begin{lstlisting}[caption={Key generation $\mathfrak{K}$; Small means fresh independent $\mathrm{CBD}_\eta$ samples.},escapeinside={``}]
KeyGen(T):
  A <- Uniform(Rq^(k x k))           # rejection-sampled, Appendix A
  for t = 0,...,T-1:                  # independently across t
      s[t], e[t] <- Small(Rq^k), Small(Rq^k)
      b[t] <- A*s[t] + e[t]
  return pk = (A, b), sk = s          # the e[t] are discarded
\end{lstlisting}
\end{minipage}\par\medskip

At the default parameters a polynomial occupies 384 bytes, so the public key $(A,b_0,\dots,b_{15})$ is 21\,888 bytes and the private key $(s_0,\dots,s_{15})$ is 18\,432 bytes in the residue format of Appendix~\ref{app:profile}; Section~\ref{sec:storage} gives the general formulas.

\Needspace{7\baselineskip}
\section{Encryption and decryption}\label{sec:encdec}
\subsection{Encoding and decoding}
A bit polynomial $\mu=\sum_{j<n}\mu_jX^j$ with $\mu_j\in\{0,1\}$ is encoded as $\Delta\mu$, where $\Delta=\lfloor q/2\rfloor$. A residue is decoded by taking its centred representative in $(-q/2,q/2]$ and outputting one exactly when its absolute value exceeds $q/4$. Geometrically each coefficient lives on the circle $\ZZ_q$ of residues; the two bit representatives $0$ and $\Delta$ sit almost half a circle apart, and decoding chooses the nearer one (Figure~\ref{fig:decode}). Message bits enter additively through $\Delta\mu$: they select two coefficient representatives rather than scale a public key vector.

\subsection{The block maps}\label{sec:blockmaps}
For each block, derive $t_i,\kappa_i,\mu_i$ by \eqref{eq:select}. Suppressing $i$ and putting $t=t_i$, sample fresh independent small $r,f\in\Rq^k$ and $g\in\Rq$, independently of one another and of key generation, and output
\begin{equation}\label{eq:blockmap}
\zbox{u=A^\top r+f,\qquad v=\ip{b_t}r+g+\Delta\mu.}
\end{equation}
The randomness $r,f,g$ is neither transmitted nor recovered, and it must be fresh even when the same index is revisited. Here $f,g$ are temporary encryption errors, while $e_t$ always means the key-generation error. The ciphertext is the header followed by the pairs $(u_i,v_i)$. At the receiver, compute
\begin{equation}\label{eq:decrypt}
\zbox{w=v-\ip{s_t}u,\qquad\hat\mu=\Decode(w),\qquad\hat m_i=\hat\mu\oplus\kappa_i,}
\end{equation}
then use $\hat m_i$ in \eqref{eq:chain}. Here $w$ is a noisy encoding, not the final plaintext. The two operations are the state-dependent, randomised block maps
\[
\zbox{\begin{aligned}
\Enc_i(m_i;r,f,g)&=\bigl(A^\top r+f,\ \ip{b_{t_i}}r+g+\Delta(m_i\oplus\kappa_i)\bigr),\\[4pt]
\Dec_i(u,v)&=\Decode\bigl(v-\ip{s_{t_i}}u\bigr)\oplus\kappa_i.
\end{aligned}}
\]
where $t_i$ and $\kappa_i$ are computed from the running state $x_{i-1}$. Thus the ciphertext block is $c_i=(u_i,v_i)=\Enc_i(m_i;r,f,g)$ and the receiver recovers $\hat m_i=\Dec_i(c_i)$. The Gothic letters $\Enc_i,\Dec_i$ denote block maps at position $i$, never the bundle $\EP$. Composed along the chain, they define the encryption and decryption algorithms of the whole message,
\[
\Enc_{pk}(msg)=C=(\mathrm{hdr},c_0,\dots,c_{L-1}),\qquad\Dec_{sk}(C)=\widehat{msg},
\]
specified in Listings~\ref{lst:enc} and~\ref{lst:dec}. Correct recovery, $\widehat{msg}=msg$, holds under the noise condition of Proposition~\ref{prop:chaininv}; malformed inputs may be rejected.

\begin{proposition}[Block inversion]\label{prop:blockinv}
Fix a state $x_{i-1}$ and a block $m_i$. If every coefficient of the unreduced error $\delta=\ip{e_{t_i}}r+g-\ip{s_{t_i}}f$ satisfies $|\delta_c|<q/4-1/2$, then $\Dec_i\bigl(\Enc_i(m_i;r,f,g)\bigr)=m_i$.
\end{proposition}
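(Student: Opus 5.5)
The plan is a direct algebraic computation followed by a coefficientwise decoding check. I would expand $w=v-\ip{s_t}u$ using \eqref{eq:blockmap} and \eqref{eq:keygen}, with $t=t_i$:
\[
w=\ip{As_t+e_t}{r}+g+\Delta\mu-\ip{s_t}{A^\top r+f}.
\]
Bilinearity and commutativity of $\Rq$ give $\ip{s_t}{A^\top r}=\ip{As_t}{r}$, as noted after \eqref{eq:negacyclic}, and no invertibility of $A$ is needed. These two terms cancel, leaving
\[
w=\Delta\mu+\ip{e_t}r+g-\ip{s_t}f=\Delta\mu+\delta \pmod q.
\]
Here $\delta$ is understood as the integer polynomial obtained by evaluating these products on the small centred representatives via \eqref{eq:negacyclic} without reduction. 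The congruence holds because reduction modulo $q$ is a ring homomorphism.

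Next I would verify $\Decode(w)=\mu$ coefficient by coefficient, which is the only real content of the proof. Fix $c$ and let $w_c\equiv\Delta\mu_c+\delta_c$.

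\textbf{Case $\mu_c=0$.} Then $w_c\equiv\delta_c$. Since $|\delta_c|<q/4-1/2<q/2$, the centred representative is $\delta_c$ itself, and $|\delta_c|<q/4$. Decoding therefore outputs $0$.

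\textbf{Case $\mu_c=1$.} Then $w_c\equiv\Delta+\delta_c$. Since $q$ is odd, $\Delta=(q-1)/2$. If $\delta_c\le 0$, the value $\Delta+\delta_c$ lies in $\bigl(q/4-1/2+\dots,\ \Delta\bigr]$, so it is already centred and its absolute value is $\Delta+\delta_c>(q-1)/2-q/4+1/2=q/4$. If $\delta_c>0$, then $\Delta+\delta_c>q/2$ is possible, and the centred representative is $\Delta+\delta_c-q=-(q+1)/2+\delta_c$. Its absolute value is $(q+1)/2-\delta_c>(q+1)/2-q/4+1/2=q/4+1>q/4$. In both subcases decoding outputs $1$.

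The slack of $1/2$ in the hypothesis exactly absorbs the offset $\Delta=q/2-1/2$ in the first subcase. Checking these strict inequalities at the boundary is the step I expect needs the most care.

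Finally, $\Decode(w)=\mu=m_i\oplus\kappa_i$. Since the receiver derives the same $\kappa_i$ from $x_{i-1}$, XORing with $\kappa_i$ returns $m_i$. This gives $\Dec_i(\Enc_i(m_i;r,f,g))=m_i$.
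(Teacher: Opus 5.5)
Your proof is correct and is essentially the paper's argument: the paper invokes Theorem~\ref{thm:cancel}, whose proof is exactly your bilinearity cancellation $\ip{s_t}{A^\top r}=\ip{As_t}{r}$ followed by a coefficientwise check placing $\Delta+\delta_c$ strictly inside $(q/4,3q/4-1)$ for a one bit, and then concludes with the same XOR step. Only cosmetic fixes are needed: the interval in your $\delta_c\le0$ subcase should read $(q/4,\Delta]$, and because $\delta_c$ is an integer, for $\delta_c>0$ the value $\Delta+\delta_c\ge(q+1)/2$ always wraps past $q/2$, not merely possibly.
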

\begin{proof}
By Theorem~\ref{thm:cancel}, $v-\ip{s_{t_i}}u=\Delta\mu_i+\delta$ and the margin makes $\Decode$ return $\mu_i=m_i\oplus\kappa_i$. Since both maps use the same $\kappa_i$, the XOR returns $m_i$.
\end{proof}

\begin{proposition}[Chained inversion]\label{prop:chaininv}
Suppose every block of a stream satisfies the margin of Proposition~\ref{prop:blockinv}. Then at every position the receiver's state, selector and mask equal the sender's, decryption returns $m_0,\dots,m_{L-1}$, and $\Dec_{sk}\bigl(\Enc_{pk}(msg)\bigr)=msg$.
\end{proposition}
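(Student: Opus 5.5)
The plan is an induction on the block position $i$. The invariant $I(i)$ is that the receiver's state before block $i$ equals the sender's, $\hat x_{i-1}=x_{i-1}$. The base case $i=0$ is immediate: both parties compute $x_{-1}=H(\lab{INIT},iv)$ from the public nonce in the header, and $H$ is a deterministic public function. For the step, assume $\hat x_{i-1}=x_{i-1}$. Since $t_i=H(\lab{FIBRE},\cdot)$ and $\kappa_i=H(\lab{MASK},\cdot)$ are deterministic functions of the state by \eqref{eq:select}, the receiver computes the same selector $t_i$ and the same mask $\kappa_i$ as the sender. Hence the receiver applies exactly the map $\Dec_i$ paired with the sender's $\Enc_i$: it uses the secret $s_{t_i}$ matching the public vector $b_{t_i}$ used to form $c_i$, together with the same mask.

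The hypothesis gives the margin $|\delta_c|<q/4-1/2$ for the error $\delta$ of block $i$, with $r,f,g$ the sender's actual coins. Proposition~\ref{prop:blockinv} then yields $\hat m_i=\Dec_i(c_i)=m_i$. The chain update \eqref{eq:chain} is a deterministic function of $(i,x_{i-1},m_i)$, and both parties now hold identical arguments, so $\hat x_i=x_i$, which establishes $I(i+1)$. Running the induction to $L-1$ gives equality of states, selectors and masks at every position, and recovery of every block $m_0,\dots,m_{L-1}$.

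Finally, $\Dec_{sk}$ returns the recovered stream with the $\nu$ leading random blocks discarded. It reads the eight-byte length prefix and strips the zero padding. Because the recovered blocks coincide with the sender's framed stream, the parsed length is the true length and the output is $msg$. This step has to be checked against Listings~\ref{lst:enc}--\ref{lst:dec}. The only point needing care, and the one I expect to be the main obstacle, is getting the hypothesis right rather than doing any computation. The margin in Proposition~\ref{prop:blockinv} is stated with respect to $t_i$, which is the sender's selector. The induction must therefore establish that the receiver's selector equals the sender's \emph{before} the margin for that block is invoked, so that the error term the receiver actually faces is the hypothesised $\delta$. Doing so avoids any circularity, and it is exactly why the one-block delay ($t_i$ depends on $m_{i-1}$, not on $m_i$) matters.
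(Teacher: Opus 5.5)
Your proof is correct and follows the paper's own argument. Both use induction on the position with the state $x_{i-1}$ as invariant, obtain the same selector and mask from equal states, apply Proposition~\ref{prop:blockinv} to get $\hat m_i=m_i$, and use the deterministic \lab{CHAIN} update to restore the invariant. You also make explicit the unframing step and why the selector must be matched before the sender's margin is invoked; the paper's shorter proof leaves both implicit, as it also leaves implicit that an honest ciphertext passes the syntax checks.
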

\begin{proof}
By induction on $i$. Both parties set $x_{-1}=H(\lab{INIT},iv)$ from the public nonce. If the receiver holds the sender's $x_{i-1}$, it computes the same $t_i$ and $\kappa_i$, hence decrypts with the key the sender used; Proposition~\ref{prop:blockinv} gives $\hat m_i=m_i$, so $x_i=H(\lab{CHAIN},i,x_{i-1},\hat m_i)$ is again the sender's state.
\end{proof}

An error in block $i$ can desynchronise subsequent states and propagate through the suffix. It need not change every later state or plaintext block: collisions or reconvergence are not excluded. Section~\ref{sec:correct} conservatively counts the first decoding error as a failure of exact whole-stream recovery; before that error the receiver follows the sender's walk.

\subsection{Framing and leading random blocks}
For message bytes $msg$, frame $\mathrm{U64}(|msg|)\,\|\,msg$ and append the least zero padding making its length a multiple of~32. The number of framed blocks is
\[
\ell=\left\lceil\frac{8+|msg|}{32}\right\rceil .
\]
Prepending $\nu$ independent random blocks yields $L=\nu+\ell$. All stream bounds include these leading blocks. A 2048-byte payload therefore uses 65 blocks, or 73 if $\nu=8$. The recipient discards the first $\nu$ recovered blocks and checks the exact length and canonical padding of the rest.

The header carries $L$, but \spec{} does not include $L$ in the hash context. The hash binds $(n,k,q,\eta,T,\nu)$, and \lab{CHAIN} also binds the position $i$. Header and frame checks are syntax checks, not authentication, and no statement below relies on cryptographic binding of $L$.

\begin{figure}[t]
\centering
\begin{tikzpicture}[x=0.78cm,y=0.78cm]
\def\orig{{14,2,3,13,3,5,1,14}}
\def\flip{{14,2,3,13,7,7,13,1}}
\foreach \i in {0,...,7}{
  \pgfmathsetmacro{\a}{\orig[\i]} \pgfmathsetmacro{\b}{\flip[\i]}
  \ifnum\i<4 \def\cc{zgrey}\else\def\cc{zpink}\fi
  \fill[\cc] (1.2*\i,1.1) rectangle ++(1.05,0.9); \node at (1.2*\i+0.525,1.55) {\pgfmathprintnumber{\a}};
  \fill[\cc] (1.2*\i,-0.2) rectangle ++(1.05,0.9); \node at (1.2*\i+0.525,0.25) {\pgfmathprintnumber{\b}};
  \node[font=\scriptsize,black!55] at (1.2*\i+0.525,-0.55) {\i};}
\draw[dashed,black!60] (4.725,-0.8) -- (4.725,2.6);
\node[anchor=east,font=\small] at (-0.15,1.55) {original stream};
\node[anchor=east,font=\small] at (-0.15,0.25) {bit 0 of block 3 flipped};
\node[font=\small,black!70,anchor=east] at (4.55,2.45) {selectors 0--3 unchanged};
\node[font=\small,zred,anchor=west] at (4.9,2.45) {first change at block 4};
\end{tikzpicture}
\caption{A reproducible selector trace at $T=16$, $\nu=0$, with all-zero input blocks except for the indicated flipped bit. These are hash-level fixtures (Appendix~\ref{app:kat}), not framed user messages or a security experiment.}
\label{fig:trace}
\end{figure}

\subsection{Pseudocode and a worked walk}
The listings use the fixed parameters of Appendix~\ref{app:profile}. \texttt{Frame} and \texttt{RandomBlocks} return lists of 32-byte blocks; $\|$ between such lists denotes concatenation. Blocks and masks are identified with coefficient bits in the stated bit order. Ring operations are reduced modulo $q$. \texttt{ParseAndCheck} and \texttt{Unframe} reject invalid syntax as specified in Appendix~\ref{app:profile}; rejection is not authentication. The pairs produced by \texttt{emit} are collected in order before serialisation with the header.
\par\medskip\noindent\begin{minipage}{\linewidth}
\begin{lstlisting}[caption={Encryption $\mathfrak{E}_{pk}$.},label={lst:enc}]
Encrypt((A,b), msg, T, nu):
  blocks <- RandomBlocks(nu) || Frame(msg)
  iv <- Uniform(32 bytes);  x <- H(INIT,iv)
  for i,m in enumerate(blocks):
      t <- H(FIBRE,x);  mask <- H(MASK,x)
      r,f,g <- Small(Rq^k), Small(Rq^k), Small(Rq)
      u <- transpose(A)*r + f
      v <- inner(b[t],r) + g + Delta*(m (+) mask)
      emit (u,v)
      x <- H(CHAIN,i,x,m)                 # old x on the right
  return Header(T,nu,len(blocks),iv) || emitted pairs
\end{lstlisting}
\end{minipage}\par\medskip

\par\medskip\noindent\begin{minipage}{\linewidth}
\begin{lstlisting}[label={lst:dec},caption={Decryption $\mathfrak{D}_{sk}$; parsing is bounded and canonical, and is syntax checking only.}]
Decrypt(s, ciphertext, T, nu):
  iv,pairs <- ParseAndCheck(ciphertext,T,nu)
  x <- H(INIT,iv);  blocks <- []
  for i,(u,v) in enumerate(pairs):
      t <- H(FIBRE,x);  mask <- H(MASK,x)
      m <- Decode(v - inner(s[t],u)) (+) mask
      append m to blocks
      x <- H(CHAIN,i,x,m)
  return Unframe(blocks[nu:])
\end{lstlisting}
\end{minipage}\par\medskip

Figure~\ref{fig:trace} uses the exact hash profile and an all-zero nonce. Flipping bit~0 of block~3 leaves selectors~0 through~3 unchanged and changes the later selectors in this example: the walk on the base is the same up to $\theta_{t_3}$ and then leaves it. The general claim is delayed influence, not that every future index must differ; collisions and repeated indices are possible. Figure~\ref{fig:walk} shows the same mechanism over a longer stream.

\begin{figure}[t]
\centering
\begin{tikzpicture}
\begin{axis}[width=0.95\textwidth,height=6.2cm,xmin=-1,xmax=48,ymin=-1,ymax=17.5,
  xlabel={block index $i$},ylabel={selected fibre $t_i$},ytick={0,4,8,12,15},
  axis lines=left,tick label style={font=\small},label style={font=\small},
  legend style={font=\scriptsize,draw=none,at={(0.98,1.02)},anchor=south east,legend columns=2}]
\fill[zgrey] (axis cs:-1,-1) rectangle (axis cs:11.5,17.5);
\draw[dashed,black!50] (axis cs:11.5,-1) -- (axis cs:11.5,17.5);
\addplot[const plot,zgold,thick,dashed,mark=*,mark size=1.2pt] coordinates {(0,14) (1,2) (2,3) (3,13) (4,3) (5,5) (6,1) (7,14) (8,13) (9,3) (10,0) (11,11) (12,3) (13,7) (14,8) (15,11) (16,11) (17,5) (18,5) (19,10) (20,13) (21,8) (22,6) (23,5) (24,13) (25,14) (26,7) (27,5) (28,7) (29,3) (30,1) (31,5) (32,6) (33,3) (34,15) (35,1) (36,1) (37,6) (38,5) (39,14) (40,14) (41,12) (42,6) (43,4) (44,14) (45,7) (46,7) (47,13)};
\addlegendentry{bit 0 of block 11 flipped}
\addplot[const plot,zred,thick,mark=*,mark size=1.2pt] coordinates {(0,14) (1,2) (2,3) (3,13) (4,3) (5,5) (6,1) (7,14) (8,13) (9,3) (10,0) (11,11) (12,13) (13,11) (14,8) (15,5) (16,0) (17,0) (18,12) (19,3) (20,12) (21,2) (22,2) (23,6) (24,13) (25,15) (26,6) (27,7) (28,6) (29,13) (30,4) (31,14) (32,4) (33,5) (34,13) (35,3) (36,0) (37,3) (38,4) (39,5) (40,15) (41,6) (42,0) (43,15) (44,2) (45,7) (46,13) (47,1)};
\addlegendentry{original stream}
\node[font=\scriptsize,align=center,fill=zgrey] at (axis cs:5.3,16.3) {identical trajectory\\(blocks 0--11)};
\node[font=\scriptsize,zblue] (lab) at (axis cs:26,16.3) {first divergence at block 12};
\draw[->,zblue] (lab.south west) -- (axis cs:12.2,3.4);
\end{axis}
\end{tikzpicture}
\caption{The walk itself, computed with the exact \spec{} hash at $T=16$, $\nu=0$, an all-zero nonce and all-zero input blocks, for 48 blocks. The two streams are identical except for one flipped bit in block~11. Selectors agree up to and including block~11 and separate from block~12 onwards, which is the one-block delay of \eqref{eq:chain} seen over a whole message. Later agreements are ordinary collisions in $[T]$, not persistence of the original walk.}
\label{fig:walk}
\end{figure}

\Needspace{7\baselineskip}
\section{Correctness and decoding failure}\label{sec:correct}
\begin{theorem}[Cancellation and sufficient margin]\label{thm:cancel}
Using unreduced integer negacyclic representatives for the error,
\begin{equation}\label{eq:cancel}
\zbox{w=\Delta\mu+\delta\pmod q,\qquad\delta=\ip{e_t}r+g-\ip{s_t}f.}
\end{equation}
A sufficient coefficientwise condition for correct decoding is
\begin{equation}\label{eq:margin}
|\delta_c|<q/4-1/2 .
\end{equation}
At $q=3329$, $|\delta_c|\le831$ suffices for either bit.
\end{theorem}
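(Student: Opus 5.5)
The proof splits into an algebraic cancellation identity and a decoding-margin argument on the circle $\ZZ_q$.

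\textbf{Cancellation.} Substitute \eqref{eq:blockmap} and \eqref{eq:keygen} into $w=v-\ip{s_t}u$:
\[
w=\ip{As_t+e_t}{r}+g+\Delta\mu-\ip{s_t}{A^\top r+f}.
\]
By $\Rq$-bilinearity of $\ip{\cdot}{\cdot}$ and the transpose identity $\ip{As}{r}=\ip{s}{A^\top r}$ stated after \eqref{eq:negacyclic}, the terms $\ip{As_t}{r}$ and $\ip{s_t}{A^\top r}$ cancel. This uses only commutativity of $\Rq$, not invertibility of $A$. What remains is $w\equiv\Delta\mu+\ip{e_t}r+g-\ip{s_t}f\pmod q$. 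To make ``unreduced integer representatives'' precise, I would lift $e_t,r,g,s_t,f$ to their small centred integer coefficients and define $\delta$ by the integer negacyclic formula \eqref{eq:negacyclic} without reduction. Since reduction mod $q$ is a ring homomorphism $\ZZ[X]/(X^n+1)\to\Rq$, the identity holds mod $q$ with this integer $\delta$.

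\textbf{Margin.} Work coefficientwise, with $w_c\equiv\Delta\mu_c+\delta_c$.
\begin{itemize}
\item If $\mu_c=0$, then $w_c\equiv\delta_c$. Since $|\delta_c|<q/4<q/2$, the centred representative of $w_c$ is $\delta_c$ itself, its absolute value is below $q/4$, and $\Decode$ outputs $0$.
\item If $\mu_c=1$, then $w_c\equiv\Delta+\delta_c$ with $\Delta=(q-1)/2$ for odd $q$. Then
\[
\Delta+\delta_c>(q-1)/2-q/4+1/2=q/4,
\]
and
\[
\Delta+\delta_c<(q-1)/2+q/4-1/2=3q/4-1.
\]
If $\Delta+\delta_c\le q/2$, it is its own centred representative and exceeds $q/4$. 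Otherwise its centred representative is $\Delta+\delta_c-q$, which lies in $(-q/2,\,-q/4-1)$, so its absolute value exceeds $q/4$. In both sub-cases $\Decode$ outputs $1$.
\end{itemize}
The $1/2$ in \eqref{eq:margin} absorbs the asymmetry $\Delta=(q-1)/2$ rather than $q/2$. This asymmetry is exactly where care is needed, so this is the step I expect to be the main obstacle, though a small one.

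\textbf{Numeric case.} At $q=3329$, $q/4-1/2=831.75$, so every integer $|\delta_c|\le831$ satisfies \eqref{eq:margin}.

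A direct check of the endpoints confirms this. We have $\Delta=1664$ and $q/4=832.25$. For bit $1$, $1664\pm831$ gives $833$ and $2495$; the latter centres to $-834$, and both have absolute value greater than $832.25$. For bit $0$, $|\pm831|<832.25$.
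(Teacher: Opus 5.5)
Your proof is correct and takes essentially the same route as the paper. You cancel $\ip{As_t}{r}$ against $\ip{s_t}{A^\top r}$ by bilinearity without using invertibility of $A$, and for bit one you place $\Delta+\delta_c$ strictly inside $(q/4,\,3q/4-1)$; your explicit integer lift to $\ZZ[X]/(X^n+1)$ and your case split on the centred representative spell out steps the paper leaves implicit.
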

\begin{proof}
Bilinearity gives $\ip{s_t}u=\ip{As_t}r+\ip{s_t}f=\ip{b_t-e_t}r+\ip{s_t}f$, which with \eqref{eq:blockmap} yields \eqref{eq:cancel}. For bit zero, \eqref{eq:margin} gives circular distance to zero below $q/4$. For bit one, $\Delta=(q-1)/2$ and \eqref{eq:margin} place $\Delta+\delta_c$ strictly inside $(q/4,\,3q/4-1)$, so its circular distance from zero exceeds $q/4$ and the decoder returns one. Invertibility of $A$ is not used.
\end{proof}

The half-unit adjustment matters because $q$ is odd and the encoding uses a floor: a one bit with $\delta_c=-832$ becomes $1664-832=832$, which is decoded as zero although $832<q/4=832.25$. Figure~\ref{fig:decode} displays this boundary.

\begin{figure}[t]
\centering
\begin{tikzpicture}[x=0.0034cm,y=1cm]
\fill[zpink] (-1664,0) rectangle (-832,2.2);
\fill[zgrey] (-832,0) rectangle (832,2.2);
\fill[zpink] (832,0) rectangle (1664,2.2);
\draw[dashed,black!50] (-832,0) -- (-832,2.2); \draw[dashed,black!50] (832,0) -- (832,2.2);
\node[zred,font=\small] at (-1248,1.9) {decodes to 1};
\node[black!70,font=\small] at (0,1.9) {decodes to 0};
\node[zred,font=\small] at (1248,1.9) {decodes to 1};
\fill[zblue] (0,0.75) circle (2.5pt);
\fill[zred] (1664,0.75) circle (2.5pt);
\node[zred,font=\large] at (832,0.75) {$\times$};
\node[font=\scriptsize,zred,align=right,anchor=east] (bl) at (790,1.05) {$\Delta-832=832$:\\ decoded as 0};
\draw[->,zred] (bl.south east) -- (822,0.82);
\draw (-1664,0) -- (1664,0);
\foreach \x in {-1664,-832,0,832,1664}{\draw (\x,0) -- (\x,-0.1) node[below,font=\small] {$\x$};}
\node[font=\small] at (0,-0.75) {centred residue modulo 3329};
\end{tikzpicture}
\caption{Exact decoding regions on the circle of residues and the odd-modulus boundary. The sufficient margin concerns the unreduced error, while the decoder sees residues modulo $q$. Here $\Delta=1664$ and $|\delta_c|\le831$; the boundary example is $1664-832=832\mapsto0$. The boundary residues $\pm832$ belong to the region decoded as~0.}
\label{fig:decode}
\end{figure}

\subsection{An averaged bound}
For a fixed fibre and output coefficient, each negacyclic product contains $n$ signed products of disjoint independent coefficient pairs. The two inner products in \eqref{eq:cancel} therefore contribute $2kn$ independent centred terms, each with variance $(\eta/2)^2$ and magnitude at most $\eta^2$; the extra $g_c$ has variance $\eta/2$. Thus
\begin{equation}\label{eq:var}
\E\,\delta_c=0,\qquad\mathrm{Var}(\delta_c)=\frac{kn\eta^2}{2}+\frac{\eta}{2}=1537 .
\end{equation}
Bernstein's inequality with uniform bound~4 gives
\begin{equation}\label{eq:bernstein}
\Pr(|\delta_c|\ge832)\le2\exp\Bigl(-\frac{832^2}{2(1537+4\cdot832/3)}\Bigr)=3.16208\ldots\times10^{-57}.
\end{equation}
For a 64-block trajectory independent of key generation, a union bound over $Ln=16384$ coefficient events gives $5.18075\ldots\times10^{-53}<2^{-173}$. If selection depends on the public key, a union over all $TLn$ hypothetical position/fibre coefficients remains a valid conservative bound, below $2^{-169}$. Conditioning on the whole family gives a substantially sharper result and removes that extra factor~$T$.

\subsection{Conditioning on the realised family}
\begin{lemma}[Fixed-key tail]\label{lem:fixedkey}
Fix integer coefficient representatives $\bar s,\bar e$ with $b=A\bar s+\bar e\pmod q$ and set $S=\|\bar s\|_2^2+\|\bar e\|_2^2$. With fresh independent coefficientwise $\CBD_\eta$ samples $r,f,g$,
\begin{equation}\label{eq:fixedkey}
\mathrm{Var}(\delta_c\mid\bar s,\bar e)=\frac\eta2(S+1),\qquad\Pr(|\delta_c|\ge a\mid\bar s,\bar e)\le2\exp\Bigl(-\frac{a^2}{\eta(S+1)}\Bigr).
\end{equation}
\end{lemma}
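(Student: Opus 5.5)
The plan is to write $\delta_c$, for fixed $\bar s,\bar e$, as an explicit integer linear combination of the independent fresh coefficients of $r,f,g$. Then I apply Hoeffding's inequality after decomposing each $\CBD_\eta$ coefficient into unbiased bits.

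\textbf{Step 1 (linear form).} Expand by the negacyclic rule \eqref{eq:negacyclic}, working over the integers with the fixed representatives. For each module index $j$ and each $i\in\{0,\dots,n-1\}$, the coefficient $r_{j,i}$ enters $(\ip{\bar e}{r})_c$ exactly once. Its weight is $\pm\bar e_{j,c-i}$ (indices mod $n$, with sign $-1$ on wrap-around). Likewise each $f_{j,i}$ enters $-(\ip{\bar s}{f})_c$ exactly once, with weight $\mp\bar s_{j,c-i}$, and $g_c$ enters with weight $1$. As $i$ ranges over $\{0,\dots,n-1\}$, the index $c-i \bmod n$ runs over all positions exactly once. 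Hence the squared weights sum to
\[
\textstyle\sum_{j,i}\bar e_{j,c-i}^2+\sum_{j,i}\bar s_{j,c-i}^2+1=S+1 .
\]
This gives $\delta_c=\sum_\ell w_\ell X_\ell$, where the $X_\ell$ are $2kn+1$ independent $\CBD_\eta$ variables and $\sum_\ell w_\ell^2=S+1$. Conditional on $(\bar s,\bar e)$, independence holds because the coins are fresh.

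\textbf{Step 2 (variance).} Each $X_\ell$ is centred with variance $\eta/2$, so by independence
\[
\mathrm{Var}(\delta_c\mid\bar s,\bar e)=\tfrac{\eta}{2}\textstyle\sum_\ell w_\ell^2=\tfrac{\eta}{2}(S+1).
\]

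\textbf{Step 3 (tail).} Write each $X_\ell=\sum_{a=1}^{\eta}(\beta_{\ell,a}-\tfrac12)-\sum_{a=1}^{\eta}(\beta'_{\ell,a}-\tfrac12)$ with independent unbiased bits. Then $\delta_c$ is a sum of $2\eta(2kn+1)$ independent centred terms. Each term $\pm w_\ell(\beta-\tfrac12)$ lies in an interval of length $|w_\ell|$. Hoeffding's inequality then gives
\[
\Pr(|\delta_c|\ge a)\le 2\exp\Bigl(-\frac{2a^2}{\sum_\ell 2\eta w_\ell^2}\Bigr)=2\exp\Bigl(-\frac{a^2}{\eta(S+1)}\Bigr),
\]
which is \eqref{eq:fixedkey}.

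\textbf{Main obstacle.} The only delicate point is the bookkeeping in Step 1. I need to check that the negacyclic convolution uses every coefficient of $\bar s,\bar e$ exactly once per output coefficient $c$, so that the weight vector has squared norm exactly $S+1$. I also need to note that $\delta$ is the unreduced integer quantity built from the fixed representatives, so that no reduction mod $q$ interferes with the linear form. The sign flips from $X^n=-1$ do not affect the squared weights. Using Hoeffding on the bit decomposition, rather than Bernstein on the $X_\ell$, is what produces the clean constant $\eta(S+1)$.
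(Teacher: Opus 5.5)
Your proof is correct and follows the paper's argument. The paper uses the same observation: for fixed $c$, the weights on $r$ and $f$ are signed permutations of the coefficients of $\bar e$ and $\bar s$, plus weight one on $g_c$, so the squared weights sum to $S+1$. Your Hoeffding step on the $2\eta$ unbiased bits per coefficient is a packaged form of the paper's Chernoff bound, which uses $\E e^{tZ}=\cosh(t/2)^{2\eta}\le e^{\eta t^2/4}$ with $t=2a/(\eta(S+1))$.
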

\begin{proof}
For $Z\sim\CBD_\eta$, $\E e^{tZ}=\cosh(t/2)^{2\eta}\le e^{\eta t^2/4}$. For a fixed output coefficient $c$, the weights multiplying the entries of $r$ are a signed permutation of all coefficients of $\bar e$, and similarly those of $f$ use all of $\bar s$; the coefficient of $g_c$ is one. These independent samples have squared weights summing to $S+1$, which proves the variance and the bound $\E[e^{t\delta_c}\mid\bar s,\bar e]\le e^{\eta(S+1)t^2/4}$. Chernoff's bound with $t=2a/(\eta(S+1))$, applied to both tails, proves \eqref{eq:fixedkey}.
\end{proof}

\begin{inwords}
Once the key is fixed, the decryption noise is a weighted sum of fresh small samples whose weights are the coefficients of the key. The key enters this upper bound only through its squared length $S$; the $+1$ is the one fresh polynomial $g$ that no key multiplies. This does not say that the whole noise distribution depends only on~$S$.
\end{inwords}

For $\eta=2$ a squared coefficient takes the values $0$, $1$ and $4$ with probabilities $6/16$, $8/16$ and $2/16$. Hence, for $h=2kn=1536$,
\begin{equation}\label{eq:Slaw}
\Pr(S=s)=16^{-h}[z^s](6+8z+2z^4)^h,\qquad\E S=1536,\qquad\mathrm{Var}(S)=2304 .
\end{equation}
This is an exact finite law; its integer coefficients and tails are evaluated without Monte Carlo through the recurrence \eqref{eq:recurrence} of Section~\ref{sec:repro}.

\begin{theorem}[Family-conditioned stream bound]\label{thm:stream}
Let $S_t=\|s_t\|_2^2+\|e_t\|_2^2$ for the sampled integer representatives. The message may be chosen after seeing the public key, but before the fresh independent encryption coins; the nonce and leading random plaintext blocks are independent of those coins. For any integer $S_0>0$ and at most $L$ transmitted blocks,
\begin{equation}\label{eq:streambound}
\Pr(\text{any decoding failure})\le T\Pr(S\ge S_0)+2Ln\exp\Bigl(-\frac{832^2}{\eta S_0}\Bigr).
\end{equation}
At $T=16$, $L=64$ and $S_0=2400$ this is $2^{-192.5614\ldots}<2^{-192}$.
\end{theorem}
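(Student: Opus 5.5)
The plan is to split the failure event according to whether some key in the family is long, then handle the short-key case with Lemma~\ref{lem:fixedkey} and a union bound over the realised walk. Let $G=\{S_t<S_0\text{ for all }t\in[T]\}$. Then
\[
\Pr(\text{fail})\le\Pr(G^c)+\Pr(\text{fail}\cap G).
\]
The pairs $(s_t,e_t)$ are i.i.d.\ across $t$, and each $S_t$ has the law \eqref{eq:Slaw}. A union bound therefore gives $\Pr(G^c)\le T\Pr(S\ge S_0)$, which is the first term.

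For the second term, condition on the whole key generation output $(A,s,e)$ and assume $G$ holds. The message, the nonce and the leading random blocks may depend on the public key, and $\nu$-block randomness is independent of the fresh coins. So, conditionally on the family and on these inputs, the coins $(r_i,f_i,g_i)$ of every block remain fresh, independent and $\CBD_\eta$. The subtle point is that the selector $t_i$ depends on the state $x_{i-1}$, and $x_{i-1}$ depends on the message and nonce but not on any coins. The sender's walk is thus a deterministic function of data that is independent of the coins. Proposition~\ref{prop:chaininv} reduces exact recovery to the margin holding at every block along the sender's walk: before the first failure the receiver follows that walk, so the first failure must occur as a margin violation at some position $i$ of the sender's walk.

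The second step applies the fixed-key tail at each position. For each $i<L$ and coefficient $c<n$, the error $\delta^{(i)}_c$ involves only the coins of block $i$ and the fixed key $(\bar s_{t_i},\bar e_{t_i})$. Here $t_i$ is fixed given the conditioning, so the key is fixed too. Lemma~\ref{lem:fixedkey} with $a=832$ gives
\[
\Pr\bigl(|\delta^{(i)}_c|\ge832\mid\cdot\bigr)\le2\exp\bigl(-832^2/(\eta(S_{t_i}+1))\bigr)\le2\exp\bigl(-832^2/(\eta S_0)\bigr),
\]
using $S_{t_i}+1\le S_0$ on $G$, since $S_t$ is an integer less than $S_0$. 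Theorem~\ref{thm:cancel} shows that $|\delta_c|\le831$ suffices for correct decoding. A union over the $Ln$ pairs $(i,c)$, followed by averaging over the conditioning restricted to $G$, gives the second term. Only the actually selected key enters at each position, which is why no extra factor $T$ appears.

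The numerical claim comes last. With $S_0=2400$, $\eta=2$, $L=64$ and $n=256$, the second term is $2^{15}e^{-692224/4800}=2^{15}e^{-144.213\ldots}$, roughly $2^{-193}$. For the first term, $16\Pr(S\ge2400)$ is computed exactly from the coefficients of $(6+8z+2z^4)^{1536}$ via \eqref{eq:recurrence}. A Chernoff estimate on $S$ confirms that it is at most of comparable order, since $2400$ lies about $18$ standard deviations above the mean. Adding the two terms gives the stated $2^{-192.56\ldots}$.

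I expect the main obstacle to be justifying the independence in the second step carefully. One has to argue that adaptivity only through the public key does not correlate the selected walk with the per-block coins, and that desynchronisation after a failure is irrelevant because only the first failure is counted. Beyond that, the work is the exact tail evaluation of $S$.
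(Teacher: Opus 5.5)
Your proposal is correct and follows the paper's proof step for step: the good-family event $G$ with $\Pr(G^c)\le T\Pr(S\ge S_0)$; conditioning on the family, message, nonce and leading blocks to fix the true walk while the coins stay fresh; Lemma~\ref{lem:fixedkey} at $a=832$ with $S_{t_i}+1\le S_0$ by integrality; a union over the $Ln$ coefficient events; and Proposition~\ref{prop:chaininv} to charge any stream failure to a margin violation on that walk. One small caution on the numerics: an optimised Chernoff bound on $S$ at $S_0=2400$ gives only about $2^{-193}$ for $\Pr(S\ge S_0)$, hence about $2^{-189}$ for the first term, and a Gaussian reading of the 18 standard deviations would be far too optimistic because $S$ is right-skewed, so the stated $2^{-192.56\ldots}$ really does rest on the exact evaluation via \eqref{eq:recurrence}, as you intend.
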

\begin{proof}
Let $G$ be the event that every $S_t<S_0$; a union bound over the $T$ independently sampled pairs gives $\Pr(G^c)\le T\Pr(S\ge S_0)$. Condition on the complete key family, the selected message, the nonce and the leading plaintext blocks. This fixes the true encryption walk even when it depends on the public key. The encryption coins remain independent, and on $G$ every selected key has $S_{t_i}+1\le S_0$ by integrality. Lemma~\ref{lem:fixedkey} with $a=832$ therefore bounds each of the $Ln$ actual coefficient events by the second term of \eqref{eq:streambound} divided by $Ln$; by Theorem~\ref{thm:cancel}, avoiding all of them suffices. Union-bound them and average over the conditioning. By Proposition~\ref{prop:chaininv}, before the first decoding failure the receiver follows the true walk, so any stream failure entails one of these events. No independence among different coefficient events is needed.
\end{proof}

\begin{inwords}
Either some key in the family came out unusually long, or none did. The first case has a probability that can be written down exactly, because $S$ is a sum of $2kn$ independent squares taking only the values $0,1,4$; the second reduces to Lemma~\ref{lem:fixedkey} with a known bound on~$S$. Averaging over the key instead, as in \eqref{eq:bernstein}, is dominated by the rare long keys that this argument isolates and charges for separately.
\end{inwords}

This proof applies to public-key-dependent message choice; it does not permit message selection from the encryption coins, and it gives no worst-case guarantee for an arbitrary exceptional family. Conditioning in an analysis is not rejection sampling in setup: the key-generation distribution is unchanged. Increasing $L$ affects the fresh-noise term linearly, so the bound degrades only logarithmically with stream length. The round choice 2400, about 1.56 times $\E S$, is close to but not exactly the optimum. Compression or different sampling would require a new calculation.

\begin{table}[htbp]
\centering\small
\caption{Recomputed bounds from \eqref{eq:streambound}. Except for the first row, thresholds minimise the bound over the integer grid $2200\le S_0\le2500$. Displayed decimals approximate an upper bound, not an observed failure rate.}
\label{tab:bounds}
\begin{tabular}{@{}rrrl@{}}
\toprule
$L$ & chosen $S_0$ & $\log_2$ failure upper bound & message choice\\
\midrule
64 & 2400 & $-192.5614$ & fixed or from public key\\
64 & 2402 & $-192.6152$ & fixed or from public key\\
1024 & 2394 & $-189.3077$ & fixed or from public key\\
32768 & 2384 & $-185.1846$ & fixed or from public key\\
\bottomrule
\end{tabular}
\end{table}

\begin{figure}[t]
\centering
\begin{tikzpicture}
\begin{axis}[width=0.78\textwidth,height=6cm,xmin=2358,xmax=2432,ymin=-206,ymax=-177,
  xlabel={key-norm threshold $S_0$},ylabel={$\log_2$ of the bound},axis lines=left,
  tick label style={font=\small,/pgf/number format/1000 sep={}},label style={font=\small},
  legend style={font=\scriptsize,draw=none,fill=none,at={(0.03,0.03)},anchor=south west,cells={anchor=west}}]
\addplot[zgold,thick,dashed] coordinates {(2360,-178.0339) (2362,-178.8347) (2364,-179.6371) (2366,-180.4410) (2368,-181.2465) (2370,-182.0536) (2372,-182.8622) (2374,-183.6724) (2376,-184.4842) (2378,-185.2975) (2380,-186.1124) (2382,-186.9288) (2384,-187.7468) (2386,-188.5663) (2388,-189.3874) (2390,-190.2101) (2392,-191.0343) (2394,-191.8600) (2396,-192.6873) (2398,-193.5162) (2400,-194.3466) (2402,-195.1785) (2404,-196.0120) (2406,-196.8470) (2408,-197.6836) (2410,-198.5218) (2412,-199.3614) (2414,-200.2026) (2416,-201.0454) (2418,-201.8897) (2420,-202.7355) (2422,-203.5829) (2424,-204.4318) (2426,-205.2822) (2428,-206.1342) (2430,-206.9877)};\addlegendentry{exceptional key family, $T\Pr(S\ge S_0)$}
\addplot[zblue,thick,dotted] coordinates {(2360,-196.5822) (2362,-196.4031) (2364,-196.2242) (2366,-196.0457) (2368,-195.8674) (2370,-195.6895) (2372,-195.5118) (2374,-195.3345) (2376,-195.1574) (2378,-194.9807) (2380,-194.8042) (2382,-194.6281) (2384,-194.4522) (2386,-194.2766) (2388,-194.1014) (2390,-193.9264) (2392,-193.7517) (2394,-193.5773) (2396,-193.4032) (2398,-193.2294) (2400,-193.0559) (2402,-192.8826) (2404,-192.7097) (2406,-192.5370) (2408,-192.3646) (2410,-192.1926) (2412,-192.0208) (2414,-191.8492) (2416,-191.6780) (2418,-191.5071) (2420,-191.3364) (2422,-191.1660) (2424,-190.9959) (2426,-190.8261) (2428,-190.6565) (2430,-190.4873)};\addlegendentry{fresh noise on good keys}
\addplot[zred,very thick] coordinates {(2360,-178.0339) (2362,-178.8347) (2364,-179.6371) (2366,-180.4410) (2368,-181.2465) (2370,-182.0535) (2372,-182.8620) (2374,-183.6720) (2376,-184.4833) (2378,-185.2957) (2380,-186.1089) (2382,-186.9219) (2384,-187.7330) (2386,-188.5390) (2388,-189.3335) (2390,-190.1043) (2392,-190.8301) (2394,-191.4769) (2396,-192.0013) (2398,-192.3657) (2400,-192.5614) (2402,-192.6152) (2404,-192.5704) (2406,-192.4661) (2408,-192.3289) (2410,-192.1747) (2412,-192.0119) (2414,-191.8448) (2416,-191.6758) (2418,-191.5060) (2420,-191.3359) (2422,-191.1657) (2424,-190.9958) (2426,-190.8260) (2428,-190.6565) (2430,-190.4873)};\addlegendentry{sum: rigorous upper bound}
\addplot[zred,only marks,mark=*,mark size=2pt] coordinates {(2400,-192.5614)};
\node[font=\scriptsize,zred] (s) at (axis cs:2412,-183.5) {$S_0=2400$: $2^{-192.56}$};
\draw[->,zred] (s.south) -- (axis cs:2400.6,-192.2);
\end{axis}
\end{tikzpicture}
\caption{The two terms of \eqref{eq:streambound} for $T=16$, $L=64$. The key term is computed from exact integer coefficients of \eqref{eq:Slaw}; the noise term uses $S+1\le S_0$ on the good-family event. Both fixed and public-key-chosen messages obey this bound.}
\label{fig:terms}
\end{figure}

\Needspace{7\baselineskip}
\section{Secret recovery as a nearby lattice point}\label{sec:lattice}
The public relation $b_t=As_t+e_t$ hides a small secret behind a small error. A geometric formulation of recovering $\sigma$ from $\beta$ should preserve both constraints.

\subsection{An augmented public lattice}
Put $D=kn$. Expand polynomials in the coefficient basis and let $\widetilde A\in\ZZ^{D\times D}$ be an integer lift of the negacyclic coefficient matrix of $A$. For a lift $\bar b_t$ of $b_t$, define
\begin{equation}\label{eq:auglattice}
\Lambda_A=\{(z,\widetilde Az+qv):z,v\in\ZZ^D\}\subset\RR^{2D},\qquad y_t=(0,\bar b_t).
\end{equation}
A column basis and the determinant are
\[
B_A=\begin{pmatrix}I_D&0\\ \widetilde A&qI_D\end{pmatrix},\qquad\det\Lambda_A=q^D .
\]
A different integer lift of $A$ gives the same lattice.

\begin{proposition}[Planted nearby point]\label{prop:planted}
The point $y_t^\star=(\bar s_t,\bar b_t-\bar e_t)$ belongs to $\Lambda_A$ and
\begin{equation}\label{eq:planted}
\|y_t-y_t^\star\|_2^2=\|\bar s_t\|_2^2+\|\bar e_t\|_2^2=S_t .
\end{equation}
If $\sqrt{S_t}<\lambda_1(\Lambda_A)/2$, then $y_t^\star$ is the unique closest lattice point to $y_t$.
\end{proposition}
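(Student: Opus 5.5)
The plan is to verify the three assertions in turn: membership, the distance identity, and uniqueness. All three are elementary.

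\emph{Membership.} Since $b_t=As_t+e_t$ in $\Rq^k$, lifting to integer coefficient vectors gives $\bar b_t-\bar e_t\equiv\widetilde A\bar s_t\pmod q$ coordinatewise. Here we use that the negacyclic coefficient matrix of $A$ represents multiplication by $A$ modulo $q$, by \eqref{eq:negacyclic}. Hence there is an integer vector $v\in\ZZ^D$ with $\bar b_t-\bar e_t=\widetilde A\bar s_t+qv$. Taking $z=\bar s_t$ in \eqref{eq:auglattice} exhibits $y_t^\star=(\bar s_t,\widetilde A\bar s_t+qv)=B_A(\bar s_t,v)^\top$ as a lattice point. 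The independence of the lattice from the lift $\widetilde A$, already noted, ensures that nothing depends on which lift was chosen.

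\emph{Distance.} Compute $y_t-y_t^\star=(-\bar s_t,\bar e_t)$ directly. Its squared Euclidean norm splits over the two blocks of $D$ coordinates as $\|\bar s_t\|_2^2+\|\bar e_t\|_2^2$. This is $S_t$, where $S_t$ is computed with the same integer representatives as in Theorem~\ref{thm:stream}. The one point to state carefully is that the norm refers to these fixed representatives, not to residues. This is consistent with the paper's convention, and the lifts $\bar s_t,\bar e_t$ are exactly the ones in the definition of $y_t^\star$.

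\emph{Uniqueness.} Let $x\in\Lambda_A$ satisfy $\|y_t-x\|_2\le\|y_t-y_t^\star\|_2=\sqrt{S_t}$. By the triangle inequality,
\[
\|x-y_t^\star\|_2\le\|x-y_t\|_2+\|y_t-y_t^\star\|_2\le2\sqrt{S_t}<\lambda_1(\Lambda_A).
\]
Since $x-y_t^\star\in\Lambda_A$ and every nonzero lattice vector has length at least $\lambda_1$, we get $x=y_t^\star$. So $y_t^\star$ is a closest point, and any other candidate at distance at most $\sqrt{S_t}$ coincides with it; this is uniqueness.

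I expect no real obstacle. The only step needing care is the membership claim: one must make explicit that the integer matrix $\widetilde A$ acts on coefficient lifts compatibly with ring multiplication modulo $q$, including the sign from $X^n=-1$, so that the discrepancy is a multiple of $q$ absorbed by $qv$.
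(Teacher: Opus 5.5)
Your proposal is correct and follows the paper's proof. Both arguments get membership from $\bar b_t-\bar e_t=\widetilde A\bar s_t+qv$, read off the difference vector $(-\bar s_t,\bar e_t)$, and prove uniqueness with the triangle inequality against $\lambda_1(\Lambda_A)$: you show that a competitor within distance $\sqrt{S_t}$ satisfies $\|x-y_t^\star\|_2\le2\sqrt{S_t}<\lambda_1(\Lambda_A)$, whereas the paper states the equivalent reverse bound $\|y_t-y'\|_2\ge\lambda_1(\Lambda_A)-\sqrt{S_t}>\sqrt{S_t}$.
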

\begin{proof}
The key relation gives $\bar b_t-\bar e_t=\widetilde A\bar s_t+qv$ for some integer $v$. This proves membership and the difference vector $(-\bar s_t,\bar e_t)$. For any other $y'\in\Lambda_A$, the triangle inequality gives $\|y_t-y'\|_2\ge\lambda_1(\Lambda_A)-\sqrt{S_t}>\sqrt{S_t}$.
\end{proof}

The associated minimisation problem is
\begin{equation}\label{eq:minprob}
\min_{z,v\in\ZZ^D}\Bigl(\|z\|_2^2+\|\bar b_t-\widetilde Az-qv\|_2^2\Bigr).
\end{equation}
In dimension $2D=1536$ the Gaussian heuristic gives
\[
\lambda_1(\Lambda_A)\approx\sqrt{\frac{2D}{2\pi e}}\,q^{1/2}\approx547.16,\qquad\sqrt{\E S_t}\approx39.19,
\]
roughly a factor of seven of slack in the uniqueness condition for typical instances. These are heuristic scale comparisons, not established shortest-vector lengths or attack costs for these structured lattices; the uniqueness conclusion requires its actual inequality, not the heuristic substitution.

\subsection{Why the secret norm cannot be discarded}
If $A$ is invertible modulo $q$, choosing $z=A^{-1}b_t$ makes the modular residual zero, although $z$ is generally not the sampled small secret: $A^{-1}b_t=s_t+A^{-1}e_t$, and smallness need not survive modular inversion. Likewise $\widetilde A\ZZ^D+q\ZZ^D=\ZZ^D$ in that case, so the unaugmented image lattice already contains $\bar b_t$. The augmented lattice avoids this degeneracy by retaining $\|z\|_2$.

The search problem is small-secret Module-LWE with the stipulated distributions~\cite{regev,lpr,ls}. Euclidean minimisation is not automatically the exact maximum-likelihood rule for CBD samples. Proposition~\ref{prop:planted} therefore supplies a precise geometric interpretation and a conditional uniqueness statement, not an unconditional equivalence between secret recovery and unique closest-vector decoding. Legitimate decryption cancels terms; it solves neither this optimisation nor an inverse problem for~$A$.

\Needspace{7\baselineskip}
\section{Assumptions and security of the chained composition}\label{sec:security}
The arithmetic follows the small-secret Module-LWE encryption pattern of Lindner--Peikert and Kyber~\cite{lp,kyber-spec,kyber-eurosp}, with an explicitly uniform matrix and independent coefficientwise sampling. The bit representative $\Delta$, the absence of compression, the key family and the framing are specified here; this is not a conforming implementation of ML-KEM~\cite{fips203}.

\begin{definition}[Decisional Module-LWE]\label{def:mlwe}
For $m,k\ge1$ let $\varepsilon_{m,k}(\tau)$ be the supremum of the absolute differences in acceptance probabilities over all time-$\tau$ distinguishers on $(M,Mr+h)$ and $(M,z)$, where $M\leftarrow\Rq^{m\times k}$ and $z\leftarrow\Rq^m$ are uniform and $r\in\Rq^k$, $h\in\Rq^m$ have independent coefficientwise $\CBD_\eta$ samples. Write
\[
\varepsilon_1=\varepsilon_{k,k},\qquad\varepsilon_2=\varepsilon_{k+1,k},\qquad\varepsilon_3=\varepsilon_{k+T,k},
\]
each evaluated at the adversary's time plus the polynomial simulation overhead.
\end{definition}

These are explicit distributional assumptions; this manuscript gives no concrete bound for any of them. A distinguisher for $m$ samples is also one for $m'\ge m$ samples, so $\varepsilon_2\le\varepsilon_3$; the point of $\varepsilon_3$ is that it is paid once per block rather than $T$ times.

\begin{lemma}[Shared-matrix hybrid]\label{lem:hybrid}
The distributions $(A,(As_t+e_t)_{t<T})$ and $(A,(z_t)_{t<T})$, with independent uniform $z_t$, differ computationally by at most $T\varepsilon_1$.
\end{lemma}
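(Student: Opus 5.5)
The plan is a standard hybrid argument over the key index, with a reduction that embeds one Module-LWE instance into fibre $j$ and simulates the others. For $j=0,\dots,T$ define the hybrid distribution $\mathcal H_j=(A,z_0,\dots,z_{j-1},As_j+e_j,\dots,As_{T-1}+e_{T-1})$. The uniform $z_t$ and the small pairs $(s_t,e_t)$ are all sampled independently, and the same uniform $A$ is used throughout. Then $\mathcal H_0$ is the real public-key distribution of Definition~\ref{def:keygen}, and $\mathcal H_T$ is the all-uniform one. By the triangle inequality it suffices to show that any time-$\tau$ distinguisher has advantage at most $\varepsilon_1$ between $\mathcal H_j$ and $\mathcal H_{j+1}$ for each $j<T$.

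For the single step, given a distinguisher $\mathcal D$ for $\mathcal H_j$ versus $\mathcal H_{j+1}$, build a distinguisher $\mathcal B$ for the instance of Definition~\ref{def:mlwe} with $m=k$. $\mathcal B$ receives $(M,y)$ with $M\in\Rq^{k\times k}$ uniform, and $y$ is either $Mr+h$ or uniform. It sets $A=M$ and places $y$ at position $j$. For $t<j$ it samples fresh uniform $z_t$. For $t>j$ it samples fresh $s_t,e_t$ coefficientwise from $\CBD_\eta$ and computes $As_t+e_t$ itself. It then runs $\mathcal D$ and outputs its bit. The key point is that the challenge matrix $M$ is exactly the shared matrix $A$: it is uniform in $\Rq^{k\times k}$, as in key generation, and it is reused for every fibre. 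The secret and error of the challenge are coefficientwise $\CBD_\eta$ in $\Rq^k$, matching the key-generation samples. So if $y=Mr+h$, the input to $\mathcal D$ is distributed exactly as $\mathcal H_j$, and if $y$ is uniform it is exactly $\mathcal H_{j+1}$. Simulating the other $T-1$ coordinates costs polynomial overhead, which is absorbed in the convention of Definition~\ref{def:mlwe} that each $\varepsilon$ is evaluated at the adversary's time plus that overhead. Hence each step contributes at most $\varepsilon_1$, and summing the $T$ steps gives $T\varepsilon_1$.

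The only delicate point is the one the shared matrix creates. The hybrids must not require fresh matrices per fibre; if they did, the single challenge $M$ could not serve as $A$ everywhere. The reduction above avoids this because it needs only one challenge sample per step and generates all other fibres from the public $A$ and locally sampled small vectors. No invertibility of $A$ and no independence across fibres beyond conditional independence given $A$ is used. I would also remark, as the paper does after Definition~\ref{def:mlwe}, that $(A,(As_t+e_t)_t)$ is a single Module-LWE instance with $kT$ rows, so it could be bounded in one step. The lemma as stated instead uses the $T$-step hybrid with the square-rank parameter $\varepsilon_1$.
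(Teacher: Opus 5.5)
Your hybrid over the key index, which embeds the $k\times k$ challenge at position $j$ with $A=M$, samples uniform vectors before $j$ and honest ones after it, is exactly the paper's proof and is correct. Your closing aside is wrong, however, and the paper does not say it: the family $(A,(As_t+e_t)_{t<T})$ has $T$ independent secrets under one matrix, not one secret with a uniform $kT\times k$ matrix, so $\varepsilon_{kT,k}$ does not bound it in one step (the paper's remark $\varepsilon_2\le\varepsilon_3$ concerns the single-coin instance $(M,Mr+h)$ of Route~2), and this multi-secret structure is precisely why the $T$-step hybrid is needed.
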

\begin{proof}
Replace one public vector at a time. At the challenged position use the supplied Module-LWE vector; generate earlier vectors uniformly and later vectors from fresh small secrets and errors with the supplied matrix. Summing the $T$ adjacent hybrid distances proves the claim.
\end{proof}

This is a statement about public-key distributions. It provides neither a $T$-fold attack-cost lower bound nor protection when a challenged secret is disclosed.

\subsection{Experiment, leakage and advantage conventions}
The adversary first receives $pk$ and outputs a classical message together with retained state; the message may depend on the public key. The challenger samples the nonce and any leading random plaintext blocks, then encrypts. In the ideal experiment it keeps the honestly generated public key but returns the same public header and length with a fresh uniform nonce and $L$ independent uniform pairs in $\Rq^k\times\Rq$.

Write $\mathrm{Adv}^{\mathrm{pr}}$ for the absolute difference of acceptance probabilities in the real and ideal experiments. For IND-CPA, the first stage outputs two messages of equal framed length and the same $\nu$; write $\mathrm{Adv}^{\mathrm{cpa}}$ for the absolute difference of acceptance probabilities under encryption of message~0 or message~1. Under the convention of success probability minus $1/2$, divide the CPA bound by two. The game provides no decryption oracle; ordinary encryption queries can be simulated locally.

The ideal ciphertext is uniform in ring elements, not in arbitrary bytes: coefficients are restricted to $\{0,\dots,3328\}$ and the tag and header have fixed syntax. Parameters, block count and framing granularity are public leakage. Random leading plaintext blocks are challenger coins, not public side information.

\begin{theorem}[Composition]\label{thm:composition}
For polynomial $T$, $L$, independent keys and encryption coins, and always-terminating, efficiently computable public state, selector and mask functions,
\begin{align}
\mathrm{Adv}^{\mathrm{pr}}&\le2T\varepsilon_1+L\min(T\varepsilon_2,\varepsilon_3),\label{eq:advpr}\\
\mathrm{Adv}^{\mathrm{cpa}}&\le2T\varepsilon_1+2L\min(T\varepsilon_2,\varepsilon_3).\label{eq:advcpa}
\end{align}
A maximum allowed block count may replace $L$. These confidentiality statements use no random-oracle assumption about~$H$.
\end{theorem}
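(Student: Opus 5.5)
The plan is a hybrid argument in three phases. First replace the public key, then replace the ciphertext blocks one at a time, then undo the public-key change. The CPA bound \eqref{eq:advcpa} will follow from \eqref{eq:advpr} by the triangle inequality through the ideal experiment. That experiment is the same for both messages, since it depends only on the public header, $L$ and a fresh nonce. This doubles only the per-block term. The $2T\varepsilon_1$ term is not doubled, because the two key switches are shared by both messages' chains once the hybrids are arranged symmetrically around the ideal world.

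\textbf{Phase 1 (public key).} By Lemma~\ref{lem:hybrid}, replace $(A,(b_t)_{t<T})$ by $(A,(z_t)_{t<T})$ with independent uniform $z_t$, at cost $T\varepsilon_1$. The reduction receives a candidate key family. It runs the adversary's first stage on it, samples nonce and leading blocks, and runs the honest chain: states, selectors, masks, fresh $r,f,g$. All of this is efficiently computable from public data and the plaintext, which the challenger knows. The secret section is never needed, which is why no random-oracle modelling of $H$ enters.

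\textbf{Phase 2 (blocks).} With uniform $z_t$, I would replace the blocks $i=0,\dots,L-1$ one at a time by uniform pairs. Two routes give the $\min$.

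Route~(a): guess the selector $t^\ast\in[T]$, or equivalently split by the value of $t_i$. Embed an $\varepsilon_{k+1,k}$ instance $(M,Mr+h)$ with $M=\begin{pmatrix}A^\top\\ z_{t^\ast}^\top\end{pmatrix}$. This requires the reduction to receive $A$ and $z_{t^\ast}$ as the uniform matrix, so it builds the key from the instance. The block then becomes $(u,v-\Delta\mu)=M r+(f,g)$. Summing over the $T$ possible selectors gives $T\varepsilon_2$ per block.

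Route~(b): take a single $\varepsilon_3$ instance with $M=\begin{pmatrix}A^\top\\ Z^\top\end{pmatrix}$, where $Z=(z_0,\dots,z_{T-1})$. This yields $u$ together with all $T$ candidate values $\ip{z_t}{r}+g_t$ at once. Block $i$ then uses the coordinate $t_i$ and adds $\Delta\mu_i$. The obstacle is that the instance has independent errors $g_t$ per coordinate, whereas the real scheme uses one $g$. This is harmless, because only the selected coordinate is ever output and its marginal is exactly $g$. Each route costs one assumption per block, so Phase~2 costs $L\min(T\varepsilon_2,\varepsilon_3)$.

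Phase~2 is the \textbf{main obstacle}. The key must be consistent across blocks: $A$ and $Z$ come from the first instance, but later blocks also need them. I will have each hybrid's reduction take $A$ and $Z$ from its own instance and simulate every other block itself. Earlier blocks are uniform. Later blocks are honest encryptions under the known $A$ and $Z$, with fresh coins. The chain state is driven by plaintext the reduction knows, and adding a uniform $v$ shift absorbs $\Delta\mu_i$. Since $A$ and $Z$ in the instance are uniform, exactly matching Phase~1's output, adjacent hybrids are perfectly simulated. Care is needed with the message chosen after $pk$: the reduction exposes $(A,Z)$ to the adversary before the block challenge. That is fine, because the LWE instance includes $M$ publicly.

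\textbf{Phase 3.} Once all blocks are uniform, revert the $z_t$ back to honest $b_t$, again by Lemma~\ref{lem:hybrid}. This costs another $T\varepsilon_1$ and reaches the ideal experiment, which keeps the honest key. The nonce stays uniform and the header unchanged throughout. The total is $2T\varepsilon_1+L\min(T\varepsilon_2,\varepsilon_3)$. Replacing $L$ by a maximum block count only adds hybrids in which nothing changes.
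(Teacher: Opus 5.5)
Your proposal is correct and follows the paper's proof. You use Lemma~\ref{lem:hybrid} to reach uniform keys before the message is chosen, then replace blocks one at a time, paying either with a $(k+1)$-row challenge planted at one fibre (your split-by-value version is equivalent to the paper's guess of $J$ before publishing the key) or with a single $(k+T)$-row challenge whose unused rows are discarded, and finally spend a second $T\varepsilon_1$ to restore the honest key for $\mathrm{Adv}^{\mathrm{pr}}$. The only fix is in the wording for \eqref{eq:advcpa}: combining two copies of \eqref{eq:advpr} through the ideal experiment would give $4T\varepsilon_1$, so say explicitly, as the paper does, that the two message chains meet at the uniform-key/uniform-pair game; your remark about shared key switches amounts to exactly this cancellation.
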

\begin{proof}
First replace the public family by uniform vectors using Lemma~\ref{lem:hybrid}, at cost $T\varepsilon_1$, before the adversary chooses its message. In the resulting uniform-key game let $G_i$ have its first $i$ ciphertext pairs uniform and the rest honestly encrypted. All states, selectors and masks are computed from the actual plaintext and nonce, even when earlier ciphertext pairs have been replaced; they do not depend on those pairs. It remains to bound $|\Pr[G_i]-\Pr[G_{i+1}]|$ in two ways.

\emph{Route 1: guessing the fibre.} Guess $J\leftarrow[T]$ before publishing the key. Given a challenge $(M,y)$ with
\[
M=\begin{pmatrix}A^\top\\ z^\top\end{pmatrix}\in\Rq^{(k+1)\times k},
\]
place $z$ at public position $J$ and sample all other public vectors uniformly. Run the adversary's first stage, sample the nonce and any leading blocks, and compute $t_i,\mu_i$ from the known plaintext. If $t_i\neq J$, output a fixed bit independent of $y$. Otherwise put $u_i=y_{1..k}$ and $v_i=y_{k+1}+\Delta\mu_i$, simulate the first $i$ pairs uniformly and generate later pairs honestly. If $y=Mr+(f,g)$ this is an honest position-$i$ pair; if $y$ is uniform it is a uniform pair. All other encryptions, including later uses of $J$, need only public data. The published key has the same uniform distribution for every $J$, and before $y$ is used $J$ is independent of the key, the retained state, the chosen plaintext, the nonce and the padding. Averaging over $J$ therefore selects exactly the branch $J=t_i$, the signed acceptance-probability difference of the reduction is $1/T$ times that of the two adjacent games, and the unsuccessful branches have identical outputs. Each block transition costs $T\varepsilon_2$.

\emph{Route 2: embedding every fibre at once.} Take instead a challenge with
\[
M=\begin{pmatrix}A^\top\\ z_0^\top\\ \vdots\\ z_{T-1}^\top\end{pmatrix}\in\Rq^{(k+T)\times k},\qquad y=Mr+(f,g_0,\dots,g_{T-1})\ \text{or uniform},
\]
and publish $b_t=z_t$ for every $t$, which is exactly the uniform-key distribution. After the adversary has chosen its message, compute $t_i$ and put $u_i=y_{1..k}$, $v_i=y_{k+1+t_i}+\Delta\mu_i$; the rows of $y$ belonging to other fibres are discarded. In the real case $(u_i,v_i)=(A^\top r+f,\ \ip{b_{t_i}}r+g_{t_i}+\Delta\mu_i)$ is an honest pair with fresh coins; in the uniform case it is a uniform pair. No guess is made, and each block transition costs $\varepsilon_3$.

Summing over the $L$ positions costs $L\min(T\varepsilon_2,\varepsilon_3)$. For pseudorandomness, restore the honest public-key distribution once every pair is uniform; this costs another $T\varepsilon_1$ and gives \eqref{eq:advpr}. The ideal pair distribution then depends on the selected message only through its public length. For CPA, go from encryption of message~0 to the uniform-key/uniform-pair middle game and reverse the argument for message~1; equal framed lengths make that middle distribution identical. There are two public-key transitions and two sets of block transitions, which yields \eqref{eq:advcpa}.
\end{proof}

\begin{inwords}
The reduction replaces the ciphertext by noise one block at a time. To replace block $i$ it must plant its Module-LWE challenge at the fibre that block actually reads, but the walk is chosen by the adversary after seeing the key. Route~1 guesses the fibre and is right with probability $1/T$; Route~2 plants a challenge in every fibre at once, using a Module-LWE instance with $T$ more rows. Either way the theorem is conditional confidentiality for the complete stream, not a concrete security-level estimate.
\end{inwords}

\subsection{Interpretation and limits of the reduction}
For $T=16$, $L=64$ the CPA bound is $32\varepsilon_1+2048\varepsilon_2$ by Route~1, multiplicative losses of 5 and 11 bits, or $32\varepsilon_1+128\varepsilon_3$ by Route~2, a loss of 7 bits on an assumption with $k+T=19$ rows. At the maximum permitted length $L=32768$ the block-transition factors $2LT$ and $2L$ become $2^{20}$ and $2^{16}$ respectively. Assigning a concrete security level would require concrete advantage bounds at the reduction's running time and an allocation of the total error budget across the terms; the losses alone are not such an estimate.

If the complete framed plaintext is independent of the public key, the reduction can determine the target index before embedding its challenge row, and the pseudorandomness bound improves to $2T\varepsilon_1+L\varepsilon_2$. This fixed-message observation alone does not prove standard IND-CPA. Conversely, allowing messages chosen after $pk$ in Theorem~\ref{thm:composition} is ordinary CPA adaptivity: it does not establish security for encryptions of functions of an unknown secret key, as required in KDM games.

Known plaintext reveals states and future selectors through \eqref{eq:chain}, and the first index is always public, so absolute hiding of the walk is false. The theorem says that ciphertexts do not efficiently distinguish equal-length candidate messages under the assumptions and with no secret exposure; it does not make their computable candidate walks secret.

The reductions are straight-line. They also apply to quantum distinguishers if the assumptions bound quantum algorithms, with classical public keys, chosen messages and challenge ciphertexts, and retained quantum state passed through without rewinding. Superposition encryption access is not modelled. No authentication or IND-CCA security follows; a Fujisaki--Okamoto-style successor needs a separate construction and proof~\cite{fo,hhk}.

\subsection{What the mask does, and what must not be inferred}
The confidentiality reduction works even with a constant mask, since it computes the real mask from known plaintext. The mask matters instead in the exposure discussion: if an exposed key decodes a later block, it obtains $\mu_i=m_i\oplus\kappa_i$, whereas without the mask it would obtain $m_i$ directly.

Unpredictability of a public hash on an unknown state does not, however, follow from marginal state entropy alone, especially with auxiliary information and adversarially chosen states. A random-oracle analysis would have to bound queries hitting the hidden state and condition its entropy on the adversary's entire view; a keyed-PRF claim would additionally require an appropriate unavailable key, and the public \spec{} mask is not such a keyed construction. We do not prove an unrestricted masking or partial-exposure theorem. The direct-prefix calculation below is explicitly a procedural model.

\Needspace{7\baselineskip}
\section{Exact behaviour under partial exposure}\label{sec:exposure}
\begin{definition}[Direct-prefix model]\label{def:prefix}
Fix an exposed set $K\subseteq[T]$ independently of the selector trajectory, and put $p=|K|/T$. Assume selectors are independent and uniform, and ignore decoding errors. A direct decryptor advances at a position exactly when its selected index belongs to $K$; at the first missing key it stops. It does not guess plaintext or states, use external plaintext knowledge, exploit leakage or attack the primitive. Let $N$ be the number of recovered blocks, capped at~$L$.
\end{definition}

For this procedure the recovered set is always an initial segment: the decryptor needs $x_{i-1}$ to compute $t_i$ and $\kappa_i$, and $x_i$ cannot be formed without $m_i$. On the bundle, the walk can be followed only while it stays over the exposed part $K$ of the base. This follows from the stopping rule and requires no hash idealisation. It must not be reformulated as ``the state cannot be computed for every hash'': a constant or predictable state function is a counterexample, and the cryptographic difficulty of recovering an unknown next state is a separate question.

\begin{theorem}[Capped geometric prefix]\label{thm:prefix}
Under Definition~\ref{def:prefix},
\begin{align}
\Pr(N=j)&=p^j(1-p),&&0\le j<L,\label{eq:prefixlaw}\\
\Pr(N=L)&=p^L,&&\label{eq:prefixfull}\\
\E N&=\sum_{j=1}^Lp^j=\frac{p(1-p^L)}{1-p},&&0\le p<1.\label{eq:prefixmean}
\end{align}
For $p=1$, $N=L$.
\end{theorem}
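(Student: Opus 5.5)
The plan is to recast the stopping rule of Definition~\ref{def:prefix} as an event about i.i.d.\ indicators and then read off the law. For each position $i$ let $X_i=\mathbf 1[t_i\in K]$. Selectors are independent and uniform on $[T]$, and $K$ is fixed independently of the trajectory. Hence the $X_i$ are i.i.d.\ Bernoulli($p$), with $\Pr(X_i=1)=|K|/T=p$.

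The first step is the deterministic identification of the event. The direct decryptor recovers block $i$ exactly when it holds $x_{i-1}$ and $t_i\in K$. It holds $x_{i-1}$ exactly when it has recovered blocks $0,\dots,i-1$; decoding errors are ignored, so recovered blocks equal the true ones. By induction on $i$, the set of recovered blocks is therefore $\{0,\dots,N-1\}$. Here $N=\min\{i:X_i=0\}$ if such an $i<L$ exists, and $N=L$ otherwise. Consequently, for $0\le j<L$,
\[
\{N=j\}=\{X_0=\cdots=X_{j-1}=1,\ X_j=0\},
\]
and $\{N=L\}=\{X_0=\cdots=X_{L-1}=1\}$.

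Independence then gives \eqref{eq:prefixlaw} and \eqref{eq:prefixfull} immediately. For the mean, use the tail-sum formula for a variable taking values in $\{0,\dots,L\}$:
\[
\E N=\sum_{j=1}^{L}\Pr(N\ge j).
\]
The event $\{N\ge j\}$ is $\{X_0=\cdots=X_{j-1}=1\}$, so $\Pr(N\ge j)=p^j$. Summing this finite geometric series for $p<1$ yields the closed form in \eqref{eq:prefixmean}. For $p=1$ every $X_i=1$ almost surely, so $N=L$. For $p=0$ both the formula and the law give $N=0$, so that case needs no separate treatment. As a sanity check, the probabilities in the law sum to $(1-p^L)+p^L=1$.

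The only delicate point is the first step: the identification of $\{N\ge j\}$ with the first $j$ selectors lying in $K$. That identification must rely only on the stopping rule and on the stated assumption of independent uniform selectors. It must not rely on any claim that the hash state is hard to compute. In particular, we never need that the selector $t_i$ is independent of whether earlier blocks were recovered beyond what the i.i.d.\ assumption already gives; since the selector sequence is defined for the true stream regardless of the decryptor, the assumption applies directly. Everything after that is routine.
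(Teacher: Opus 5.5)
Your proposal is correct and matches the paper's proof. The paper likewise identifies $\{N=j\}$ with $j$ consecutive selections in $K$ followed by one outside it, and $\{N=L\}$ with $L$ hits, then uses independence and the tail-sum identity $\E N=\sum_{j=1}^L\Pr(N\ge j)=\sum_{j=1}^L p^j$; your induction for the initial-segment structure and your checks of $p=0$ and $p=1$ just spell out steps the paper leaves implicit.
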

\begin{proof}
Recovering exactly $j<L$ blocks requires $j$ consecutive selections in $K$ followed by one outside it; full recovery requires $L$ hits. The tail-sum identity gives $\E N=\sum_{j=1}^L\Pr(N\ge j)=\sum_{j=1}^Lp^j$.
\end{proof}

At $T=16$, $|K|=8$, $L=64$ the expected prefix is $1-2^{-64}$ blocks and full direct recovery has probability $2^{-64}$. Independently key-labelled blocks without a prefix-dependent mask instead permit $pL=32$ blocks on average in the analogous model. The comparison does not grant 64 additional security bits or bound an unrestricted adversary.

Real SHAKE trajectories are deterministic. Distinct fresh random-oracle inputs can motivate the independent-selector model, but input collisions, adaptive observations and known plaintext must be treated in any full analysis. Independence of the sampled secrets also does not create operational compartmentalisation: separate storage or access controls are required for meaningful per-fibre exposure.

\begin{figure}[t]
\centering
\begin{tikzpicture}
\begin{axis}[name=L,width=0.47\textwidth,height=5.4cm,xmin=-0.6,xmax=12.6,ymin=-0.02,ymax=0.8,
  xlabel={directly recovered prefix length $j$},ylabel={$\Pr[N=j]$},axis lines=left,
  tick label style={font=\scriptsize},label style={font=\small},
  legend style={font=\scriptsize,draw=none,at={(0.97,0.97)},anchor=north east}]
\addplot[zblue,thick,mark=*,mark size=1.5pt] coordinates {(0,0.75000) (1,0.18750) (2,0.04688) (3,0.01172) (4,0.00293) (5,0.00073) (6,0.00018) (7,0.00005) (8,0.00001) (9,0.00000) (10,0.00000) (11,0.00000) (12,0.00000)};\addlegendentry{$p=0.25$}
\addplot[zred,thick,mark=square*,mark size=1.5pt] coordinates {(0,0.50000) (1,0.25000) (2,0.12500) (3,0.06250) (4,0.03125) (5,0.01562) (6,0.00781) (7,0.00391) (8,0.00195) (9,0.00098) (10,0.00049) (11,0.00024) (12,0.00012)};\addlegendentry{$p=0.5$}
\addplot[zgold,thick,mark=triangle*,mark size=1.8pt] coordinates {(0,0.25000) (1,0.18750) (2,0.14062) (3,0.10547) (4,0.07910) (5,0.05933) (6,0.04449) (7,0.03337) (8,0.02503) (9,0.01877) (10,0.01408) (11,0.01056) (12,0.00792)};\addlegendentry{$p=0.75$}
\end{axis}
\begin{axis}[at={(L.east)},anchor=west,xshift=1.9cm,width=0.47\textwidth,height=5.4cm,ymode=log,xmin=0.5,xmax=15.5,ymin=0.04,ymax=150,
  xtick={1,4,8,12,15},xlabel={exposed keys out of 16},ylabel={expected recovered blocks},axis lines=left,
  tick label style={font=\scriptsize},label style={font=\small},
  legend style={font=\scriptsize,draw=none,at={(0.5,1.04)},anchor=south}]
\addplot[zred,thick,mark=*,mark size=1.5pt] coordinates {(1,0.0666667) (2,0.142857) (3,0.230769) (4,0.333333) (5,0.454545) (6,0.6) (7,0.777778) (8,1) (9,1.28571) (10,1.66667) (11,2.2) (12,3) (13,4.33333) (14,6.99864) (15,14.7589)};\addlegendentry{direct-prefix model}
\addplot[zgold,thick,dashed] coordinates {(1,4) (2,8) (3,12) (4,16) (5,20) (6,24) (7,28) (8,32) (9,36) (10,40) (11,44) (12,48) (13,52) (14,56) (15,60)};\addlegendentry{independent key-labelled blocks}
\draw[zblue] (axis cs:8,1) circle (3.5pt);
\node[font=\scriptsize,zblue] (o) at (axis cs:4.5,6) {1 block vs 32};
\draw[->,zblue] (o.south east) -- (axis cs:7.8,1.15);
\end{axis}
\end{tikzpicture}
\caption{Exact predictions of the direct-prefix model, $L=64$. Left: the prefix distribution \eqref{eq:prefixlaw}. Right: expected recovered blocks \eqref{eq:prefixmean} compared with independently key-labelled blocks; $|K|=16$ is omitted, since both equal $L=64$. Neither panel is a measured attack-success curve.}
\label{fig:prefix}
\end{figure}

\subsection{Several streams and leading random blocks}
Assume additionally that the $W$ stream trajectories are independent. For $1\le j\le L$,
\begin{equation}\label{eq:multistream}
\Pr\Bigl(\max_{a\le W}N_a\ge j\Bigr)=1-(1-p^j)^W,\qquad\E\max_{a\le W}N_a=\sum_{j=1}^L\bigl[1-(1-p^j)^W\bigr].
\end{equation}
With $\nu$ leading random blocks and $\ell$ framed-message blocks, let $R_a$ count the framed-message blocks directly recovered from stream $a$. Then
\begin{align}
\Pr(R_a\ge j)&=p^{\nu+j},&&1\le j\le\ell,\label{eq:padtail}\\
\E R_a&=p^\nu\,\frac{p(1-p^\ell)}{1-p},&&0\le p<1,\label{eq:padmean}\\
\E\max_{a\le W}R_a&=\sum_{j=1}^\ell\bigl[1-(1-p^{\nu+j})^W\bigr].\label{eq:padmax}
\end{align}
For $p=0$ recovery is zero; for $p=1$ it is $\ell$. These identities follow from consecutive hits, independence across streams and the tail-sum formula. Different real nonces do not by themselves prove the required independence. Without cross-stream independence, the union bound still yields $\Pr(\max_aR_a\ge j)\le\min(1,Wp^{\nu+j})$ when the marginal law holds. Without padding and for $0<p<1$, writing $a=\lceil\log W/\log(1/p)\rceil$ also gives $\E\max N_a\le\min\bigl(L,\,a+p/(1-p)\bigr)$ by splitting the tail sum.

At $p=1/2$, $\nu=8$, $\ell=64$, one stream has expected framed-block recovery $2^{-8}(1-2^{-64})$ and probability $2^{-9}$ of any such recovery. Across $W=256$ independent streams the probability becomes
\[
1-(1-2^{-9})^{256}=0.3937658\ldots
\]
The padding costs $8\cdot1536=12288$ ciphertext bytes. The sufficient model-level condition $Wp^{\nu+1}\le\alpha$ bounds the probability of any framed-block recovery by $\alpha$. A framed block includes metadata, so $R_a$ is not a count of recovered user bytes. Known blocks can be inserted directly into the state recurrence, so predictable zero padding cannot provide the intended barrier, and random padding does not remove guessing, side channels or candidate-message attacks.

\begin{figure}[t]
\centering
\begin{tikzpicture}
\begin{axis}[width=0.8\textwidth,height=5.6cm,xmode=log,xmin=0.8,xmax=1.3e5,ymin=-0.02,ymax=1.05,
  xlabel={independent observed streams $W$},ylabel={$\Pr[\max_a R_a\ge1]$},axis lines=left,
  tick label style={font=\scriptsize},label style={font=\small},
  legend style={font=\scriptsize,draw=none,at={(1.02,0.6)},anchor=west}]
\addplot[zred,thick] coordinates {(1,0.50000) (1.2155,0.56937) (1.4774,0.64086) (1.7957,0.71197) (2.1826,0.77973) (2.6529,0.84101) (3.2246,0.89302) (3.9194,0.93391) (4.7639,0.96319) (5.7904,0.98193) (7.0381,0.99239) (8.5547,0.99734) (10.398,0.99926) (12.638,0.99984) (15.362,0.99998) (18.672,1.00000) (22.695,1.00000) (27.585,1.00000) (33.529,1.00000) (40.754,1.00000) (49.535,1.00000) (60.209,1.00000) (73.182,1.00000) (88.951,1.00000) (108.12,1.00000) (131.41,1.00000) (159.73,1.00000) (194.15,1.00000) (235.98,1.00000) (286.83,1.00000) (348.64,1.00000) (423.76,1.00000) (515.07,1.00000) (626.05,1.00000) (760.95,1.00000) (924.91,1.00000) (1124.2,1.00000) (1366.4,1.00000) (1660.9,1.00000) (2018.8,1.00000) (2453.8,1.00000) (2982.5,1.00000) (3625.1,1.00000) (4406.2,1.00000) (5355.7,1.00000) (6509.7,1.00000) (7912.3,1.00000) (9617.2,1.00000) (11690,1.00000) (14208,1.00000) (17270,1.00000) (20991,1.00000) (25514,1.00000) (31012,1.00000) (37694,1.00000) (45816,1.00000) (55688,1.00000) (67688,1.00000) (82272,1.00000) (1e+05,1.00000)};\addlegendentry{$\nu=0$}
\addplot[zgold,thick,dashed] coordinates {(1,0.03125) (1.2155,0.03785) (1.4774,0.04582) (1.7957,0.05542) (2.1826,0.06695) (2.6529,0.08078) (3.2246,0.09731) (3.9194,0.11701) (4.7639,0.14037) (5.7904,0.16793) (7.0381,0.20025) (8.5547,0.23784) (10.398,0.28116) (12.638,0.33052) (15.362,0.38597) (18.672,0.44723) (22.695,0.51351) (27.585,0.58347) (33.529,0.65510) (40.754,0.72580) (49.535,0.79251) (60.209,0.85215) (73.182,0.90206) (88.951,0.94064) (108.12,0.96770) (131.41,0.98458) (159.73,0.99373) (194.15,0.99790) (235.98,0.99944) (286.83,0.99989) (348.64,0.99998) (423.76,1.00000) (515.07,1.00000) (626.05,1.00000) (760.95,1.00000) (924.91,1.00000) (1124.2,1.00000) (1366.4,1.00000) (1660.9,1.00000) (2018.8,1.00000) (2453.8,1.00000) (2982.5,1.00000) (3625.1,1.00000) (4406.2,1.00000) (5355.7,1.00000) (6509.7,1.00000) (7912.3,1.00000) (9617.2,1.00000) (11690,1.00000) (14208,1.00000) (17270,1.00000) (20991,1.00000) (25514,1.00000) (31012,1.00000) (37694,1.00000) (45816,1.00000) (55688,1.00000) (67688,1.00000) (82272,1.00000) (1e+05,1.00000)};\addlegendentry{$\nu=4$}
\addplot[zblue,thick,dashdotted] coordinates {(1,0.00195) (1.2155,0.00237) (1.4774,0.00288) (1.7957,0.00350) (2.1826,0.00426) (2.6529,0.00517) (3.2246,0.00628) (3.9194,0.00763) (4.7639,0.00927) (5.7904,0.01126) (7.0381,0.01367) (8.5547,0.01659) (10.398,0.02012) (12.638,0.02441) (15.362,0.02959) (18.672,0.03585) (22.695,0.04340) (27.585,0.05250) (33.529,0.06345) (40.754,0.07658) (49.535,0.09230) (60.209,0.11105) (73.182,0.13331) (88.951,0.15962) (108.12,0.19053) (131.41,0.22657) (159.73,0.26822) (194.15,0.31584) (235.98,0.36957) (286.83,0.42923) (348.64,0.49419) (423.76,0.56328) (515.07,0.63468) (626.05,0.70593) (760.95,0.77410) (924.91,0.83606) (1124.2,0.88896) (1366.4,0.93085) (1660.9,0.96111) (2018.8,0.98068) (2453.8,0.99175) (2982.5,0.99706) (3625.1,0.99916) (4406.2,0.99982) (5355.7,0.99997) (6509.7,1.00000) (7912.3,1.00000) (9617.2,1.00000) (11690,1.00000) (14208,1.00000) (17270,1.00000) (20991,1.00000) (25514,1.00000) (31012,1.00000) (37694,1.00000) (45816,1.00000) (55688,1.00000) (67688,1.00000) (82272,1.00000) (1e+05,1.00000)};\addlegendentry{$\nu=8$}
\addplot[black!60,thick,dotted] coordinates {(1,0.00012) (1.2155,0.00015) (1.4774,0.00018) (1.7957,0.00022) (2.1826,0.00027) (2.6529,0.00032) (3.2246,0.00039) (3.9194,0.00048) (4.7639,0.00058) (5.7904,0.00071) (7.0381,0.00086) (8.5547,0.00104) (10.398,0.00127) (12.638,0.00154) (15.362,0.00187) (18.672,0.00228) (22.695,0.00277) (27.585,0.00336) (33.529,0.00408) (40.754,0.00496) (49.535,0.00603) (60.209,0.00732) (73.182,0.00889) (88.951,0.01080) (108.12,0.01311) (131.41,0.01591) (159.73,0.01931) (194.15,0.02342) (235.98,0.02840) (286.83,0.03441) (348.64,0.04167) (423.76,0.05042) (515.07,0.06094) (626.05,0.07358) (760.95,0.08871) (924.91,0.10677) (1124.2,0.12824) (1366.4,0.15364) (1660.9,0.18352) (2018.8,0.21843) (2453.8,0.25885) (2982.5,0.30517) (3625.1,0.35760) (4406.2,0.41603) (5355.7,0.47994) (6509.7,0.54828) (7912.3,0.61937) (9617.2,0.69089) (11690,0.75998) (14208,0.82351) (17270,0.87855) (20991,0.92289) (25514,0.95561) (31012,0.97731) (37694,0.98996) (45816,0.99628) (55688,0.99888) (67688,0.99974) (82272,0.99996) (1e+05,1.00000)};\addlegendentry{$\nu=12$}
\addplot[zblue,only marks,mark=*,mark size=1.8pt] coordinates {(256,0.3938)};
\node[font=\scriptsize,zblue] (w) at (axis cs:4000,0.1) {$W=256$, $\nu=8$: 0.3938};
\draw[->,zblue] (w.north west) -- (axis cs:290,0.38);
\end{axis}
\end{tikzpicture}
\caption{Probability $\Pr[\max_{a\le W}R_a\ge1]$ of any framed-block recovery in the independent-stream direct-prefix model, $p=1/2$. Leading random blocks reduce the per-stream probability; repeated observations increase the chance that some stream has a long exposed prefix.}
\label{fig:streams}
\end{figure}

\subsection{A known-key membership test}
Given an exposed $s_t$ and an observed pair $(u,v)$, compute $w=v-\ip{s_t}u$. Count the coefficients within circular distance less than $q/8$ of either $0$ or $\Delta$, call this count $C_t$, and accept membership if $C_t\ge192=3n/4$.

\begin{proposition}[Fixed-index membership experiment]\label{prop:membership}
Fix indices $t,j$ independently of the sampled key family and consider one fresh block encrypted at $j$. Its bit polynomial may be chosen from the public key and the disclosed $s_t$ before the fresh block coins. If $j=t$, the false-negative probability is at most
\begin{equation}\label{eq:fn}
\rho_{\mathrm{fn}}=2n\exp\Bigl(-\frac{417^2}{2(1537+4\cdot417/3)}\Bigr)=4.66028\ldots\times10^{-16}.
\end{equation}
If $j\neq t$, the false-positive probability is at most $\rho_{\mathrm{fp}}+\varepsilon_1+\varepsilon_2$, where
\begin{equation}\label{eq:fp}
\rho_{\mathrm{fp}}=\sum_{a=192}^{256}\binom{256}{a}\Bigl(\frac{1666}{3329}\Bigr)^a\Bigl(\frac{1663}{3329}\Bigr)^{256-a}=2.75003\ldots\times10^{-16}.
\end{equation}
The distinguishing gap is therefore at least $1-\rho_{\mathrm{fn}}-\rho_{\mathrm{fp}}-\varepsilon_1-\varepsilon_2$ in this fixed-index experiment.
\end{proposition}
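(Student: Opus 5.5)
The plan treats the two cases separately, with the false-negative bound coming from the correctness analysis and the false-positive bound from Module-LWE pseudorandomness plus an exact binomial count.

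\emph{Case $j=t$.} Theorem~\ref{thm:cancel} gives $w=\Delta\mu+\delta$ with $\delta=\ip{e_t}r+g-\ip{s_t}f$. If every $|\delta_c|\le416$, each coefficient lies within circular distance $416<q/8\approx416.1$ of $0$ or of $\Delta$. Then $C_t=n\ge192$, and membership is accepted.

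So a false negative requires some $|\delta_c|\ge417$. The bit polynomial may depend on $pk$ and $s_t$, but it enters only through $\Delta\mu$, so it does not affect $\delta$. The key family is independent of the fresh coins $r,f,g$. I will therefore apply the averaged Bernstein computation of \eqref{eq:var}--\eqref{eq:bernstein} with threshold $417$ in place of $832$. This uses variance $1537$ and uniform bound $4$, and averages over the key as well as the coins, which is legitimate because $t$ is fixed independently of the family.

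A union bound over the $n$ coefficients gives $n\cdot2\exp(\cdots)$, which is \eqref{eq:fn}. The factor $2n$ already includes both tails, and the exponent is exactly the Bernstein form.

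\emph{Case $j\neq t$.} The observed pair is $u=A^\top r+f$ and $v=\ip{b_j}r+g+\Delta\mu$. The test uses $s_t$, which is independent of $(s_j,e_j)$. I will run two hybrids.

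\begin{itemize}
\item[(i)] Replace $b_j$ by a uniform $z_j$ using one Module-LWE instance with $k$ rows, at cost $\varepsilon_1$. The reduction can simulate everything else, including $s_t$ and $b_t$ (it holds the matrix $A$) and the adversarial choice of $\mu$.
\item[(ii)] Replace $(A^\top r+f,\ \ip{z_j}r+g)$ by a uniform element of $\Rq^k\times\Rq$ using a $(k+1)$-row challenge $M=(A^\top;z_j^\top)$, at cost $\varepsilon_2$. Again $s_t$ and $b_t$ are simulated locally from $A$.
\end{itemize}

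The acceptance test is efficient given $s_t$, so each hybrid changes the acceptance probability by at most the corresponding advantage.

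In the final hybrid $v$ is uniform and independent of $u$, $s_t$ and $\mu$. Hence $w=v-\ip{s_t}u$ is uniform on $\Rq$, with $n$ independent uniform coefficients in $\ZZ_q$. It remains to count, for one uniform residue, the residues within circular distance $<q/8$ of $0$ or of $\Delta=1664$.

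\begin{itemize}
\item Near $0$: distance $\le416$ gives $833$ residues.
\item Near $1664$: the range $[1248,2080]$, whose wrap to $-1664$ is fine, gives $833$ residues.
\end{itemize}

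The two sets are disjoint, for a total of $1666$ good residues out of $3329$. Therefore $C_t\sim\mathrm{Bin}(256,1666/3329)$, and $\Pr(C_t\ge192)$ is exactly \eqref{eq:fp}. Adding the hybrid costs gives the false-positive bound $\rho_{\mathrm{fp}}+\varepsilon_1+\varepsilon_2$.

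The gap statement then follows by subtracting the two error probabilities.

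The main obstacle is the hybrid bookkeeping: the adversary chooses $\mu$ after seeing $pk$ and $s_t$, so each reduction must publish the full key family and $s_t$ consistently with the embedded challenge. Fixing $t$ and $j$ in advance, independently of the family, is what lets the challenge be planted at position $j$ with no guessing loss. A smaller point is to check carefully the boundary convention "distance less than $q/8$", which gives $833$ residues per centre.
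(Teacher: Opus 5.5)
Your proposal is correct and follows the paper's proof essentially step for step. For $j=t$ you contain every false negative in the event that some $|\delta_c|\ge417$, then apply the averaged Bernstein bound with a union over the $n$ coefficients. For $j\neq t$ you use the same two hybrids (a uniform $b_j$ at cost $\varepsilon_1$, then the $(k+1)$-row challenge with rows $(A^\top,b_j^\top)$ at cost $\varepsilon_2$, with $s_t$ and $b_t$ generated honestly from $A$), followed by the exact count of $2\cdot833=1666$ qualifying residues and the binomial tail.
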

\begin{proof}
If $j=t$ and every $|\delta_c|\le416$, then $C_t=n$. Apply the averaged Bernstein bound at threshold 417 and union-bound over coefficients; this bounds even the larger event $C_t<n$, which contains every false negative. If $j\neq t$, replace $b_j$ by a uniform vector, at cost $\varepsilon_1$, while generating the disclosed $s_t$, its public vector and the other keys honestly. Next embed a $(k+1)\times k$ challenge with matrix rows $(A^\top,b_j^\top)$ to replace the fresh pair by a uniform pair, at cost $\varepsilon_2$; the known-secret row $b_t$ can still be generated honestly using this $A$. In the uniform game $w$ is uniform in $\Rq$. Each of the two radius-$q/8$ regions contains 833 residues and the regions are disjoint, so 1666 residues qualify, and independent uniform coefficients give the binomial tail \eqref{eq:fp}.
\end{proof}

This precise experiment avoids silently applying a fixed-index reduction to an adaptively selected target. For whole-stream or adaptively selected tests, hybrid losses, conditioning and accumulated test errors must be accounted for separately. An exposed secret nevertheless gives a concrete way to investigate positions beyond a directly decrypted prefix, and may yield their masked blocks.

\subsection{Candidate-message attack and the correct interpretation of its probability}
Suppose two fully known candidate block streams first differ at position $j$. From either candidate and the public nonce the attacker computes a candidate walk. Assume, in addition to Definition~\ref{def:prefix}, that the compared states remain distinct and that the two post-divergence selector sequences are independent uniform sequences, across candidates and positions; this is a stronger idealisation than the one-stream prefix model.

A position $i>j$ supplies a \emph{membership witness} if its candidate selectors differ and at least one is exposed. It supplies a \emph{decoding witness} if the candidate selectors coincide and lie in $K$: decoding with the exposed key yields $\mu_i$, while the candidates predict $m_i\oplus\kappa_i$ from their distinct states, and these predictions differ unless the two masks happen to compensate the plaintext difference, an event ignored in this idealisation. If tests were exact, either witness would distinguish the candidates. A position therefore supplies no witness exactly when both candidate selectors lie outside $K$, with probability
\begin{equation}\label{eq:q0}
q_0=(1-p)^2 .
\end{equation}
Counting membership witnesses alone would give the larger value $q_0^{\mathrm{mem}}=p/T+(1-p)^2$. For $m=L-1-j$ independent compared positions the probability of no witness is $q_0^m$. At $T=16$, $p=1/2$, $m=32$ this is
\[
q_0^{m}=2^{-64}=5.42101\ldots\times10^{-20},\qquad\text{versus}\qquad (q_0^{\mathrm{mem}})^{m}=(9/32)^{32}=2.34942\ldots\times10^{-18}.
\]
Position $j$ itself, where both candidates share the state and hence the selector, supplies a further decoding witness with probability $p$; it is omitted here, so $q_0^m$ upper-bounds the no-witness probability of the whole comparison.
This is not the full attack's error probability. With perfect tests and random guessing when there is no witness, success is $1-q_0^m/2$. If $\varphi_{\mathrm{test}}$ bounds the probability that any test used is wrong under the actual experiment, a valid lower bound is
\begin{equation}\label{eq:guess}
\Pr(\text{successful guess})\ge1-\tfrac12q_0^m-\varphi_{\mathrm{test}}.
\end{equation}
A complete estimate must justify $\varphi_{\mathrm{test}}$, the extra idealisation and any state-collision loss; Proposition~\ref{prop:membership} alone does not supply all these bounds for adaptive whole-stream tests. If the candidates include the leading random blocks, the experiment also differs from one in which that padding is hidden challenger randomness.

\begin{inwords}
The prefix law describes an attacker who stops at its first missing key. A more capable attacker may compute walks from candidate messages and use an exposed key to test later positions. These capabilities are why the prefix law must not be advertised as unrestricted partial-compromise security.
\end{inwords}

\Needspace{7\baselineskip}
\section{Serial dependence and parallel work}\label{sec:serial}
The prescribed decryption loop needs a recovered block to compute the state selecting the next block. This is a dependency in that algorithm. A party holding all secrets can instead precompute
\begin{equation}\label{eq:table}
d_{i,t}=\Decode\bigl(v_i-\ip{s_t}{u_i}\bigr),\qquad0\le i<L,\ t\in[T],
\end{equation}
independently for every position and key, reading the whole section at every position rather than along the walk, and then traverse the table using the state recurrence.

\begin{figure}[htbp]
\centering
\begin{tikzpicture}[box/.style={draw=zred,rounded corners=2pt,align=center,font=\small,inner sep=5pt,minimum height=1.1cm,text width=3.2cm},>=Stealth]
\node[box] (a) {all ciphertext pairs\\ and secret section};
\node[box,right=0.8cm of a] (b) {parallel candidate\\ table $d_{i,t}$ for all $(i,t)$};
\node[box,right=0.8cm of b] (c) {walk-guided traversal\\ $m_i=d_{i,t_i}\oplus\kappa_i$};
\draw[->,zred,thick] (a)--(b); \draw[->,zred,thick] (b)--(c);
\node[below=0.15cm of b,font=\scriptsize,black!70] {work $LTW_{\mathrm{dec}}$, storage $LTn$ bits};
\node[below=0.15cm of c,font=\scriptsize,black!70] {$L$ dependent updates};
\end{tikzpicture}
\caption{Where the remaining dependency sits in the candidate-table algorithm. This is an explicit upper-bound algorithm with a work--storage trade-off, not an adversarial lower bound.}
\label{fig:table}
\end{figure}

\par\medskip\noindent\begin{minipage}{\linewidth}
\begin{lstlisting}[caption={Candidate-table decryption.}]
Parallel stage:
  for every (i,t), independently:
      d[i,t] <- Decode(v[i] - inner(s[t],u[i]))
Sequential traversal:
  x <- H(INIT,iv)
  for i = 0,...,L-1:
      t <- H(FIBRE,x)
      m[i] <- d[i,t] (+) H(MASK,x)
      x <- H(CHAIN,i,x,m[i])
  return Unframe(m[nu:])
\end{lstlisting}
\end{minipage}\par\medskip

With block-decoding work $W_{\mathrm{dec}}$ and depth $D_{\mathrm{dec}}$, stage~1 has work $LTW_{\mathrm{dec}}$ and depth $D_{\mathrm{dec}}$ with sufficient processors. Its fully materialised table contains $LTn$ bits, excluding indexing overhead; one batch of decodings is not one layer of bit operations. With only an exposed set $K$, the analogous table has $L|K|$ entries. The second stage still performs selector, mask and state computations at each position. At $T=16$, $L=64$ there are 1024 candidate decodings, a 32768-byte decoded table and 64 dependent \lab{CHAIN} updates, with more than 64 hash calls in total.

Proving a universal lower bound requires a computational model allowing preprocessing, parallel queries, speculative branches, known plaintext and limits on total work. The construction is neither a proof of sequential work nor a verifiable delay function; the literature on sequential work supplies relevant models~\cite{posw}.

\subsection{A conditional divergence calculation}
Suppose changing block $j$ gives a different state and all later compared state queries remain fresh and distinct in an ideal model. For $m=L-1-j$ later selectors, the probability that every pair of indices still matches is $T^{-m}$. Averaging over a uniformly chosen changed position gives, for $T>1$,
\begin{equation}\label{eq:divergence}
\Pr(\text{no selected-index divergence})=\frac1L\sum_{m=0}^{L-1}T^{-m}=\frac{1-T^{-L}}{L(1-T^{-1})}.
\end{equation}
At $L=64$, $T=16$ this is $1/60$ up to a relative error $16^{-64}$. Changing the last block cannot change an already-used selector, and $T=1$ gives probability one. The calculation concerns whether some later index differs; it does not assert that every later index changes or prove an avalanche property.

\Needspace{7\baselineskip}
\section{Explicit-family storage and work}\label{sec:storage}
If every coefficient occupies $\lceil\log_2q\rceil$ bits and all objects are listed explicitly, the payload sizes are
\begin{equation}\label{eq:sizes}
|pk|=\lceil\log_2q\rceil\,n(k^2+kT),\qquad|sk|=\lceil\log_2q\rceil\,nkT,\qquad|(u,v)|=\lceil\log_2q\rceil\,n(k+1)
\end{equation}
bits. These count $k^2+kT$, $kT$ and $k+1$ polynomials respectively. At the default parameters a polynomial occupies 384 bytes, giving 21888 public-key bytes, 18432 private-key bytes and 1536 bytes per ciphertext pair. Small secrets admit more compact encodings; these are residue-format sizes, not entropy lower bounds.

For polynomial $n$, $k$, $\log q$, $L$, polynomial explicit public-key length implies polynomial $T$ and polynomial candidate-table work. This says nothing about whether that work is practical, or about compressed or procedurally generated families. Public seed expansion needs a specified pseudorandom generation model, and deriving keys from a master secret changes the exposure analysis because master-secret compromise exposes the derived family.

Correct decryption does not imply injectivity of $s\mapsto b$. With $A=I$, the small pairs $(s,e)=(0,a)$ and $(a,0)$ yield the same $b=a$, and both can satisfy an appropriate correctness margin. Computational public-key pseudorandomness must not be confused with preservation of the secret's conditional Shannon entropy after the public key is known.

\Needspace{7\baselineskip}
\section{Three simpler baselines}\label{sec:baselines}
\paragraph{Independent KEMs with symmetric encryption.} Generate $T$ independent ML-KEM pairs and choose one for each message, or combine encapsulations through a separately analysed KEM combiner~\cite{fips203,combiners}. Encrypt bulk data with a suitable authenticated-encryption scheme. The choice is ordinarily fixed per message and expansion is close to one for large payloads. A KEM's IND-CCA property does not authenticate the sender; the surrounding protocol must specify that requirement.

\paragraph{Threshold secret sharing.} Split a payload into $T$ shares with a threshold scheme~\cite{shamir} and protect each under its intended key. Fewer than the threshold number of plaintext shares reveal no payload information in the ideal sharing model; composition with encrypted shares and metadata needs its own statement. This can give a stronger threshold guarantee than a direct-prefix law, at greater share volume and with different access semantics.

\paragraph{All-or-nothing transforms.} An all-or-nothing transform followed by encryption can make recovery depend on possession of all transformed blocks~\cite{rivest,boyko}. Exactly what a missing block hides depends on the transform, the adversarial model, brute-force costs and the encryption composition, and such transforms generally sacrifice immediate incremental plaintext release.

\medskip
For Z-Sigil, the exact 2048-byte example is 99912 bytes, about $48.79\times$ the user payload after header and framing; the often-quoted $48\times$ is the per-block ratio $1536/32$. Similarity of the underlying Module-LWE arithmetic does not make the security assumptions and full KEM analyses identical. The comparison motivates a research question, not a deployment advantage: a standard KEM with a suitable symmetric protocol remains the practical baseline against which any proposed successor should be evaluated.

\begin{table}[t]
\centering\small
\caption{Explicit storage and protocol distinctions. Standard ML-KEM-768 sizes are from FIPS~203~\cite{fips203}. The last column selects one instance; an all-instance combiner would have different ciphertext costs. AEAD overhead and sender authentication depend on the protocol.}
\label{tab:compare}
\begin{tabularx}{\textwidth}{@{}>{\raggedright\arraybackslash}p{2.4cm}>{\raggedright\arraybackslash}X>{\raggedright\arraybackslash}X>{\raggedright\arraybackslash}X@{}}
\toprule
Property & Z-Sigil & ML-KEM-768 + AEAD & 16 independent ML-KEM-768 + AEAD\\
\midrule
Public key & 21\,888 B & 1\,184 B & 18\,944 B\\
Private key & 18\,432 B & 2\,400 B & 38\,400 B\\
2048-byte payload & $65\cdot1536+72$ B & 1\,088 B + AEAD record & 1\,088 B + AEAD record\\
Bulk expansion & $48\times$ asymptotically & near $1\times$ & near $1\times$\\
Confidentiality claim & conditional CPA & standardised KEM + protocol & per-instance KEM + protocol\\
Message integrity & absent & AEAD, with protocol requirements & AEAD, with protocol requirements\\
Key selection & plaintext-fed, per block & \multicolumn{2}{l}{one KEM chosen per message}\\
Partial exposure & direct-prefix model only & depends on the compromised key & messages using compromised keys\\
\bottomrule
\end{tabularx}
\end{table}

\Needspace{7\baselineskip}
\section{Geometry and a transport research programme}\label{sec:transport}
\subsection{What the current geometry establishes}
Two geometric objects appear in this paper and they answer different questions. The key bundle $\EP$ with its sections $\sigma,\epsilon,\beta$ describes how key material is organised and how a message reads it (Section~3.2); the augmented lattice $\Lambda_A$ describes the recovery of a small section from its noisy image (Section~\ref{sec:lattice}). They are complementary constructions, not one lattice. The ciphertext equations currently use only a finite index set, a ring and a module. Independent operators $A_t$ in place of one $A$ would still act on a product over a discrete base; they would not by themselves make the bundle topologically nontrivial or curved. What would change the geometry is transport.

\subsection{Transport, holonomy and path dependence}
On a specified graph with vertex set $P$, assign an $\Rq$-linear isomorphism $U_e$ to each oriented edge $e$, with the reverse edge carrying $U_e^{-1}$. This is a discrete connection on $\EP$. Products along paths define parallel transport and products around closed paths define holonomy,
\[
\Hol(\gamma)=U_{e_\ell}\cdots U_{e_2}U_{e_1}\qquad\text{for a closed path }\gamma=e_1e_2\cdots e_\ell .
\]
Identity holonomy on all cycles means path-independent transport. Flatness can allow nontrivial monodromy; graph curvature needs additional face data (Appendix~\ref{app:curved}).

For an algebraic experiment, start from $b^\star=As^\star+e^\star$ and let $U_i$ be the transport applied after block $i$. Define
\[
V_0=I_k,\qquad V_{i+1}=U_iV_i,\qquad s^{(i)}=V_is^\star,\quad b^{(i)}=V_ib^\star,\quad e^{(i)}=V_ie^\star .
\]
If every $U_i\in GL_k(\Rq)$ commutes with $A$, then
\begin{equation}\label{eq:transported}
b^{(i)}=As^{(i)}+e^{(i)},
\end{equation}
so the cancellation identity of Theorem~\ref{thm:cancel} holds verbatim. Alternatively set $A^{(i)}=V_iAV_i^{-1}$, which gives $b^{(i)}=A^{(i)}s^{(i)}+e^{(i)}$ without commutation; operators publicly computable from the known state could be derived rather than stored, at additional computational cost. An $\Rq$-linear module map is not the same as a ring automorphism. Commutation is sufficient for the stated update; no classification of the centraliser is asserted, and scalar units $uI_k$ provide one simple commuting family.

Transport changes what exposure means. If $V_i$ is known and invertible, exposing $s^{(i)}$ exposes $s^\star=V_i^{-1}s^{(i)}$, and $V_i^{-1}b^{(i)}=b^\star$. Transported keys are correlated transforms of one sample, so they inherit neither the independent-family hybrid of Lemma~\ref{lem:hybrid} nor the independent-fibre exposure model; a successor would need a related-key and composition analysis.

\subsection{An integrality obstruction}
Correctness constrains transport more sharply than algebra does. Lemma~\ref{lem:fixedkey} charges a key through its squared norm, so a useful transport should act on integer coefficient representatives and should not inflate their length. Exact norm preservation leaves almost nothing.

\begin{proposition}[Integral isometries are signed permutations]\label{prop:isometry}
For every $N\ge1$, $O(N)\cap GL(N,\ZZ)$ is the group of signed permutation matrices, of order $2^NN!$.
\end{proposition}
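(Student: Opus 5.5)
The plan is to prove the two inclusions separately and then count. For the easy inclusion, a signed permutation matrix $M=D\Pi$, with $D$ diagonal with entries $\pm1$ and $\Pi$ a permutation matrix, has integer entries and determinant $\pm1$, so $M\in GL(N,\ZZ)$. It also satisfies $M^\top M=\Pi^\top D^2\Pi=I$, so $M\in O(N)$. The signed permutations are closed under products and inverses, so they form a subgroup of the intersection.

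For the converse, take $M\in O(N)\cap GL(N,\ZZ)$ and look at its columns. Orthogonality says each column $Me_j$ is an integer vector of Euclidean length one. Its squared length is a sum of squares of integers equal to $1$, so exactly one entry is $\pm1$ and the rest are zero; thus $Me_j=\pm e_{\pi(j)}$ for some index $\pi(j)$. Distinct columns of an orthogonal matrix are orthogonal, and $\pm e_a\perp\pm e_b$ forces $a\neq b$. Hence $\pi$ is injective on a finite set, so it is a permutation and $M$ is a signed permutation matrix. This argument uses only $M^\top M=I$ and integrality; invertibility over $\ZZ$ is then automatic, and the same reasoning applies equally to rows.

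For the order, each signed permutation is determined uniquely by the pair $(\pi,(\epsilon_j)_j)$ with $\epsilon_j\in\{\pm1\}$, since the nonzero pattern recovers $\pi$ and the signs recover the $\epsilon_j$. This bijection with $S_N\times\{\pm1\}^N$ gives $2^NN!$; equivalently, the group is the hyperoctahedral group $\{\pm1\}^N\rtimes S_N$.

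No step is a genuine obstacle. The only point needing care is the claim that an integer unit vector is $\pm e_a$, which follows because squared integer entries summing to $1$ leave exactly one nonzero entry. After that, injectivity of $\pi$ comes directly from orthogonality of the columns.
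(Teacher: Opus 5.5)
Your proof is correct and takes the same route as the paper's. Integer columns of unit length must be signed coordinate vectors, orthogonality of distinct columns forces distinct coordinates, and the converse is checked directly; your explicit bijection with $S_N\times\{\pm1\}^N$ also spells out the order $2^NN!$, which the paper states without separate justification.
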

\begin{proof}
Let $U$ be orthogonal with integer entries. Each column is an integer vector of Euclidean length one, hence of the form $\pm e_j$. Distinct columns are orthogonal, so they involve distinct $e_j$. Thus $U$ is a signed permutation matrix, and conversely every signed permutation matrix is orthogonal and unimodular.
\end{proof}

\begin{corollary}\label{cor:monomial}
A linear transport acting on integer coefficient representatives of $(s,e)\in\ZZ^{2kn}$ that preserves the squared norm $S$ of every key exactly is a signed permutation of coordinates. Among $\Rq$-linear maps commuting with $A$, the scalar monomials $\pm X^jI_k$ are of this kind, and they form a cyclic group of order $2n=512$.
\end{corollary}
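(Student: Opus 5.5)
The first claim reduces directly to Proposition~\ref{prop:isometry}. Write the transport as an integer matrix $U\in\ZZ^{2kn\times 2kn}$ acting on the coefficient vector $(\bar s,\bar e)$. I read ``linear transport acting on integer representatives'' as a $\ZZ$-linear bijection of $\ZZ^{2kn}$, so $U\in GL(2kn,\ZZ)$. Exact preservation of $S=\|\bar s\|_2^2+\|\bar e\|_2^2$ for every key then gives $\|Ux\|_2^2=\|x\|_2^2$ on the integer points that arise as keys. The support of $\CBD_\eta$ is $[-\eta,\eta]$ with $\eta\ge1$, so every coordinate vector $e_j$ and every sum $e_i+e_j$ is a possible key. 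Polarisation, $2\ip{Ux}{Uy}=\|U(x+y)\|^2-\|Ux\|^2-\|Uy\|^2$, applied to these vectors shows $U^\top U=I$. Hence $U\in O(2kn)\cap GL(2kn,\ZZ)$, and Proposition~\ref{prop:isometry} makes $U$ a signed permutation.

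For the second claim I would first check the ring-level fact. In the coefficient basis, multiplication by $X$ in $\ZZ[X]/(X^n+1)$ is the negacyclic shift $a_{n-1}\mapsto -a_0$. This is a signed permutation matrix, so $\pm X^j$ acts on each polynomial coordinate by a signed permutation. The map $\pm X^jI_k$ acts diagonally on $s$ and on $e$, and is therefore a signed permutation of all $2kn$ coordinates. It preserves $S$ exactly, and it lifts consistently to integer representatives because it sends representatives in $[-\eta,\eta]$ to representatives in the same range. It commutes with $A$ because it is a scalar in the commutative ring $\Rq$. For the group structure, $X^n=-1$ gives $X$ multiplicative order $2n$ in $\Rq$: the powers $X^j$ with $0\le j<n$ are distinct monomials, and $-1\ne 1$ since $q$ is odd. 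The set $\{\pm X^j\}=\{X^j:0\le j<2n\}$ is therefore the cyclic group $\langle X\rangle$ of order $2n=512$.

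The corollary says these maps ``are of this kind''. It asserts that they lie in the signed-permutation group, not that they exhaust the commuting isometries, so I would not attempt a classification of the centraliser; the text after~\eqref{eq:transported} explicitly disclaims one.

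The main obstacle is interpretive. Norm preservation is only imposed on actual keys, while the proposition needs orthogonality on all of $\RR^{2kn}$. Step one handles this by polarising over the small vectors $e_j$ and $e_i+e_j$, which lie in the key support.

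A second point is that the transport must act on $(s,e)$ jointly while still being $\Rq$-linear on each block. I would state that the transport acts as the same matrix on the $s$- and $e$-blocks, as $V_i$ does in~\eqref{eq:transported}. It is then block-diagonal, and the first part still applies.
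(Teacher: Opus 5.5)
Your proposal is correct and matches the paper's proof. Both polarise over the key-support vectors $e_i$ and $e_i+e_j$ to get an integer orthogonal matrix, apply Proposition~\ref{prop:isometry} with $N=2kn$, and use the negacyclic shift with $X^n=-1$, $X^{2n}=1$ for the cyclic group of order $2n$. (Your extra assumption $U\in GL(2kn,\ZZ)$ is harmless but unnecessary, since $U^\top U=I$ already supplies the integer inverse $U^\top$.)
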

\begin{proof}
The support of the key distribution contains every $e_i$ and every $e_i+e_j$, so preserving the norm of every key gives, by polarisation, an integer orthogonal matrix; the first statement is then Proposition~\ref{prop:isometry} with $N=2kn$. Multiplication by $X^j$ shifts coefficients negacyclically, which is a signed permutation, and $X^{2n}=1$ with $X^n=-1$.
\end{proof}

Every other integral transport changes some norms and therefore consumes noise budget. If $\|V\|_{\mathrm{op}}$ denotes the operator norm of a transport on integer representatives, then $S\bigl(V(\bar s,\bar e)\bigr)\le\|V\|_{\mathrm{op}}^2\,S(\bar s,\bar e)$, and the budget of the next subsection becomes a constraint on the holonomy group itself.

\subsection{A quantitative conditional noise budget}
Lemma~\ref{lem:fixedkey} needs no independence among the coefficients of the fixed key. It therefore applies to transported representatives $s',e'$ with $b=As'+e'\pmod q$, as long as block randomness is fresh and independent. If along a prescribed length-$L$ trajectory
\[
S_i=\|s^{(i)}\|_2^2+\|e^{(i)}\|_2^2\le S_{\max},
\]
then the conditional stream-failure bound is
\begin{equation}\label{eq:budget}
2Ln\exp\Bigl(-\frac{832^2}{\eta(S_{\max}+1)}\Bigr).
\end{equation}
For $L=64$, $n=256$, $\eta=2$ one has $2Ln=2^{15}$, so this is at most $2^{-128}$ if
\begin{equation}\label{eq:budgetcond}
\zbox{S_{\max}+1\le\frac{832^2}{2\cdot143\ln2}=3491.8466\ldots}
\end{equation}
In particular, $S_{\max}+1\le3491$ suffices. For $T=16$, the event that some untransported key of the family has $S\ge2400$ has probability below $2^{-194}$. A sufficient uniform condition on integer lifts is $\|V_i\|_{\mathrm{op}}^2\le3490/2400\approx1.454$, or $\|V_i\|_{\mathrm{op}}\le1.20589$. It is not necessary: realised key norms, rather than worst-direction expansion, enter \eqref{eq:budget}.

This is a sufficient certificate. Integral isometric holonomy lies in a signed-permutation subgroup; the scalar monomials need not exhaust all compatible isometries.

This is a correctness budget for realised representatives. It does not establish their norm bound, an unconditional failure probability or security of their distribution. If the norm condition can fail, its probability must be added, and enforcing it by rejecting keys changes the sampling law. Variance inflation alone does not justify reusing an earlier Bernstein bound when coefficients and dependencies change. Any useful extension must offer a functional or security benefit beyond relabelling and publicly invertible transformations.

\subsection{Curved Kähler geometry as a longer-term direction}\label{sec:curved}
The discrete picture has a continuous counterpart. Replace the finite base by a compact Kähler manifold $(M,\omega)$ and choose a \emph{holomorphic} vector bundle $E\to M$ with Hermitian metric $h$. Its Chern connection $\nabla$ is compatible with both structures. Keys may be values of a specified class of sections, and a walk becomes a path $c:[0,1]\to M$. Parallel transport $P_c:E_{c(0)}\to E_{c(1)}$ is the continuous analogue of the products $V_i$. Infinitesimal loop holonomy is controlled by the curvature $F_\nabla$; the Ambrose--Singer theorem~\cite{ambrose} identifies the holonomy Lie algebra through transported curvature values. The real first Chern class $c_1(E)_\RR=[\tfrac{i}{2\pi}\operatorname{tr}F_\nabla]$, when nonzero, obstructs topological triviality. Its vanishing does not imply a trivial bundle, zero curvature or trivial holonomy. Appendix~\ref{app:curved} separates these possibilities and gives concrete testbeds.

Three curvature regimes illustrate what such a continuous theory would have to meet.

\paragraph{1. Positive curvature: a continuous analogy.} CLWE uses a secret direction on $S^{n-1}$ and noisy samples related to parallel hyperplanes~\cite{clwe}. Reductions relate specified LWE and CLWE distributions, including classical hardness results~\cite{clwe-hard}. These depend on parameters and do not attribute hardness to curvature. The real sphere is not generally Kähler; this is an analogy, not a Kähler-bundle instantiation or a concrete security comparison.

\paragraph{2. Negative curvature: hyperbolic surfaces.} On a compact curve of genus $g\ge2$, unitary representations give flat bundles. Narasimhan--Seshadri identifies the associated polystable holomorphic bundles of degree zero~\cite{ns}. These can be holomorphically nontrivial while smoothly trivial. Nonabelian monodromy gives path dependence without curvature. Norm-preserving integral coefficient matrices are signed permutations (Proposition~\ref{prop:isometry}); other transports need a realised-noise analysis. Appendix~\ref{app:curved} develops these distinctions.

\paragraph{3. Ricci-flat: Calabi--Yau.} Yau's theorem supplies Ricci-flat Kähler metrics when the real first Chern class vanishes~\cite{yau}. Bochner makes harmonic one-forms and holomorphic tangent fields parallel; full holonomy $SU(d)$, $d\ge2$, forces them to vanish. This restricts \emph{holomorphic or parallel} sections, not smooth sections or finite key assignments. Smooth sections can interpolate prescribed values at finitely many points. Auxiliary bundles are a design choice, not a mathematical necessity. The earlier cryptographic failure was the scalar exposure in Section~\ref{sec:earlier}.

\medskip
Metric curvature and monodromy are distinct. A local system has flat transport even when the underlying manifold is curved: a torus fibration over a smooth locus $Y^\circ$ may supply a cohomology local system with monodromy $\rho:\pi_1(Y^\circ)\to GL(d_{\mathrm{loc}},\ZZ)$, whose reduction modulo $q$ yields finite modules and candidate transport maps. The Strominger--Yau--Zaslow picture motivates such fibrations~\cite{syz} without asserting that fibre metrics are flat, and Manin's treatment of mirror symmetry and quantisation of abelian varieties provides the closest arithmetic setting~\cite{manin-mirror}. None of these references substitutes for a cryptographic assumption.

A concrete research path follows: (i)~choose a graph or cell complex on $P$ and an arithmetically compatible action; (ii)~control realised norms and establish a correctness bound, using \eqref{eq:budgetcond} when applicable; (iii)~distinguish holonomy, curvature and monodromy; (iv)~give a related-key analysis replacing Lemma~\ref{lem:hybrid}; (v)~demonstrate a functional or security benefit. Failing the sufficient test \eqref{eq:budgetcond} does not rule out a construction; another analysis may suffice. Appendix~\ref{app:curved} develops nontrivial-bundle examples, compatible local equations and further integrality constraints.

\Needspace{7\baselineskip}
\section{Reproducibility and independent verification}\label{sec:repro}
This release provides a self-contained LaTeX manuscript and a separate numerical-verification supplement. The full encryption software is not distributed with this manuscript. Filenames associated with the reference implementation below identify retained development artefacts, not files supplied by the LaTeX source. The independent verifier recomputes the displayed numerical bounds and hash vectors. The end-to-end implementation records in Section~\ref{sec:fixture} were not rerun in this revision, and are not independently reproducible from the manuscript alone. No additional benchmark or cryptanalytic experiment is claimed.

\subsection{Three levels of evidence}
\paragraph{Specification.} Sections~\ref{sec:keygen}--\ref{sec:encdec} specify the ring algorithms; Appendix~\ref{app:profile} fixes the hash and wire profile. Its hash vectors are retained unchanged. Section~\ref{sec:fixture} records a separate implementation fixture with its own generator and draw order; a digest from a different generator need not agree. This revision independently verifies the hash profile, not the recorded end-to-end implementation digest.

\paragraph{\color{zred}Numerical verification.} \texttt{anc/verify\_math.py} imports no code from the prototype. It constructs the exact integer coefficients of \eqref{eq:Slaw} and checks their total, mean and variance; exhaustively checks all bit/error combinations with $|\delta_c|\le831$; verifies the boundary counterexample; computes membership-region cardinality and binomial tails; reproduces the hash vectors and both walks; checks exact rational prefix laws; and exercises the integer cancellation identity on an independent ring-arithmetic fixture. Exact arithmetic is used for finite combinatorial laws; logarithms and exponentials are numerical evaluations of proved expressions.

Writing $C(z)=(6+8z+2z^4)^h=\sum_sc_sz^s$, the identity $(6+8z+2z^4)C'=h(8+8z^3)C$ gives the efficient exact recurrence
\begin{equation}\label{eq:recurrence}
c_0=6^h,\qquad6s\,c_s=8(h-s+1)c_{s-1}+2(4h-s+4)c_{s-4},
\end{equation}
with coefficients outside the range set to zero. Division is exact, the tail probability is $16^{-h}\sum_{s\ge S_0}c_s$, and the verifier checks the normalisation $\sum_sc_s=16^h$ before using this law.

\paragraph{\color{zred}Reference implementation.} \texttt{anc/zsigil.py} supplies key generation, state functions, ring operations, encryption, decryption, 12-bit packing, strict ciphertext parsing, candidate-table decryption and membership counts. Operational sampling uses Python's \texttt{secrets} interface to the operating-system random source; \texttt{ShakeRNG} is a public deterministic fixture generator used only for reproducibility. \texttt{test\_reference.py} checks boundary message lengths, leading padding, table decryption, canonical residues and frames, malformed-ciphertext rejection, a membership fixture, bypass of a known plaintext block, hash vectors and exhaustive enumeration of a two-stream toy model. Tests establish behaviour on those cases, not negligible failure probabilities or security. The implementation uses quadratic convolution and is not constant-time.

\subsection{Recomputed figures and sizes}
All constants below were recomputed for this version from the formulas as printed. The variance is 1537. The 64-block averaged Bernstein bound is $5.18075\ldots\times10^{-53}$. At $S_0=2400$ the base-2 logarithms of the two terms of \eqref{eq:streambound} are $-194.346565\ldots$ for the exceptional-family term and $-193.055861\ldots$ for fresh noise; their sum has logarithm $-192.561444\ldots$. The Gaussian-heuristic scale is 547.161 against $\sqrt{1536}=39.192$. Exhaustive enumeration of the 1024 outcomes of two independent binary streams, each with two leading blocks and three framed blocks, gives mean maximum framed recovery $427/1024$, matching \eqref{eq:padmax}. At $p=1/2$, $L=64$, $W=1024$ the exact unpadded expected maximum prefix is $10.33345\ldots$ blocks, and with $\nu=8$, $\ell=64$, $W=256$ the expected maximum framed recovery is $0.85504\ldots$.

At $L=64$ the ciphertext has $72+64\cdot1536=98376$ bytes, and the full $T=16$ decoded table has 32768 bytes. A 2040-byte user payload produces exactly 64 framed blocks when $\nu=0$. Expansion is 48 per framed block, before header and length overhead.

\begin{table}[htbp]
\centering\small
\caption{Rounded theoretical values in the independent-selector direct-prefix model, $T=16$, $L=64$. These are finite-sum predictions, not measured recovery rates.}
\label{tab:prefix}
\begin{tabular}{@{}rrr@{}}
\toprule
Exposed keys out of 16 & Expected prefix & $\Pr(N=64)$\\
\midrule
1 & 0.0667 & $8.64\times10^{-78}$\\
4 & 0.3333 & $2.94\times10^{-39}$\\
8 & $1-2^{-64}$ & $5.42\times10^{-20}$\\
12 & 3.0000 & $1.01\times10^{-8}$\\
14 & 6.9986 & $1.94\times10^{-4}$\\
15 & 14.7589 & $1.61\times10^{-2}$\\
16 & 64 & 1\\
\bottomrule
\end{tabular}
\end{table}

\subsection{Deterministic end-to-end fixture}\label{sec:fixture}
The retained implementation record describes a demonstration with $T=2$, $\nu=0$, a 32-byte all-zero fixture seed and the ASCII message
\begin{center}\texttt{Z-Sigil/spec-v2 deterministic known-answer message.}\end{center}
The record reports two blocks and 3144 ciphertext bytes, with SHA-256 digest
\begin{center}\texttt{3f61bb8f44c02f2fc82ea4e8c194f89d2}\\ \texttt{b7a153c338831811b4557ea96f84fa6}\end{center}
where the line break is not part of the value. Public-key and secret-fixture digests and the table-decryption result are in \texttt{reproduction.json}. All fixture secrets are public and must not be used for operational encryption.

The fixture generator concatenates 64-byte SHAKE256 outputs on
\[
\texttt{ZSIGIL-FIXTURE}\,\|\,\mathrm{seed}\,\|\,\mathrm{U64}(\mathrm{counter}),
\]
incrementing the counter per chunk and discarding unused suffix bytes on each request. Key generation draws $A$ in row order, then $s_t,e_t$ in increasing $t$ and component/coefficient order. Framing draws leading blocks, encryption draws the nonce, and each block draws $r,f,g$ in that order. The source specifies individual sampling requests, which are needed to reproduce this implementation fixture exactly.

\subsection{Running the package and interpreting success}
For the numerical-verification supplement supplied with this revision:
\par\medskip\noindent\begin{minipage}{\linewidth}
\begin{lstlisting}
python3 anc/verify_math.py
\end{lstlisting}
\end{minipage}\par\medskip
This writes \texttt{anc/verification\_review.json} and needs only Python~3.11 or later with its standard library. The historical command \texttt{python3 anc/run\_all.py} instead refers to the separate, undistributed development package; it cannot be run from this LaTeX submission. Compile the manuscript with a TeX installation:
\par\medskip\noindent\begin{minipage}{\linewidth}
\begin{lstlisting}
latexmk -pdf -interaction=nonstopmode Z_Sigil_v2.tex
\end{lstlisting}
\end{minipage}\par\medskip The figures are drawn by TikZ/pgfplots from data embedded in the LaTeX source, so compilation needs no external figure files. Successful execution establishes reproducibility, not cryptographic validation; the checksum manifest helps detect accidental changes and is not a signature.

A separate retained default-family fixture, \texttt{extended\_check.py}, is recorded with $T=16$, a 2040-byte payload and 64 ciphertext blocks. The record reports that ordinary decryption and all 1024 candidate decodings followed by traversal recover the same message; the ciphertext has 98376 bytes and the decoded table 32768 bytes. This is a reproducibility check, not a failure-rate estimate.

If compression introduced reconstruction errors $\Delta u,\Delta v$, the residual would change by $\Delta v-\ip{s_t}{\Delta u}$, with the conservative representative bound
\[
\|\Delta v-\ip{s_t}{\Delta u}\|_\infty\le\|\Delta v\|_\infty+\|s_t\|_1\|\Delta u\|_\infty .
\]
An observed finite-run maximum is not a justified tail bound or compression budget.

\Needspace{7\baselineskip}
\section{Relation to established work}
The arithmetic belongs to the LWE, Ring-LWE and Module-LWE line~\cite{regev,lpr,ls}, using the public-key block pattern of~\cite{lp,kyber-spec,kyber-eurosp} and small-coefficient sampling familiar from constructions such as NewHope~\cite{newhope}. It introduces no new worst-case lattice reduction, and concrete attack estimation for these distributions and the shared matrix remains necessary~\cite{aps}.

Online encryption~\cite{online}, key privacy~\cite{keypriv}, KEM combiners~\cite{combiners}, threshold sharing~\cite{shamir} and all-or-nothing transforms~\cite{rivest,boyko} supply different comparisons. Forward-secure public-key encryption~\cite{chk} and ratcheting protocols~\cite{acd} evolve key material over time or over a conversation; Z-Sigil's family is fixed and the plaintext moves an access pointer. It has no claimed forward secrecy or post-compromise recovery, and the chain denotes a recurrence, not an established ratchet security property.

The geometric references~\cite{manin-theta,manin-mirror,syz,ambrose,ns,yau} and the CLWE line~\cite{clwe,clwe-hard} motivate the representation and the future transport programme. The current scheme is specified by finite algebra and its byte profile, and its confidentiality proof does not depend on the metric or curvature. The earlier proposal~\cite{rondelli1} is superseded for the reasons of Section~\ref{sec:earlier}.

\Needspace{7\baselineskip}
\section{Validation roadmap}\label{sec:roadmap}
\begin{enumerate}
\item Estimate classical and quantum attacks on the exact small-secret distributions and the shared-matrix family, including the $(k+T)$-row instance behind $\varepsilon_3$. Parameter resemblance to ML-KEM is not a substitute.
\item Independently review the composition proof and its losses. Define an unrestricted partial-exposure game including membership tests, known plaintext, guessing and leakage; do not substitute the direct-prefix law for its security bound.
\item Study parallel time and work with candidate tables, preprocessing, speculative branches and quantum queries where appropriate. Any lower bound must name its model and resource limits.
\item Compare a shared $A$ with independent $A_t$. Explicit storage increases by $(T-1)k^2$ ring elements, but this alone does not prove a $T$-fold attack cost.
\item Investigate compatible transport following the five steps of Section~\ref{sec:curved}, with a controlled key/noise distribution, a related-key analysis and a demonstrable benefit. Distinguish curvature, monodromy and finite actions.
\item Develop cross-language byte tests and independent implementations. Authentication or IND-CCA requires a separate protocol and proof; compression and implementation hardening require separate correctness and side-channel analyses.
\end{enumerate}

\Needspace{7\baselineskip}
\section{An invitation to verify}
The mathematical claims and executable checks invite different kinds of independent review. A numerical verifier can recompute displayed constants, and Appendix~\ref{app:profile} lets a separate implementation reproduce the hash vectors byte for byte. It cannot certify a security reduction. The author welcomes independent recomputation and disagreement where results differ. Section~\ref{sec:repro} distinguishes the checks repeated for this revision from the retained implementation records.

Three research directions follow. First, concrete cryptanalysis of the shared-matrix family and its exact small-secret distributions is needed before any parameter recommendation. Second, a Fujisaki--Okamoto-style successor needs its own construction and proof; authentication, implementation hardening and protocol analysis are further requirements for deployment. Third, the transport programme in Section~\ref{sec:transport} and Appendix~\ref{app:curved} supplies explicit arithmetic compatibility conditions, integral-transport obstructions and sufficient noise certificates. A construction that misses a sufficient certificate may still admit a sharper analysis. Appendix~\ref{app:seriality} similarly separates the exact restricted fragment-recovery gain from the still-open question of additional adversarial cost.

Corrections, failed reproductions, attacks and counterexamples are all welcome before the architecture is built upon. Correspondence may be addressed to
\begin{center}\href{mailto:andrearondelli85@gmail.com}{\texttt{andrearondelli85@gmail.com}}\quad or through\quad\href{https://www.zsigil.com}{\texttt{www.zsigil.com}}.\end{center}

\Needspace{7\baselineskip}
\section{Conclusion}
Z-Sigil combines a fixed family of module-lattice keys with a hash state updated from successive plaintext blocks. The state selects the key for each block; after recovering that block, the receiver updates the state to determine the next selection. A fibre bundle over a finite set of points on a flat K\"ahler torus describes the key family and the resulting walk.

The paper specifies key generation, encryption and decryption, proves correct recovery under an explicit noise condition, and bounds the probability of decoding failure. It also provides an IND-CPA confidentiality reduction for the complete chain under stated decisional Module-LWE assumptions. This reduction applies to quantum adversaries when those assumptions hold against quantum algorithms, with classical public keys, messages and ciphertexts. The construction therefore has a conditional post-quantum confidentiality result, although no concrete security level is established for the proposed parameters.

The additional protection that chaining or geometric structure may provide remains an open research question. The restricted key-exposure results do not establish security against unrestricted attacks, and the construction provides neither authentication nor chosen-ciphertext security. The pseudocode, byte-level specification and test vectors provide a basis for independent implementation and verification. The next steps are concrete cryptanalysis, comparative evaluation of chaining, and extensions in which geometric transport has an explicit computational role. This work provides a specified construction and testable questions for those investigations.

\paragraph{\color{zred}Acknowledgements.} The author thanks the interlocutors who contributed critical feedback during the cryptanalysis and redesign. Responsibility for the construction and any remaining errors is the author's.

\clearpage
\appendix
\Needspace{7\baselineskip}
\section{Concrete hash and encoding profile}\label{app:profile}
\spec{} fixes $n=256$, $k=3$, $q=3329$, $\eta=2$, permits $1\le T\le256$ and $0\le\nu\le256$, and uses the following conventions. The default is $T=16$, $\nu=0$.

\subsection{Tuple encoding and domain separation}
Let U32, U64 be unsigned big-endian four- and eight-byte encodings. For byte strings $a_0,\dots,a_j$ of lengths below $2^{32}$, set
\[
F(a_0,\dots,a_j)=\mathop{\big\Vert}_{h=0}^{j}\bigl(\mathrm{U32}(|a_h|)\,\|\,a_h\bigr).
\]
The fixed 24-byte context is
\[
c=\mathrm{U32}(n)\|\mathrm{U32}(k)\|\mathrm{U32}(q)\|\mathrm{U32}(\eta)\|\mathrm{U32}(T)\|\mathrm{U32}(\nu).
\]
Let $D(lab;a_1,\dots,a_j;z)$ be the first $z$ output bytes of SHAKE256~\cite{fips202} on
\[
\texttt{Z-Sigil/spec-v2}\,\|\,\texttt{00}\,\|\,F(c,lab,a_1,\dots,a_j).
\]
The prefix and labels are literal ASCII; \texttt{00} denotes one zero byte, not two ASCII zero characters. Display spacing and line breaks add no bytes, and $L$ is not a context field.

\subsection{States, masks and index selection}
States, nonces and blocks contain 32 bytes, with bits least-significant first within each byte. Define
\begin{align*}
H(\lab{INIT},iv)&=D(\lab{INIT};iv;32),\\
H(\lab{CHAIN},i,x,m)&=D(\lab{CHAIN};\mathrm{U64}(i),x,m;32),\\
H(\lab{MASK},x)&=D(\lab{MASK};x;32).
\end{align*}
For \lab{FIBRE}, return 0 if $T=1$. Otherwise put $B=T\lfloor2^{256}/T\rfloor$. For $j=0,\dots,2^{64}-1$, interpret $D(\lab{FIBRE};x,\mathrm{U64}(j);32)$ as a big-endian integer $y$ and return $y\bmod T$ at the first $y<B$; signal an error on exhaustion.

For supported power-of-two $T$, including the default, $B=2^{256}$ and the first trial always succeeds, so Theorem~\ref{thm:composition} applies directly to these bounded, total selectors. For other $T$, aborts must be modelled separately; efficient totality is not obtained merely from a finite but enormous counter loop. Rejection removes modulo bias for independent uniform trial digests, conditional on success, but a deterministic hash does not prove independence of real selectors.

\subsection{Framing, packing and strict parsing}
Prepend $\nu$ independent uniform blocks, then frame $\mathrm{U64}(|msg|)\|msg$, appending the fewest zero bytes needed for a multiple of 32. Require $L<2^{64}$; the prototype enforces $\nu+1\le L\le32768$. Reject excessive declared message lengths, nonzero padding and extra whole zero blocks; the last check enforces minimal padding.

Pack coefficients $a_j\in\{0,\dots,3328\}$ into the integer $\sum_{j=0}^{255}a_j2^{12j}$, serialised little-endian in exactly 384 bytes. Vectors list polynomials in component order and matrices in row order. Raw public-key payloads list $A$ followed by $b_0,\dots,b_{T-1}$; private-key payloads list $s_0,\dots,s_{T-1}$. Consistent parameter metadata must accompany use of these raw payloads.

A ciphertext consists of the eight-byte ASCII tag \texttt{ZSIGIL02}, the 24-byte context, $\mathrm{U64}(L)$, the 32-byte nonce and $L$ packed pairs, each listing the $k$ polynomials of $u_i$ and then $v_i$; the header is 72 bytes. Reject tag or context mismatches, noncanonical coefficients, length mismatches and trailing bytes, and bound allocations from untrusted lengths. These checks provide no cryptographic integrity.

Uniform coefficients of $A$ use independent 12-bit draws with rejection of values $\ge3329$. A $\CBD_2$ coefficient is $a_1+a_2-a_3-a_4$ for four fresh unbiased bits. All coefficients of $s_t,e_t,r,f,g$ require independent draws. A deterministic experiment must also specify its generator and draw order.

\subsection{Known-answer vectors}\label{app:kat}
At $T=16$, $\nu=0$, with a zero nonce and a zero first block, $t_0=14$ and
\begin{center}\small
\begin{tabular}{@{}ll@{}}
\toprule
Value & Hexadecimal bytes (concatenate both lines)\\
\midrule
$x_{-1}$ & \texttt{48ce8f836fd66e1295e1be7979be18ca}\\ & \texttt{e7364b1d0305cd6a490cd88a45c2289f}\\[2pt]
$\kappa_0$ & \texttt{b3615e308fa16c9d42f1bf9f2c9feb7f}\\ & \texttt{b3a910b0bd39776d31b5ce2a681bf4d0}\\[2pt]
$x_0$ & \texttt{1bae58453673bf237f54441023dff875}\\ & \texttt{1f074fc76dcb54078b78b50042cd8369}\\
\bottomrule
\end{tabular}
\end{center}
The walks of Figure~\ref{fig:trace} are
\[
(14,2,3,13,3,5,1,14)\quad\text{and}\quad(14,2,3,13,7,7,13,1).
\]
These require no lattice operations. They test domain separation, tuple encoding, bit order and the one-block delay, and can be reproduced independently from this appendix; all values above were recomputed from the specification for this version.

\Needspace{40\baselineskip}
\section{Claim-to-artefact checklist}\label{app:checklist}
\begin{center}\small
\captionof{table}{What each artefact establishes. Successful execution cannot replace review of a cryptographic reduction.}
\begin{tabularx}{\textwidth}{@{}>{\raggedright\arraybackslash}p{3.6cm}>{\raggedright\arraybackslash}p{2.6cm}>{\raggedright\arraybackslash}X@{}}
\toprule
Claim or object & Evidence & How to check\\
\midrule
Block and chained inversion & proof & Propositions~\ref{prop:blockinv} and~\ref{prop:chaininv}; round trips\\
Cancellation and margin & proof and arithmetic & Theorem~\ref{thm:cancel}; exhaustive margin and integer fixtures\\
Family-conditioned failure & conditional upper bound & Theorem~\ref{thm:stream}; exact norm coefficients and evaluated exponentials\\
Augmented-lattice membership & proof & Proposition~\ref{prop:planted}; substitute the key relation into \eqref{eq:auglattice}\\
547 versus 39 scales & heuristic & evaluate the Gaussian heuristic; not an attack estimate\\
Chained IND-CPA & conditional reduction & Theorem~\ref{thm:composition}; human proof review required\\
Prefix and padding laws & exact model formulas & rational checks; two-stream exhaustive enumeration\\
Membership constants & fixed-index experiment & \eqref{eq:fn}--\eqref{eq:fp}; verifier and prototype fixture\\
No-witness probability & ideal calculation & \eqref{eq:q0}; excludes accumulated test errors\\
Integral isometries & proof & Proposition~\ref{prop:isometry}\\
Noise budget & conditional bound & \eqref{eq:budgetcond}; arithmetic only\\
State walk & hash fixture & Appendix~\ref{app:kat}; independent verifier and prototype\\
Wire format and round trips & specification; retained test records & Appendix~\ref{app:profile}; development records not rerun or distributed here\\
Numerical figures & derived plots & data embedded in the source; \texttt{verification\_review.json}\\
Concrete security, authentication, unrestricted exposure, sequential lower bounds & not established & separate analyses; Section~\ref{sec:roadmap}\\
\bottomrule
\end{tabularx}
\end{center}

\Needspace{7\baselineskip}
\section{What seriality gains, and how to measure it}\label{app:seriality}
The hash recurrence is an architectural feature to evaluate against specified baselines. This appendix separates three questions: how many fragments an exposed-key procedure recovers, how much work a complete-key decryptor performs, and whether unrestricted attackers incur additional cost. The first admits an exact comparison. The second admits explicit upper-bound algorithms. The third remains open; none of the following ratios is an additional number of security bits.

\subsection{An exact comparison with independent key-labelled blocks}
Use the same family size $T$, exposed set $K$, block count $L$ and exposed fraction $p=|K|/T$ as Definition~\ref{def:prefix}. Couple both procedures to the same independent uniform indices $t_0,\dots,t_{L-1}$. Ignore decoding failures. The baseline publishes the index of each block and has no unavailable prefix-dependent mask; it recovers every block whose index lies in $K$. Let $B$ count these recovered blocks. Z-Sigil's stipulated direct-prefix procedure recovers $N$ consecutive initial blocks, then stops.

\begin{proposition}[Fragment suppression, not a full-recovery advantage]\label{prop:fragment}
In this coupled model, $N\le B$ pointwise, $B\sim\mathrm{Binomial}(L,p)$, and
\begin{gather}
\E B=Lp,\qquad\E N=\sum_{j=1}^Lp^j=\frac{p(1-p^L)}{1-p},\quad0\le p<1,\label{eq:fragmean}\\
\Pr(B=L)=\Pr(N=L)=p^L.\label{eq:fragfull}
\end{gather}
For $0<p<1$, the ratio of expected recovered-block counts is
\begin{equation}\label{eq:gfrag}
G_{\mathrm{frag}}:=\frac{\E B}{\E N}=\frac{L(1-p)}{1-p^L}.
\end{equation}
\end{proposition}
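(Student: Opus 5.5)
The plan is to express both counts as functions of the single coupled sequence of hit indicators, and then read off every claim. Put $Y_i=\mathbf 1[t_i\in K]$ for $0\le i<L$. Since the $t_i$ are independent and uniform on $[T]$ and $K$ is fixed independently of them, the $Y_i$ are i.i.d.\ Bernoulli($p$). The baseline recovers exactly the blocks with $Y_i=1$, so $B=\sum_{i<L}Y_i$. Under Definition~\ref{def:prefix}, the direct-prefix procedure recovers
\[
N=\max\{j\le L:\ Y_0=\cdots=Y_{j-1}=1\},
\]
that is, the length of the initial run of ones. Pointwise domination follows at once, because each of the first $N$ indicators equals one. This gives $N=\sum_{i<N}Y_i\le\sum_{i<L}Y_i=B$ on every outcome.

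The distributional facts come next. The claim $B\sim\mathrm{Binomial}(L,p)$ is immediate from the i.i.d.\ Bernoulli indicators, and linearity of expectation gives $\E B=Lp$. For $N$, the marginal law of the index sequence is exactly the one used in Theorem~\ref{thm:prefix}, so the formula for $\E N$ in \eqref{eq:fragmean} is \eqref{eq:prefixmean} quoted verbatim; if a self-contained step is wanted, I will redo the tail sum $\Pr(N\ge j)=p^j$. For \eqref{eq:fragfull}, note that both $\{B=L\}$ and $\{N=L\}$ are the same event $\{Y_0=\cdots=Y_{L-1}=1\}$, whose probability is $p^L$. This matches \eqref{eq:prefixfull}.

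The ratio \eqref{eq:gfrag} is then algebra. For $0<p<1$ we have $\E N>0$, and
\[
\frac{Lp}{p(1-p^L)/(1-p)}=\frac{L(1-p)}{1-p^L}.
\]
The restriction $0<p<1$ excludes the two degenerate cases. At $p=0$ the ratio is $0/0$, and at $p=1$ the closed form for $\E N$ is undefined, though in that case $N=B=L$.

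No step is a genuine obstacle. The only point needing care is stating the coupling explicitly, so that $N\le B$ is a pointwise statement on a common probability space rather than only a comparison of laws. I will also note that the equality in \eqref{eq:fragfull} is an equality of events, not a coincidence of numbers.
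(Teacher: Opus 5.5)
Your proposal is correct and follows the paper's proof essentially step for step. Both define the hit indicators $\mathbf 1_{\{t_i\in K\}}$, take $B$ as their sum and $N$ as the initial run of ones, obtain $N\le B$ pointwise and identify $\{B=L\}=\{N=L\}$ as the all-hits event, then use independence for the binomial law and the tail-sum identity for $\E N$ and the ratio.
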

\begin{proof}
Put $I_i=\mathbf 1_{\{t_i\in K\}}$. Then $B=\sum_iI_i$ and $N$ is the length of the initial run of ones. Hence $N\le B$, and both counts equal $L$ exactly when every $I_i=1$. Independence gives the binomial law and the probability $p^L$. The tail-sum identity gives $\E N$ and then the ratio. Repeated use of a key does not invalidate this argument: $K$ is fixed and the selectors, rather than the key values, are independent.
\end{proof}

At $T=16$, $|K|=8$ and $L=64$, the baseline recovers 32 blocks on average, while the prefix procedure recovers $1-2^{-64}$; thus $G_{\mathrm{frag}}=32/(1-2^{-64})$. Both recover the complete stream with probability $2^{-64}$. The demonstrable gain is less recovered fragment volume under the stated stopping rule, not a better full-stream recovery probability in this comparison.

\subsection{Known plaintext and the cost of guessing}
The comparison in \eqref{eq:fragmean} changes when missing blocks are predictable. Knowing the current state and a candidate value of $m_i$ lets a party compute $x_i$ directly, without the missing secret. Membership tests and candidate-message attacks from Section~\ref{sec:exposure} can then test positions beyond the first unavailable key. A framing block, file header, repeated content or external copy of a message can supply such candidates. These are part of the threat model, not implementation faults.

If the unknown block has conditional min-entropy $h$ given the adversary's view, the success probability of the best single guess is at most $2^{-h}$ by definition. This is not a lower bound of $2^h$ on attack work: multiple guesses, structured distributions, verifiable side information and quantum queries require their own analyses. An $n$-bit block need not contain $n$ bits of conditional entropy. Leading random blocks therefore need an explicit conditional-entropy argument before they can support an unrestricted exposure claim.

The finite family also matters. Under independent selectors, the number $D_L$ of distinct keys visited satisfies
\begin{equation}\label{eq:distinct}
\E D_L=T\bigl(1-(1-1/T)^L\bigr),\qquad D_L\le T.
\end{equation}
Indeed, each fixed index is visited with probability $1-(1-1/T)^L$. At $T=16$, $L=64$, the expectation is about 15.74. Sixty-four blocks do not create sixty-four independent long-term secret keys. Offline analysis of $(A,b)$ can target the entire family independently of the plaintext walk.

\subsection{Work and depth for a party holding all secrets}
Let $W_{\mathrm{dec}},D_{\mathrm{dec}}$ be the work and depth of one lattice decoding. Let $W_H,D_H$ aggregate selector evaluation, mask generation, XOR, \lab{CHAIN} evaluation and lookup for one traversal step. These symbols count specified computations; they are not measured timings or lower bounds for arbitrary algorithms. For the default power-of-two $T$, selector evaluation has bounded cost as in Appendix~\ref{app:profile}.
\begin{center}\small
\begin{tabular}{@{}llll@{}}
\toprule
Procedure & Work & Parallel depth & Decoded-table storage\\
\midrule
Independent labelled baseline & $LW_{\mathrm{dec}}$ & $D_{\mathrm{dec}}$ & not required\\
Prescribed chain & $L(W_{\mathrm{dec}}+W_H)$ & $L(D_{\mathrm{dec}}+D_H)$ & not required\\
Candidate table, then walk & $LTW_{\mathrm{dec}}+LW_H$ & $D_{\mathrm{dec}}+LD_H$ & $LTn$ bits\\
\bottomrule
\end{tabular}
\end{center}
Depths assume enough processors. The table omits input/output storage and scheduling overhead. At the default example, the table method uses 1024 rather than 64 lattice decodings and 32768 bytes of decoded entries. This is the cost of that explicit strategy, not a proof that every attacker must pay a factor~16. The legitimate receiver also pays the chain dependency; an attacker may amortise preprocessing over many messages.

\subsection{An experiment that could establish a practical benefit}
A useful evaluation must hold the underlying arithmetic and resources fixed. Compare: (i)~independent labelled blocks under the same family; (ii)~a single-key or constant-selector variant; (iii)~the full plaintext-fed selector and mask; and (iv)~a conventional KEM plus authenticated symmetric encryption for operational cost. The first three isolate architectural effects; the fourth supplies the practical baseline of Section~\ref{sec:baselines}.
\begin{enumerate}
\item \emph{Declare the task.} Measure fragment recovery, full-message recovery, distinguishing advantage or elapsed decryption time separately. A change in one is not automatically a change in another.
\item \emph{Declare disclosure and side information.} Fix $K$ before the experiment, then separately study adaptive exposure; include known prefixes, low-entropy candidate sets, fresh random payloads and hidden leading random blocks. Keep training messages distinct from test messages.
\item \emph{Run competing procedures.} Include the direct decryptor, a known-plaintext bypass, membership tests, candidate-message traversal and complete candidate tables. Count failed tests, guesses and all precomputation.
\item \emph{Control resources.} Fix processor count, memory, total work, public-key family and message distribution. Report preprocessing time separately and amortised over a specified number of messages. Count lattice products and hash calls as well as wall-clock time.
\item \emph{Report uncertainty.} Repeat independent trials and report distributions and confidence intervals for empirical results. Compare simulations of the restricted model with its exact formulas. A measured slowdown of one implementation is not an adversarial lower bound.
\item \emph{State a success criterion in advance.} A practical benefit requires improved attack success versus time/memory at an acceptable legitimate-user cost, after the stronger procedures are included. If it disappears under candidate testing or preprocessing, report that outcome.
\end{enumerate}
No such end-to-end comparative benchmark is asserted in this release. The exact restricted-model comparison \eqref{eq:fragmean}--\eqref{eq:gfrag} and the candidate-table work counts are established; additional security against unrestricted classical or quantum adversaries is not. Theorem~\ref{thm:composition} remains a conditional confidentiality result: its proof works even with constant public state functions, so it cannot by itself certify a security improvement caused by the hash recurrence.

\paragraph{A sharper research question.} For which message distributions, conditional entropy assumptions, key-isolation mechanisms and time--memory budgets does plaintext-fed access reduce an adversary's recoverable information more than it increases legitimate processing cost? A sequential-work claim would additionally need a precise classical or quantum query model, including preprocessing, speculation and possible state collisions. This is the criterion by which the seriality proposal should be tested.

\Needspace{7\baselineskip}
\section{Development toward curved, nontrivial key bundles}\label{app:curved}
The current construction uses a product over a finite base. A curved manifold and a nontrivial bundle are meaningful future ingredients only if the arithmetic and security experiment retain information that cannot be removed by a public change of coordinates. This appendix makes the choices and obstacles explicit. It specifies research directions, not a new cryptosystem or an extension of Theorem~\ref{thm:composition}.

\subsection{Three kinds of nontriviality}
The base metric, the bundle and its connection are different objects. A curved base can carry a trivial bundle. A topologically trivial bundle can carry a nonflat connection. A flat connection can have nontrivial monodromy around noncontractible loops. On a graph, products around all cycles equal to the identity mean trivial holonomy and path-independent transport within each connected component. To define discrete curvature, one must additionally specify faces of a cell complex and impose or measure holonomy around their boundaries.
\begin{center}\small
\begin{tabularx}{\textwidth}{@{}>{\raggedright\arraybackslash}X>{\raggedright\arraybackslash}X>{\raggedright\arraybackslash}X@{}}
\toprule
Testbed & Geometric content & What must be solved\\
\midrule
$M=\mathbb{CP}^1$, $E=\mathcal O(1)\oplus\mathcal O^{k-1}$ & Compact Kähler base; $\int_Mc_1(E)=1$, so $E$ is topologically nontrivial & Arithmetic charts, coefficient lattice, compatible public operators and noise\\
Genus-$g\ge2$ compact curve; degree-zero unitary bundle & Nontrivial holomorphic structure and possibly nonabelian monodromy with a flat connection & Distinguish holomorphic nontriviality from smooth topology; analyse correlated keys\\
Ricci-flat Kähler base with an auxiliary bundle & Base Ricci flatness need not make the auxiliary bundle or its Chern connection flat & Specify the bundle, section class, integral model and cryptographic use of transport\\
\bottomrule
\end{tabularx}
\end{center}
For the first testbed, the degree-one transition of $\mathcal O(1)$ has winding number one on the overlap of the standard sphere charts, giving its first Chern number. This is a concrete example of nontriviality, not a hardness statement. Global holomorphic endomorphisms and sections are constrained; they cannot be replaced silently by unrestricted independent local matrices and secrets.

In the second testbed, the Narasimhan--Seshadri correspondence~\cite{ns} links unitary representations to polystable holomorphic bundles of degree zero. Over a compact surface, degree-zero complex vector bundles are topologically trivial as smooth complex bundles, though they can be holomorphically nontrivial. Thus this example supplies nontrivial monodromy without the topological nontriviality of the first example.

\subsection{Local arithmetic and the dual pairing}
Choose an explicit cover and transition functions before sampling keys. For a finite arithmetic model over $\Rq$, let $G_{21}\in GL_k(\Rq)$ identify local coordinates of chart~1 with those of chart~2 on an overlap, with $G_{31}=G_{32}G_{21}$. The local key relation is preserved by
\begin{equation}\label{eq:chart}
s_{(2)}=G_{21}s_{(1)},\quad e_{(2)}=G_{21}e_{(1)},\quad b_{(2)}=G_{21}b_{(1)},\quad A_{(2)}=G_{21}A_{(1)}G_{21}^{-1}.
\end{equation}
Here $1,2,3$ label charts, not the fibre indices $t$ of the present scheme. Vectors paired with keys must transform dually. With $G=G_{21}$, set
\begin{equation}\label{eq:dual}
u_{(2)}=G^{-\top}u_{(1)},\quad r_{(2)}=G^{-\top}r_{(1)},\quad f_{(2)}=G^{-\top}f_{(1)},\quad v_{(2)}=v_{(1)}.
\end{equation}
Then $u_{(2)}=A_{(2)}^\top r_{(2)}+f_{(2)}$ and
\[
b_{(2)}^\top r_{(2)}=b_{(1)}^\top r_{(1)},\qquad s_{(2)}^\top u_{(2)}=s_{(1)}^\top u_{(1)}.
\]
Thus both encryption and the cancellation pairing are coordinate-consistent. The transpose is the algebraic dual over the commutative ring $\Rq$; it is not a Hermitian adjoint. A continuous Hermitian interpretation does not change the finite scheme's bilinear convention.

These identities alone are insufficient. General $G$ does not preserve coefficientwise CBD sampling. A globally consistent noise distribution, its integer representatives and its reduction hypothesis must be specified. Sampling new independent CBD coefficients in every chart and simultaneously requiring \eqref{eq:dual} would generally be inconsistent. Sharing a public operator across charts imposes the additional commutation condition from Section~13.2.

There is also no automatic reduction of arbitrary complex transition functions modulo $q$. One needs a chosen integral or algebraic model, a reduction map at an admissible modulus or prime ideal, and control of denominators and bad fibres. Alternatively one may discretise selected paths numerically, but then approximation, quantisation and coordinate errors enter the decoding residual and require a fresh bound.

\paragraph{Restriction to finitely many points.} Restricting any vector bundle to a finite discrete set of sampling points gives a product after choosing a basis in each fibre. Topological nontriviality of the ambient bundle therefore does not survive as nontriviality of that restricted bundle by itself. Any algorithmic content must also retain specified transition, transport, section or consistency data. If those data can be removed by publicly invertible local changes without affecting the attack problem, the construction has only been re-expressed.

\subsection{Two constraints on exact transport}
\begin{proposition}[An integer norm gap]\label{prop:normgap}
Let $V$ be a square nonsingular integer matrix acting linearly on coefficient representatives. If $\|V\|_{\mathrm{op}}<\sqrt2$, then $V$ is a signed permutation matrix.
\end{proposition}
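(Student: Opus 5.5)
The plan is to reduce the statement to Proposition~\ref{prop:isometry} by showing that the hypothesis $\|V\|_{\mathrm{op}}<\sqrt2$ forces every column of $V$ to be a signed standard basis vector. Write $N$ for the size of $V$ and $v_j=Ve_j$ for its columns. The operator norm bounds each column: $\|v_j\|_2=\|Ve_j\|_2\le\|V\|_{\mathrm{op}}<\sqrt2$. Since $v_j$ has integer entries, $\|v_j\|_2^2$ is a nonnegative integer below $2$. Nonsingularity excludes $v_j=0$, so $\|v_j\|_2^2=1$, and hence $v_j=\pm e_{\pi(j)}$ for some index $\pi(j)$.

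Next I would show that $\pi$ is injective, which is the step that really uses the strict norm bound a second time. Suppose $\pi(i)=\pi(j)$ with $i\neq j$, so $v_i=\pm v_j$. Then either $V(e_i-e_j)=0$ or $V(e_i+e_j)=0$, which contradicts nonsingularity. Alternatively one can argue purely from the norm: for the sign combination with $v_i=\epsilon v_j$, the vector $e_i+\epsilon e_j$ has length $\sqrt2$ and image of length $2$, giving ratio $\sqrt2$, again a contradiction. Nonsingularity alone already suffices, so $\pi$ is a permutation. Therefore $V$ is a signed permutation matrix, which is exactly the class described in Proposition~\ref{prop:isometry}.

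I do not expect a real obstacle here; the argument is elementary. The one point to state carefully is that $\|\cdot\|_{\mathrm{op}}$ means the Euclidean ($\ell_2\to\ell_2$) operator norm on the integer representatives, as in the preceding subsection. With that norm, the column bound $\|Ve_j\|\le\|V\|_{\mathrm{op}}$ holds. I would also remark that the constant $\sqrt2$ is sharp. For example, $\begin{pmatrix}1&1\\-1&1\end{pmatrix}$ is nonsingular and integral with operator norm exactly $\sqrt2$, and it is not a signed permutation. This sharpness explains the quantitative gap relative to the budget $\|V_i\|_{\mathrm{op}}\le1.20589$: every non-monomial integral transport has $\|V\|_{\mathrm{op}}\ge\sqrt2$, which exceeds that sufficient bound.
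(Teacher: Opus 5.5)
Your proposal is correct and follows essentially the same route as the paper: each column $Ve_j$ is a nonzero integer vector of norm below $\sqrt2$, hence a signed coordinate vector, and nonsingularity forces distinct columns to use distinct coordinates. Your norm-only argument for injectivity and your sharpness example $\begin{pmatrix}1&1\\-1&1\end{pmatrix}$ are both correct, though the proof does not need them.
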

\begin{proof}
Every column $Ve_j$ is a nonzero integer vector of norm less than $\sqrt2$, hence has squared norm one and equals a signed coordinate vector. Nonsingularity forces different columns to use different coordinates.
\end{proof}

Consequently the sufficient uniform operator-norm test $\|V_i\|_{\mathrm{op}}\le\sqrt{3490/2400}\approx1.20589$ from Section~13.4, if imposed on nonsingular integer lifts, already forces signed permutations. It does not leave a continuous family of mildly non-isometric integer matrices. This does not rule out other transports: the test is sufficient, actual key vectors can occupy favourable directions, and re-centring after modular reduction is not necessarily the same as an unreduced linear integer lift.

Exactly norm-preserving integral holonomy lies in a subgroup of the signed-permutation group, intersected with the required $\Rq$-linearity and commutation constraints. The scalar monomials $\pm X^jI_k$ give one cyclic subgroup of order $2n$. They are not a classification of all compatible isometries for every public~$A$.

\begin{proposition}[Exact lattice preservation forces flatness]\label{prop:flat}
Let a smooth real vector bundle have a smoothly locally framed full-rank lattice in every fibre. If a smooth connection's parallel transport along every sufficiently short path preserves these lattices exactly, its curvature is zero.
\end{proposition}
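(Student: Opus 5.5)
The plan is to work locally. Curvature is a local tensor, so it suffices to show $F_\nabla=0$ on a small neighbourhood $U$ of an arbitrary point. On such a $U$ the hypothesis supplies a smooth local frame $(\xi_1,\dots,\xi_N)$ that is a $\ZZ$-basis of the lattice in every fibre over $U$. I would express parallel transport in this frame. For a short path $c$ from $x$ to $y$ in $U$, transport $P_c$ sends the lattice at $x$ onto the lattice at $y$ (exact preservation, and $P_c$ is invertible with inverse transport along the reversed path, which is also short). Its matrix in the frame $\xi$ therefore lies in $GL(N,\ZZ)$.

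The key step is a discreteness-plus-continuity argument. Fix $x\in U$ and shrink $U$ to a geodesically convex (or coordinate-ball) neighbourhood, so that every point $y$ is joined to $x$ by a family of short paths depending continuously on parameters. The transport matrix depends continuously on the path in the $C^1$ topology, because it is the solution of a linear ODE with smooth coefficients given by the connection one-form $\omega$ in the frame $\xi$. For the constant path the matrix is the identity. The path family $c_\lambda$, shrinking to the constant path, is connected, and $GL(N,\ZZ)$ is discrete in $GL(N,\RR)$. Hence every short path starting and ending in $U$, homotopic through short paths to a constant, has transport matrix equal to the identity in the frame $\xi$ when it is a loop. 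More directly, applying this to the straight segment from $x$ to $y$ and contracting it gives $P_{[x,y]}\xi_j(x)=\xi_j(y)$.

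From this I would conclude that the frame $\xi$ is parallel along every short segment through every point of $U$, so $\nabla\xi_j=0$ on $U$. Differentiating $P_{[x,x+tv]}\xi_j(x)=\xi_j(x+tv)$ at $t=0$ gives $\nabla_v\xi_j=0$, equivalently $\omega=0$ in this frame. Then $F_\nabla=d\omega+\omega\wedge\omega=0$ on $U$, and since the point was arbitrary, $F_\nabla=0$ everywhere. Alternatively, one can note that small loops in $U$ have holonomy equal to the identity and invoke the standard fact that the infinitesimal holonomy of a small loop is $-F_\nabla$ applied to its enclosed bivector, to leading order (cf.\ Ambrose--Singer~\cite{ambrose}).

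I expect the main obstacle to be the continuity/discreteness step: making precise that the transport matrix varies continuously over a connected family of short paths whose members all satisfy the hypothesis, and checking that the frame's smoothness (not merely measurability) is what makes the matrix entries continuous. I would handle it by fixing the one-parameter family $c_s(\tau)=x+s\tau(y-x)$ in a chart, $s\in[0,1]$. The matrices $M(s)$ then form a continuous curve in $GL(N,\ZZ)$ with $M(0)=I$, and are therefore constant.
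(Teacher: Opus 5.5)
Your proof is correct and follows the paper's argument. In a local lattice frame, transport along short paths is a continuous family in the discrete group $GL_N(\ZZ)$ that starts at the identity, so it must stay the identity. Your main conclusion is that the lattice frame is parallel, so $\omega=0$ and $F_\nabla=d\omega+\omega\wedge\omega=0$; this is slightly more self-contained than the paper's appeal to trivial small-loop holonomy, which you also give as an alternative.
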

\begin{proof}
Choose a local lattice frame. Transport along a path varying continuously from the constant path gives a continuous family of matrices in the discrete group $GL_N(\ZZ)$, initially the identity. It remains the identity while the path stays in a sufficiently small trivialising neighbourhood. In particular small-loop holonomy is the identity, so the curvature vanishes. Global monodromy around noncontractible loops need not vanish.
\end{proof}

A genuinely curved continuous connection therefore cannot simultaneously preserve a smoothly varying full lattice exactly along every local path under these hypotheses. Preserving only a selected finite graph of paths avoids this conclusion, as can controlled approximation or a different arithmetic model; each changes the problem to be analysed. These obstructions identify design choices rather than proving that curved cryptography is impossible.

\subsection{Sections, noise, exposure and a testable development plan}
On a compact Ricci-flat Kähler manifold with full holonomy $SU(d)$, $d\ge2$, the Bochner principle excludes nonzero global holomorphic tangent vector fields. It does not exclude smooth tangent sections or prescribed values at finitely many points: disjoint local charts and bump functions can interpolate any such finite assignment. The failure of the earlier cryptosystem was its exposed scalar encoding, not a general prohibition on using tangent fibres for keys. An auxiliary bundle is a design choice with its own properties to justify.

For a proposed transport construction retain the actual realised integer norms $S_i$, not merely their worst operator-norm estimate. Conditional on keys, plaintext and trajectory, with fresh independent block coins as in Lemma~\ref{lem:fixedkey}, one sufficient bound is
\begin{equation}\label{eq:realised}
\Pr(\text{stream failure}\mid\text{fixed trajectory and keys})\le\min\Bigl(1,\ 2n\sum_{i=0}^{L-1}\exp\Bigl(-\frac{832^2}{\eta(S_i+1)}\Bigr)\Bigr).
\end{equation}
It reduces to \eqref{eq:budget} when $S_i\le S_{\max}$. Failure to meet \eqref{eq:budgetcond} means only that this particular sufficient certificate is unavailable; it does not establish a decoding failure. If the geometry introduces new error terms or correlations with fresh coins, \eqref{eq:realised} must itself be re-derived. Bounding an exceptional-family event and conditioning on it remain separate from altering setup by rejection sampling.
\begin{enumerate}
\item \emph{Fix a geometric object.} Start with the degree-one bundle over $\mathbb{CP}^1$ for topological nontriviality, or a specified unitary representation on a higher-genus curve for monodromy. State which invariant and connection the experiment uses.
\item \emph{Build the arithmetic interface.} Give charts, cocycles, finite sampling points, integer representatives and a reduction or quantisation rule. Check \eqref{eq:chart}--\eqref{eq:dual} and whether the resulting object is merely publicly conjugate to the flat baseline.
\item \emph{Control distributions and correctness.} Specify the joint key/noise law. Prove a tail estimate with all transported and quantisation errors; compare the realised-norm certificate with the uniform norm test.
\item \emph{Analyse exposure and composition anew.} If $s^{(i)}=V_is^\star$ with public invertible $V_i$, one exposed key reveals $s^\star$. Such a family cannot inherit independent-key exposure protection. Any proposed repair needs its own distribution and reduction.
\item \emph{Demonstrate a benefit.} Compare security or a concrete functional capability at matched key size, ciphertext size, failure budget, memory and legitimate computation. Curvature, nontrivial topology and nonabelian holonomy alone establish no additional attack cost.
\end{enumerate}
The target is an explicit arithmetic construction in which retained geometric data changes a useful, measurable property. Establishing that target would be a further result; the present flat scheme supplies a baseline and falsifiable tests for the programme.

\clearpage
\phantomsection

\end{document}